\documentclass[11pt]{article}
\usepackage[english]{babel}
\usepackage{amsmath}
\usepackage{amsthm}
\usepackage{amssymb}
\usepackage{amsfonts}
\usepackage{mathtools}
\usepackage{mathrsfs}
\usepackage{thmtools}
\usepackage{algorithm}
\usepackage{algorithmic}
\usepackage{pdfsync}
\usepackage{graphicx}
\usepackage{fullpage}
\usepackage{bbm}
\usepackage{tabto}
\usepackage[numbers]{natbib}
\usepackage[pagebackref,
    colorlinks=true,
    urlcolor=blue,
    linkcolor=blue,
    citecolor=blue,
]{hyperref}
\usepackage[nameinlink]{cleveref}

\usepackage[inline,marginclue,index]{fixme}  

\fxsetup{theme=color,mode=multiuser}
\FXRegisterAuthor{josh}{ajosh}{\colorbox{orange!90!yellow}{\color{white}Josh}}  
\FXRegisterAuthor{linda}{alinda}{\colorbox{blue}{\color{white}Linda}}  
\FXRegisterAuthor{matt}{amatt}{\colorbox{brown}{\color{white}Matt}}

\newtheorem{thm}{Theorem}[section]

\newtheorem{lemma}[thm]{Lemma}

\newtheorem{definition}[thm]{Definition}
\newtheorem{corollary}[thm]{Corollary}
\newtheorem{claim}[thm]{Claim}

\newtheorem{example}{Example}

\newcommand{\E}{\mathbb{E}}


\DeclarePairedDelimiter{\ceil}{\lceil}{\rceil}
\DeclarePairedDelimiter{\set}{\{}{\}}

\DeclarePairedDelimiter{\norm}{\lVert}{\rVert}
\DeclarePairedDelimiter{\sq}{[}{]}
\DeclarePairedDelimiter{\paren}{\lparen}{\rparen}
\DeclareMathOperator*{\argmin}{argmin}
\DeclarePairedDelimiterX{\cond}[2]{[}{]}{#1\,\delimsize\vert\,\mathopen{} #2}

\newcommand{\wsigma}{\widetilde{\sigma}}
\newcommand{\dist}{\mathcal{F}}

\newcommand{\fbias}{feature-amplified bias}
\newcommand{\pratio}{prophet utility ratio}
\newcommand{\oratio}{online utility ratio}

\usepackage{booktabs}
\title{Optimal Stopping with Multi-Dimensional Comparative Loss Aversion}
\author{
Linda Cai\footnote{Princeton University, \texttt{tcai@princeton.edu}.} \and Joshua Gardner\footnote{Princeton University, \texttt{jgardner2051@gmail.com}.} \and S. Matthew Weinberg\footnote{Princeton University, \texttt{smweinberg@princeton.edu}. Supported by NSF CAREER CCF-1942497 and a Google Research Award.}}
\date{}
\begin{document}

\maketitle
\begin{abstract}
Motivated by behavioral biases in human decision makers, recent work by \citet{kleinberg2021optimal} explores the effects of loss aversion and reference dependence on the prophet inequality problem, where an online decision maker sees candidates one by one in sequence and must decide immediately whether to select the current candidate or forego it and lose it forever. In their model, the online decision-maker forms a reference point equal to the best candidate previously rejected, and the decision-maker suffers from loss aversion based on the quality of their reference point, and a parameter $\lambda$ that quantifies their loss aversion. We consider the same prophet inequality setup, but with candidates that have \emph{multiple features}. The decision maker still forms a reference point, and still suffers loss aversion in comparison to their reference point as a function of $\lambda$, but now their reference point is a (hypothetical) combination of the best candidate seen so far \emph{in each feature}. 
 
Despite having the same basic prophet inequality setup and model of loss aversion, conclusions in our multi-dimensional model differs considerably from the one-dimensional model of~\citet{kleinberg2021optimal}. For example,~\citet{kleinberg2021optimal} gives a tight closed-form on the competitive ratio that an online decision-maker can achieve as a function of $\lambda$, for any $\lambda \geq 0$. In our multi-dimensional model, there is a sharp phase transition: if $k$ denotes the number of dimensions, then when $\lambda \cdot (k-1) \geq 1$, \emph{no non-trivial competitive ratio is possible}. On the other hand, when $\lambda \cdot (k-1) < 1$, we give a tight bound on the achievable competitive ratio (similar to~\cite{kleinberg2021optimal}). As another example,~\cite{kleinberg2021optimal} uncovers an exponential improvement in their competitive ratio for the random-order vs.~worst-case prophet inequality problem. In our model with $k\geq 2$ dimensions, the gap is at most a constant-factor. We uncover several additional key differences in the multi- and single-dimensional models.
\end{abstract}

\newpage

\section{Introduction}

Reference dependence describes the tendency for human decision-makers to compare options against a previously seen or anticipated reference point, such that an option inferior to the reference point is perceived as being of a lower quality than it would be had the person simply not been aware of the reference point \cite{tversky1991loss}. Similarly, loss aversion describes the tendency of human decision-makers to be more sensitive to losses than to gains of equivalent magnitude.  Both phenomena have robust empirical support~\cite{kahneman2013prospect}, and have been observed in the context of online decision-making \cite{schunk2009relationship}. 

Recent work of~\citet{kleinberg2021optimal}, has considered these behavioral biases in the context of optimal stopping, constructing biased agents who suffer from loss aversion in comparing their chosen candidate against previously seen and foregone candidates. Prophet inequalities are a canonical optimal stopping problem, where an agent sees $n$ candidates, each with a scalar value drawn independently from known distributions. The candidates are presented online, such that whenever the agent sees a candidate they must either accept it, halting their search, or forego it and thus lose the candidate forever. In the event they select a candidate, the utility they receive is exactly the value of the candidate. Motivated by the role of reference points and loss aversion,~\cite{kleinberg2021optimal} pose a modified prophet inequality where the utility received by the agent is discounted based on their reference point. Specifically, the biased agent also remembers the best candidate they have previously seen, and perceives new candidates as having decreased utility if the best candidate previously seen had a higher value. 

The elegant model of \citet{kleinberg2021optimal} captures scenarios in which candidates are well-represented by a single scalar (such as monetary value). Indeed, their model is partially motivated by the empirical findings of \citet{schunk2009relationship} concerning the unwillingness of investors to sell stocks below the price at which they were purchased. Similarly, their model captures the phenomenon of biased decision-makers stopping much sooner than optimal, once again mirroring the empirical results of \cite{schunk2009relationship}. Intuitively, due to the anticipation of experiencing regret over foregone candidates, biased decision-makers are likely to settle earlier than they should. In this respect, the theoretical model of \citet{kleinberg2021optimal} reflects human behavior relatively well, at least in the case of decisions involving scalar-valued candidates. 

However, their model does not capture decisions in which candidates have multiple features of interest. Notably, many key decisions in life including career, romantic partner, or house typically have many dimensions. Moreover, even simple consumer decisions such as selecting which movie to watch or magazine subscription to purchase often have multiple dimensions. Experimental research by    \citet{brenner1999comparison}  suggests that human decision makers suffer from ``comparative loss aversion'' in which they compare options along multiple dimensions, and perceive disadvantages as being more significant than advantages due to loss aversion, thus decreasing received utility from the option chosen. Consider the case of selecting a house: suppose that house A is the most aesthetically pleasing, house B is in the best location, and house C has the most space. Regardless of which house a person selects, due to comparative loss aversion, the mere awareness of the other options decreases the satisfaction they experience. Empirical research \cite{haynes2009testing, chernev2015choice,park2013confused} on the phenomenon of ``choice overload" (where people become less satisfied with their choice when they are presented with a larger set of options) further supports the existence of comparative loss aversion.

One interpretation of this phenomenon that has been popularized in psychological literature like \textit{The Paradox of Choice} \cite{schwartz2004paradox} is that people combine the best features of options they have previously seen into an imagined alternative of superior quality. Later empirical work by \citet{sagi2007cost} found that ``regret is
related to the comparison between the alternative chosen and the union of the positive attributes of the
alternatives rejected.'' Intuitively, compared to an imagined alternative combining the best attributes seen, the option chosen is much less satisfying. 

In our work, we model this phenomenon by associating each candidate $i$ with a \emph{$k$-dimensional vector} $\vec{\sigma}^{(i)} = \paren*{\sigma_1^{(i)}, \cdots, \sigma_k^{(i)}}$ rather than a single scalar. The agent's value for a single candidate is just the sum over the agent's value for the candidate on each dimension, namely, $\norm{\vec{\sigma}^{(i)}}_1$. So for an unbiased decision-maker, this is still just an instance of the classic prophet inequality. As in~\cite{kleinberg2021optimal}, our agent is biased, and maintains a reference point $\vec{s}^{(i)}$ at time $i$ based on options seen so far. The agent suffers from loss aversion parameterized by $\lambda$, and experiences (and anticipates) a loss equal to $\lambda \cdot \paren*{\norm{\vec{\sigma}^{(i)}}_1-\norm{\vec{s}^{(i)}}_1}$ upon selecting candidate $i$. The key distinction between our model and that of~\cite{kleinberg2021optimal} is in how the reference point is built. In~\citet{kleinberg2021optimal}, the reference point is simply the vector of a candidate $j \leq i$ maximizing the overall value $\norm{\sigma^{(j)}}_1$. In our model, the reference point is instead a hypothetical super-candidate that combines the best candidate seen so far for each feature. Formally, for each dimension $\ell$, the value of the reference point is $s^{(i)}_\ell = \max_{j \leq i}\{\sigma^j_\ell\}$. As previously discussed, this enables our model to better capture the multi-dimensional problem of purchasing a house, versus the single-dimensional problem of buying/selling stocks that are well-captured by the model of~\cite{kleinberg2021optimal}. To reiterate: our model is identical to that of~\cite{kleinberg2021optimal}, \emph{except} for the choice of reference point. Surprisingly, this single change from a single-dimensional reference point to a multi-dimensional reference point drastically changes conclusions in the model.

\subsection{Summary of Results}
Our main findings are that an agent who suffers from comparative loss aversion (i.e.~our model, the multi-dimensional case) suffers in a fundamentally different manner than an agent who only suffers from reference-dependent loss aversion (i.e.~\cite{kleinberg2021optimal}, the single-dimensional case). We now overview several results supporting this. In describing our results, ``gambler'' refers to the online decision-maker in the prophet inequality setting, and ``prophet'' refers to the offline decision-maker (who knows all options and simply picks the best).\\

\noindent\textbf{Unbounded Loss.} Perhaps our most surprising finding is a phase transition in $\lambda$ as a function of $k$ when comparing the utility of a biased gambler to that of an unbiased gambler or unbiased prophet. Specifically,~\citet{kleinberg2021optimal} show that a biased gambler can always guarantee at least a $\frac{1}{2 +\lambda}$-approximation to the expected utility achieved by an unbiased prophet, and a $\frac{1}{1 +\lambda}$-approximation to that of an unbiased gambler (and both of these bounds are tight). When $\lambda \cdot (k-1) < 1$, we find a natural generalization: the biased gambler can guarantee a $\frac{1-\lambda(k-1)}{2 +\lambda}$-approximation to the unbiased prophet, and a $\frac{1-\lambda(k-1)}{1 +\lambda}$-approximation to the unbiased gambler, and these are tight for all $k$ (including $k=1$, where it matches~\cite{kleinberg2021optimal}). However, there is a phase transition at $\lambda \cdot (k-1) = 1$: \emph{for any $\lambda \cdot (k-1) \geq 1$, and any $C$, there exist instances where the biased gambler achieves expected utility $1$, but the unbiased gambler (and unbiased prophet) achieve expected utility at least $C$}. That is, there is no non-trivial guarantee on the competitive ratio as a function only of $\lambda$ and $k$. Interestingly, this phase transition occurs for all $k > 1$, but not for $k=1$. This marks a fundamental difference between online decision-making with comparative loss aversion versus only reference-dependent loss aversion. 

Given the importance of the phase transition, we will call $\lambda \cdot (k-1)$ the \textit{\fbias}. Intuitively, \fbias{} measures how fast the gap between the utilities of unbiased and biased agents grows with the number of candidates.  As we will see in an illustrative example in section~\ref{example}, when the \fbias{} is greater than one, the ratio between utility of an unbiased prophet and that of a biased gambler grows exponentially with the number of candidates. The formal statements of our main results when \fbias{} is less than one can be found in theorems~\ref{thm:LBOfflineBound},~\ref{thm:LBOnlineBound} and \ref{thm:UBBoundedUtility}. For the setting where \fbias{} is at least one, the main results can be found in theorem~\ref{thm:advInftyGap}. \\

\noindent\textbf{No Improvement for Random Order, Optimal Order, and I.I.D. Distributions.} One key finding of~\citet{kleinberg2021optimal} is that the competitive ratio of a biased gambler compared to an unbiased gambler/prophet drastically improves (the loss is logarithmic in $\lambda$ instead of linear in $\lambda$) when the candidates are revealed in random order. This of course holds in our model when $k=1$ (because it is identical to~\cite{kleinberg2021optimal}). When $k > 1$, however, this phenomenon no longer occurs. Specifically, for all $k \geq 2$, \emph{even if the instance is i.i.d.}: (a) if \fbias{} $\lambda \cdot (k-1) \geq 1$, then the expected reward of the biased gambler and unbiased prophet/gambler can be unbounded, and (b) when \fbias{} $\lambda \cdot (k-1) < 1$, the worst-case ratio between the biased gambler and unbiased prophet/gambler is within a constant-factor of the worst-case ratio without the i.i.d.~assumption.\footnote{Of course, we do not nail down the tight competitive ratio for the gambler in the i.i.d. setting, because even the tight ratio between the gambler and prophet without loss aversion is quite involved. The tight competitive ratio between the unbiased gambler and unbiased prophet for the random order (but not i.i.d.) setting is still unknown.} The formal statements of our results for the i.i.d setting can be found in theorems~\ref{thm:iidInftyGap} and~\ref{thm:UBBoundedUtilityIID}.

\subsection{Related Work.}
Since its introduction, optimal stopping has been pertinent to the study of economic behaviors with immediate applications in a wide range of economic domains such as search theory, financial trading, auction design, etc. Various models have been proposed based on the optimal stopping paradigm including prophet inequalities (value distribution known), secretary problems (value distribution unknown) and Pandora's box problem (where a cost is associated with inspection of options).

The model that is most relevant to our present work is that of the prophet inequality, where the agents are given the distribution of their value for the candidates a priori. In the seminal work of \citet{KrengelS78}, it was shown that for prophet inequalities where the agent is limited to picking a single candidate, the ratio between the expected largest value (namely, the prophet's pick) and the expected value of the candidate an agent playing optimally selects online is at most $2$. Furthermore, this ratio is tight. \citet{Samuel-Cahn84} further show that by using a threshold based algorithm, the online agent can also guarantee that the expected value of the candidate they select is a $1/2$-approximation of the expected optimal value over all candidates. Prophet inequalities have since been further studied under a variety of feasibility constraints on the set of selected candidates (\citet{KleinbergW12, Rubinstein16, RubinsteinS17, DuettingFKL17}), and a variety of arrival order constraints (\citet{EsfandiariHLM15, EhsaniHKS18, Yan11, BeyhaghiGLPS18, HillK82, CorreaFHOV17}).

Although optimal stopping may have direct applications in human decision making, very little research has devoted to behavioral models of optimal stopping. Recently, \citet{kleinberg2021optimal} constructed a model which factors in reference dependence and loss aversion, however to our knowledge there is no other prior work exploring behavioral phenomena in the context of optimal stopping. 

\subsection{Organization.} 
The rest of the paper will be organized as follows. In section~\ref{model}, we discuss our model that incorporates comparative loss aversion and introduce notations that will be used throughout the paper. In section~\ref{example}, we present an illustrative example that highlights the difference between our model and the single dimensional model presented in \cite{kleinberg2021optimal}. In section~\ref{overview}, we analyze our model under various ordering constraints, with a focus on the gap between the utility of the biased \textit{gambler} and that of rational agents.
 In \Cref{futureWork} we discuss the limitations of our model and possible directions for future work. In \Cref{adversarial} and \Cref{relaxed}, we provide proofs that are omitted from \Cref{overview}. Lastly, in \Cref{app:monotonicity} we present monotonicity results which were all analogous to those of the one dimensional case presented in \citet{kleinberg2021optimal}. 
\section{Model and Preliminary} \label{model}
\subsection{Modeling Comparative Loss Aversion} 

\textbf{Multidimensional Optimal Stopping}. We now construct a multi-dimensional model of the optimal stopping problem such that candidates have multiple dimensions of value. First, we must formally define the multi-dimensional optimal stopping problem, using slight modifications of the classic definition of the problem in one dimension. Specifically, let there be $n$ candidates each of which has $k$ dimensions of value that are of interest to a decision making agent, i.e. there are $n$ vector valued candidates, such that their value vectors are $k$-dimensional. Note that the use of vector values differs from traditional constructions of the problem in which scalar values are used, i.e. in the traditional problem $k$ is 1. 

We will use $\mathcal{F}_{t}$ with support on $\mathbb{R}^k_+ $ to denote the prior distribution of the gambler's value vector for the $t^{th}$ candidate. Let $\sigma$ be an ordered sequence of $n$ candidates, where the $t^{th}$ candidate has a k-dimensional vector value $\sigma^{(t)} \in \mathbb{R}^k$ drawn from $\mathcal{F}_{t}$. Formally, we will use $\mathcal{F} = \times_{t \in [n]} \mathcal{F}_{t}$ to denote the joint distribution of the value for all candidates, such that $\sigma$ is drawn from $\mathcal{F}$.  Moreover, let it be the case that the agent sees the realized value of $\sigma^{(t)}$ only when they reach step t in the sequence. Upon reaching the $t^{th}$ candidate, $\sigma^{(t)}$, the agent must either accept them - thus halting their search - or decline them, continuing their search and losing the candidate forever. The agent attempts to select $\sigma^{(t)}$ to maximize their utility. We now need to reason about what an agent's utility function should look like. 
\newline 
\newline 
\textbf{The utility of agents with comparative loss aversion.} In keeping with the notation of \citet{kleinberg2021optimal}, let the parameter $\lambda \geq 0$ serve as a multiplier representing  our agent's loss aversion. In particular, let $\lambda$ be used to amplify the difference between the chosen candidate and the ``reference candidate'' along any dimensions for which the reference candidate is of higher value. Observe that in the case of our model, reasoning about the reference candidate is somewhat complex since an agent with comparative loss aversion compares each option against all previously seen along different dimensions independently.  Regardless, it is without loss of generality to collapse the highest values previously seen along each dimension into single vector and consider this to be the reference candidate. 

Accordingly, our agent remembers the best value previously seen along each dimension and combines these to form their reference candidate, which we refer to as the ``super candidate'', since is an upperbound on the quality of all previously seen candidates and combines all of their best features. Observe that this super candidate changes over time as we see new candidates that are better than all previously seen along some dimension. Due to the agent's reference dependence, when the super candidate is better than the agent's chosen candidate on any dimension, this detracts from the utility they receive. Let us denote the super candidate at step t in the sequence as $\vec{s}^{(t)}$ and the $j^{th}$ entry of the super candidate at step t as $\vec{s}^{(t)}_j$. We define the super candidate in step t element-wise as follows: 
$$ \vec{s}^{(t)}_j = \max\limits_{w : w \in [t]}( \sigma^{(w)}_j)$$
We define the utility received by a biased gambler, who suffers from loss aversion $\lambda$ and who selects the $t^{th}$ candidate in sequence $\sigma$, as follows: 
$$U_{g_b}(\sigma, t) = \sum_j \bigg( \sigma^{(t)}_j - \lambda(\vec{s}^{(t)}_j - \sigma^{(t)}_j)\bigg) = \|\sigma^{(t)}\|_{1} - \lambda \bigg(\|\vec{s}^{(t)}\|_1 - \|\sigma^{(t)}\|_1\bigg),$$
and we define the difference between the value selected by the biased gambler and their utility as the biased gambler's \textit{regret}.

Recall that, by construction, we have already restricted all values to be non-negative, allowing us to make use of L1 norms. Moreover, let's compare this to the utility received by biased prophet who selects option t:
$$U_{p_b}(\sigma, t) = \sum_j \bigg( \sigma^{(t)}_j - \lambda(\vec{s}^{(n)}_j - \sigma^{(t)}_j)\bigg) = \|\sigma^{(t)}\|_{1} - \lambda \bigg(\|\vec{s}^{(n)}\|_1 - \|\sigma^{(t)}\|_1\bigg)$$
Note that the only difference between the utility of the biased gambler and biased prophet is that the super option of the biased prophet is constructed from all n candidates, whereas that of the biased gambler is constructed only from the t candidates it has seen so far. 

Next let us reason about the utilities received by a rational gambler and rational prophet, respectively. In particular, observe that since they do not suffer from any reference dependence or loss aversion, their payoff is just equal to the value of the option they select:
$$U_{p_r}(\sigma, t) = U_{g_r}(\sigma, t) = \sum_j  \sigma^{(t)}_j = \| \sigma^{(t)} \|_1$$ 
Furthermore, assuming that our agents seek to maximize their utilities, the payoff of a rational prophet does not need parameter t since it selects t deterministically to maximize realized utility. Namely, it suffices to denote the utility of a rational prophet as follows: 
$$U_{p_r}(\sigma) = \max\limits_{t : t \in [n]}( U_{p_r}(\sigma, t))$$ 
Similarly, since the biased prophet sees all options and selects t deterministically to maximize realized utility, we can remove the parameter t here as well: 
$$U_{p_b}(\sigma) = \max\limits_{t : t \in [n]}( U_{p_b}(\sigma, t))$$

Now that the model has been developed, we can proceed to begin analyzing it. Next we will develop notations to describe the expected utility of biased and rational agents, and their relative differences. These notations will be used throughout the paper. 

\subsection{Preliminaries} 
In keeping with standard conventions, we use $\E_{\dist}$ to denote the expectation with respect to the distribution $\dist$. To abbreviate notation, we will use $\E_{\dist}\sq*{U^*_{g_b}}$ and $\E_{\dist}\sq*{U^*_{g_r}}$ to denote the expected utility of the biased gambler and a rational gambler, respectively, when they use a utility optimal online algorithm. Similarly, we will use $\E_{\dist}\sq*{U_{p_b}}$ and $\E_{\dist}\sq*{U_{p_r}}$ to denote the expected utility of the biased prophet and rational prophet, respectively. In addition, we will use $V^*$ to denote the maximum value of all candidates, namely, $V^* = \max \limits_{t \in [n]} \norm{\sigma^{(t)}}_1$. We will use $S_j^*$ to the denote the maximum value on the $j^{th}$ dimension, namely, $S_j^* = \max \limits_{t \in [n]} \sigma_j^{(t)}$.   

Next, we will create several definitions to describe the relationship between utility of the biased gambler and the utility of rational agents, which will be frequently referenced in subsequent sections. Informally, the prophet/online utility ratio describes how well the biased gambler can do compared to the rational prophet/gambler \textit{given a specific prior distribution $\dist$}. On the other hand, the prophet/online competitive ratio describes how well the biased gambler can do compared to the rational prophet/gambler \textit{given the most adversarial prior distribution $\dist$}. 

\begin{definition}[\textbf{Prophet Utility Ratio}]
    We will use the term \textit{prophet utility ratio}  to denote the ratio between the expected utility of a rational prophet and the expected optimal utility from a biased gambler under the prior distribution $\dist$. More formally, we define the prophet utility ratio as $\frac{\E_{\dist}\sq*{U_{p_r}}}{\E_{\dist}\sq*{U^*_{g_b}}}$.
\end{definition}
Note that the Prophet Utility Ratio captures the loss that the biased gambler experiences \emph{both due to bias and due to making decisions online}.

\begin{definition}[\textbf{Online Utility Ratio}]
    We will use the term \textit{online utility ratio} to denote the ratio between the expected optimal utility of a rational gambler and the expected optimal utility from a biased gambler under the prior distribution is $\dist$. Formally, we define the online utility ratio as  $\frac{\E_{\dist}\sq*{U_{g^*_r}}}{\E_{\dist}\sq*{U^*_{g_b}}}$.
\end{definition}
Note that the Online Utility Ratio captures the loss that the biased gambler experiences \emph{just due to bias}.

\begin{definition}[\textbf{Prophet/Online Competitive Ratio}]
    Given a setting (such as adversarial arrival order or i.i.d prior distribution), we define the terms \textit{prophet competitive ratio} and \textit{online competitive ratio} as the maximum prophet and online utility ratio possible in the setting. Specifically, in the adversarial arrival order setting, the prophet and online competitive ratio are formally defined as $\max \limits_{\dist} \frac{\E_{\dist}\sq*{U_{p_r}}}{\E_{\dist}\sq*{U^*_{g_b}}}$ and $\max \limits_{\dist} \frac{\E_{\dist}\sq*{U_{g^*_r}}}{\E_{\dist}\sq*{U^*_{g_b}}}$ , respectively. In the i.i.d prior distribution setting, we will denote $\mathcal{U}$ as the set of all i.i.d distributions. The prophet and online competitive ratio in the i.i.d prior distribution setting are then formally defined as 
$\max \limits_{\dist \in \mathcal{U}} \frac{\E_{\dist}\sq*{U_{p_r}}}{\E_{\dist}\sq*{U^*_{g_b}}}$ and $\max \limits_{\dist \in \mathcal{U}} \frac{\E_{\dist}\sq*{U_{g^*_r}}}{\E_{\dist}\sq*{U^*_{g_b}}}$.
\end{definition}
As stated in the introduction, we define $\lambda \cdot (k-1)$ to be the \textit{\fbias}. We will soon see that whether the \fbias{} $\lambda \cdot (k-1)$ is greater or less than one is critical in determining whether the biased gambler has a non trivial prophet/online competitive ratio.

Finally, the following claim that directly follows from classical prophet inequality results will be referenced and used throughout proofs in the rest of the paper.  
\begin{claim} \label{claim:classical}
    For any prior distribution $\dist$ and any assumption about candidate arrival order, $\E_{\dist}\sq*{U^*_{g_r}} \geq \frac{1}{2} \cdot \E_{\dist}\sq*{U_{p_r}}$. 
\end{claim}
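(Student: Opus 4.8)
The plan is to reduce Claim~\ref{claim:classical} to the standard single-choice prophet inequality of \citet{KrengelS78, Samuel-Cahn84}, which asserts that for a single gambler choosing one among $n$ independent nonnegative scalar rewards, the optimal online value is at least half the expected maximum. The key observation is that the \emph{rational} gambler and \emph{rational} prophet both have utility exactly equal to $\norm{\sigma^{(t)}}_1$ on the candidate $t$ they select, so from their perspective the multi-dimensional instance collapses to a one-dimensional one: simply define a scalar instance where candidate $t$ has value $X_t \coloneqq \norm{\sigma^{(t)}}_1 \geq 0$. These $X_t$ are independent (since $\sigma^{(t)} \sim \dist_t$ independently across $t$), and a rational gambler's online algorithm on the vector instance corresponds exactly to an online algorithm on the scalar instance $(X_1, \dots, X_n)$, with the same expected reward; likewise $\E_{\dist}\sq*{U_{p_r}} = \E_{\dist}\sq*{\max_t X_t} = \E_{\dist}\sq*{V^*}$.

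The steps, in order, would be: (i) set $X_t = \norm{\sigma^{(t)}}_1$ and verify independence and nonnegativity, so $(X_1,\dots,X_n)$ is a legitimate instance of the classical scalar prophet inequality; (ii) observe $U_{p_r}(\sigma) = \max_t X_t$ and hence $\E_{\dist}\sq*{U_{p_r}} = \E\sq*{\max_t X_t}$; (iii) observe that every online stopping rule for the rational gambler on the vector instance (a map from the observed prefix to accept/reject) is precisely an online stopping rule on the scalar instance yielding the same payoff in expectation, so $\E_{\dist}\sq*{U^*_{g_r}}$ equals the optimal online value for $(X_1,\dots,X_n)$; (iv) invoke the classical prophet inequality (e.g.\ the $\tfrac12$-approximation achievable by a single threshold, per \citet{Samuel-Cahn84}) to conclude the optimal online value is at least $\tfrac12 \E\sq*{\max_t X_t}$, giving $\E_{\dist}\sq*{U^*_{g_r}} \geq \tfrac12 \E_{\dist}\sq*{U_{p_r}}$. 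For the random-order or i.i.d.\ arrival assumptions, the same reduction works verbatim since the classical $\tfrac12$ bound holds for adversarial order and hence a fortiori for any weaker arrival assumption (and the scalar reduction respects whatever arrival order is imposed on the vectors).

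There is essentially no obstacle here — this is a ``one-line'' reduction — but the one point requiring a sentence of care is step (iii): one must note that the rational gambler's decision at step $t$ can in principle depend on the full observed vectors $\sigma^{(1)},\dots,\sigma^{(t)}$, not merely their $\ell_1$-norms, so the class of online algorithms available on the vector instance is (weakly) richer than on the scalar instance. However, since the payoff depends only on $\norm{\sigma^{(t)}}_1$, any such richer policy is dominated by — and in fact has the same value as — the policy that ignores everything except the norms; conditioning on the norms and taking expectations shows the optimal vector-policy value equals the optimal scalar-policy value, so the classical bound transfers with equality on the online side and the claim follows.
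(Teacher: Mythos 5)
Your proposal is correct and is essentially the paper's own argument (the paper's proof is a one-liner noting that rational agents' utility equals the $\ell_1$-value of the chosen candidate, so the claim reduces to the classical scalar prophet inequality of \citet{KrengelS78, Samuel-Cahn84}). You merely spell out the reduction — setting $X_t = \norm{\sigma^{(t)}}_1$ and checking that vector-valued online policies confer no extra power since payoffs depend only on norms — in more detail than the paper bothers to.
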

\begin{proof}
    In our model, rational agents' utilities are simply their value for the option they select. Therefore our claim follows directly from the classical prophet inequality result \cite{KrengelS78, KleinbergW12} that the expected value from a value-optimal online algorithm is a $\frac{1}{2}$ approximation of the expected value from an offline prophet. 
\end{proof}

\section{Unbounded Loss: A Motivating Example} \label{example}
Prior to proving more general claims, we will first show an adversarial example that illustrates a fundamental difference between our multi-dimensional optimal stopping problem with biased agents and the single dimensional model. Specifically, in the single dimensional model, the biased gambler attain worse utility than rational agents only when there is \textit{uncertainty}. (Imagine if the biased agent already knows all values of the candidates, they can simply pick the one with the largest value and suffers no regret.) However, in our multi-dimensional model, we will show an example where the biased gambler attains arbitrarily worse utility than rational agents, even when the prior distribution is \textit{deterministic}. 

\begin{example} \label{example:motive}
Consider $n$ candidates arriving in adversarial order, where the biased gambler considers $k=2$ features of candidates and has loss aversion parameter $\lambda = 2$. Furthermore, let $\sigma$ be the following sequence of candidate values:
\begin{align*}
    (1, 0), (0, 1), (2, 0), (0, 2), (2^2, 0), (0, 2^2), \cdots, (2^{n/2 -1}, 0), (0, 2^{n/2 -1}),
\end{align*}
and let the prior distribution $\dist$ be deterministic -- where the realized sequence from $\dist$ is exactly $\sigma$ with probability one.  
\end{example}
In our example, the overall value of the candidates increases exponentially over time. Consequently, the rational prophet will receive high utility by simply selecting one of the last two candidates. However, the fact that the candidates are excellent on alternating dimensions causes the biased gambler to suffer from increasing regret as they see more candidates. In fact, should the biased gambler choose a later candidate rather than an earlier candidate, the increment in their regret roughly cancels out the increment in their value for the candidate.

With this intuition in mind, let's formally reason about the behaviors of our agents. Recall that the gambler is aware of the prior distribution $\dist$ and thus is aware of $\sigma$ a priori for this example.  The rational gambler is essentially the same as the rational prophet in this case since they know $\sigma$ a priori and experience no regret. Thus, the expected utility of both the rational prophet and the rational gambler is simply $2^{n/2 -1}$. 

Under our stylized model, the biased gambler only experiences regret for realized options, completely disregarding any future options they may be aware of a priori. Consequently, the biased gambler has utility $1$ if they just take the first candidate and are not presented other candidates, and they have utility $\leq 0$ if they take any other candidate.  (e.g. If the gambler decides to select let's say the candidate $(4, 0)$, then the super candidate has value $(4, 2)$, which means that the biased gambler's utility is $\norm{(4, 0)}_1 - 2 \cdot \paren*{\norm{(4, 2)}_1 - \norm{(4, 0)}_1} = 4 - 2 \cdot (6 - 4) = 0$.) Therefore we know $\E_{\dist}\sq*{U^*_{g_b}} \leq 1$. As a result, the biased gambler's prophet utility ratio, $\frac{\E_{\dist}\sq*{U_{p_r}}}{\E_{\dist} \sq*{U^*_{g_b}}} = \frac{U_{p_r}(\sigma)}{U_{g_b}(\sigma)}$, and online utility ratio, $\frac{\E_{\dist}\sq*{U^*_{g_r}}}{\E_{\dist} \sq*{U^*_{g_b}}} = \frac{U_{g_r}(\sigma)}{U_{g_b}(\sigma)}$, are both equal to $2^{n/2-1}$, and thus tend toward $\infty$ as $n$ increases. Recall that prophet and online utility ratios lower bound the prophet and online competitive ratio, respectively, implying that both competitive ratios must also be $\infty$.\\

As we will soon show in \Cref{claim:expUtilityGap} and \Cref{claim:linearUtilityGap}, our present adversarial example with candidates that are excellent on alternating dimensions can be generalized to any $k$ and $\lambda$ such that the \fbias{} $\lambda \cdot (k-1)$ is at least one in the adversarial setting. Furthermore, as we will show in \Cref{claim:iidExpUtilityGap} and \Cref{claim:iidlogUtilityGap}, we can extend our example to i.i.d setting (and thus to random and optimal candidate ordering setting) by carefully constructing the prior distribution so that with high probability, the sequence of realized candidate values presented to the agent is adversarial and (after deleting subsequent duplicates) similar to the sequence $\sigma$ in this example. Our capability to generalize the adversarial example under \textit{any} arrival order constraint drives our model's departure from the single dimensional model. Further, our adversarial example for the i.i.d setting is highly randomized, which shows that the adversarial examples are not an artifact of deterministic distributions.

Lastly, this example illustrates an additional property unique to our model: the biased gambler may outperform the biased prophet. Notice that in our example, the biased prophet is forced to see both $(2^{n/2-1}, 0)$ and $(0, 2^{n/2-1})$ as realized values, resulting in a reference point $(2^{n/2-1}, 2^{n/2-1})$ -- much larger than any value they could have chosen. Consequently, the biased prophet has negative utility and is worse off compared to the biased gambler. In contrast, in \cite{kleinberg2021optimal}'s model (where $k=1$), the biased prophet is guaranteed to attain at least as much utility as that of online agents. 

\section{Effects of Comparative Loss Aversion} \label{overview}
In this section we formally analyze the effect of comparative loss aversion on the biased gambler's utility. Specifically, we compare the biased gambler's utility with those of both the rational prophet and rational gambler.
We will focus on the two extremes of arrival order constraints: adversarial order and optimal order (even more specifically, we consider sequences drawn from an i.i.d. prior distribution, where any arrival order is optimal). Under both ordering constraints, we show a phase transition when the \fbias{} $\lambda \cdot (k-1)$ is equal to one. When the \fbias{} $\lambda (k-1) \geq 1$, the biased gambler's prophet and online competitive ratio are both equal to infinity. On the other hand, when the \fbias{} $\lambda (k-1) < 1$, the biased gambler can achieve a competitive ratio that is only dependent on $\lambda$ and $k$, and of similar magnitude under both ordering constraints. 

It is surprising that 
our model exhibits similar behavior in terms of competitive ratios regardless of constraints on arrival order. In contrast, \cite{kleinberg2021optimal} shows that for the single dimensional model, assuming the arrival order to be  random rather than adversarial results in an exponential improvement in utility guarantees in terms of $\lambda$. 

Missing proofs and the complete version of the proof sketches in \Cref{result:adversarial} and \Cref{result:IID} can be found in \Cref{adversarial} and \Cref{relaxed} respectively.

\subsection{Utility Loss under Adversarial Arrival Order} \label{result:adversarial}
Firstly, we generalize \Cref{example:motive} to show that when the \fbias{} $\lambda \cdot (k-1) \geq 1$, as $n$ grows arbitrarily large, the prophet and online utility ratios (namely, the ratio between expected utility from the biased gambler and the expected utility from the rational prophet or gambler) can be arbitrarily large. In particular, we construct a prior distribution $\dist_{0}(n)$, parameterized by the number of candidates $n$ such that, when the \fbias{} $\lambda \cdot (k-1) > 1$, both the prophet and online utility ratio are exponential in terms of $n$. When the \fbias{} $\lambda(k-1) = 1$, we obtain a more conservative result that prophet and online utility ratio can be linear in terms of $n$. 
\begin{restatable}{proposition}{claimExpUtilityGap} \label{claim:expUtilityGap} 
    When the \fbias{} $\lambda \cdot (k-1) > 1$, for any $n \in N^+$, there exists a sequence $\sigma$ with $n$ candidates such that when the distribution $\dist_0(n)$ takes the values of $\sigma$ deterministically, i.e. with probability 1, and when our prior distribution $\dist$ equals $\dist_0(n)$, the prophet utility ratio  $\frac{\E_{\dist}\sq*{U_{p_r}}}{\E_{\dist} \sq*{U^*_{g_b}}}$ and the online utility ratio $\frac{\E_{\dist}\sq*{U^*_{g_r}}}{\E_{\dist} \sq*{U^*_{g_b}}}$ are both equal to $\Omega\paren*{\paren*{\lambda (k-1)}^{\frac{n}{k} - 1}}$. 
\end{restatable}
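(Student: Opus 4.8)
The plan is to generalize \Cref{example:motive}. For a given $n$, set $R := \lfloor n/k \rfloor$ and $c_r := (\lambda(k-1))^{r-1}$ for $r \in [R]$, and let the prior $\dist_0(n)$ deterministically realize the sequence built from $R$ consecutive ``rounds'', where round $r$ is the block of $k$ candidates $c_r e_1, c_r e_2, \dots, c_r e_k$ presented in that order ($e_\ell$ the $\ell$-th standard basis vector of $\R^k$), followed by $n - kR$ zero vectors to reach exactly $n$ candidates. The hypothesis $\lambda(k-1) > 1$ makes $c_1 < c_2 < \dots < c_R$, a fact used repeatedly below; note that $k=2,\lambda=2$ recovers \Cref{example:motive}. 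The rational agents are immediate: the candidate of largest $\ell_1$-norm is any round-$R$ candidate, of norm $c_R = (\lambda(k-1))^{R-1}$, and since rational agents incur no loss (and, the prior being deterministic, the rational gambler effectively knows the whole sequence) we get $\E_{\dist}\sq*{U_{p_r}} = \E_{\dist}\sq*{U^*_{g_r}} = c_R$.

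For the biased gambler, determinism of $\dist_0(n)$ means the utility-optimal online algorithm simply selects whichever candidate (if any) maximizes $U_{g_b}(\sigma,\cdot)$, so it suffices to bound this over all candidates. The key computation is the super candidate when the $\ell$-th candidate of round $r$ is current: since the $c_r$'s increase, dimensions $1,\dots,\ell$ have been raised to $c_r$ this round while dimensions $\ell+1,\dots,k$ are still at $c_{r-1}$ (with $c_0 := 0$), so $\norm{\vec{s}}_1 = \ell c_r + (k-\ell)c_{r-1}$, and the biased gambler's utility for taking that candidate is
\[
c_r - \lambda\bigl(\ell c_r + (k-\ell)c_{r-1} - c_r\bigr) \;=\; c_r\bigl(1 - \lambda(\ell-1)\bigr) - \lambda(k-\ell)\,c_{r-1}.
\]
For $r=1$ this equals $1 - \lambda(\ell-1) \le 1$, with equality iff $\ell=1$, and the trailing zero vectors clearly give utility $-\lambda\norm{\vec{s}^{(n)}}_1 \le 0$. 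For $r \ge 2$ I would show the displayed quantity is $\le 0$: at $\ell=1$ it is exactly $c_r - \lambda(k-1)c_{r-1} = 0$ by the choice of $c_r$; for $\ell \ge 2$ split on the sign of $1 - \lambda(\ell-1)$ — if it is $\le 0$ both summands are non-positive, and if it is $>0$ then substituting $c_r = \lambda(k-1)c_{r-1}$ and simplifying collapses the desired inequality to $(\ell-1) \le \lambda(k-1)(\ell-1)$, i.e.\ exactly to the hypothesis $\lambda(k-1)\ge 1$. Hence the biased gambler does best by taking the first candidate, for utility $1$ (declining all candidates yields $0$), so $\E_{\dist}\sq*{U^*_{g_b}} = 1$.

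Combining, both the prophet and online utility ratios equal $c_R = (\lambda(k-1))^{R-1}$; since $R = \lfloor n/k \rfloor \ge n/k - 1$ and $\lambda(k-1)$ is a constant larger than $1$, this is $\Omega\bigl((\lambda(k-1))^{n/k-1}\bigr)$, as claimed. The only delicate point is the case analysis showing that the biased gambler's utility is non-positive on every candidate of rounds $2,\dots,R$; everything else is bookkeeping. It is worth noting that the sub-case $1 - \lambda(\ell-1) > 0$ with $\ell \ge 2$ is vacuous when $\lambda \ge 1$ and is precisely where the input $\lambda(k-1) \ge 1$ enters when $\lambda < 1$ (which then forces $k \ge 3$).
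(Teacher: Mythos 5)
Your construction is exactly the paper's (increasing powers of $\beta=\lambda(k-1)$ cycled over the $k$ coordinate directions, with the biased gambler pinned at utility $1$ on the first candidate and $\le 0$ everywhere else), and the verification that later candidates give non-positive utility matches the paper's, so this is essentially the same proof. The only cosmetic differences are that you pad with zero vectors to reach length $n$ rather than allowing a partial final row, and you verify all $k$ positions within a round by a direct two-case computation whereas the paper short-circuits this by observing that the super candidate only grows within a round, so the first position dominates; both are correct.
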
 
 
\begin{proof}[Proof Sketch]
    The proposition can be proven by considering the following deterministic distribution, where $\beta = \lambda \cdot (k-1)$: 
    \begin{center}
        \begin{tabular}{l l l c l}
        $(1, 0, \cdots, 0)$, &$(0, 1, 0, \cdots, 0)$, &$\cdots$, &$\cdots$, &$(0, \cdots, 0, 1)$,\\
        $(\beta, 0, \cdots, 0)$, &$(0, \beta, 0, \cdots, 0)$, &$\cdots$, &$\cdots$, &$(0, \cdots, 0, \beta)$,\\
        $(\beta^2, 0, \cdots, 0)$, &$(0, \beta^2, 0, \cdots, 0)$, &$\cdots$, &$\cdots$, &$(0, \cdots, 0, \beta^2)$,\\
        & &$\cdots$ & &\\
        $(\beta^{\ceil{n/k} - 1}, 0, \cdots, 0)$, &$(0, \beta^{\ceil{n/k} - 1}, 0, \cdots, 0)$, &$\cdots$, &$(0, \cdots, 0, \beta^{\ceil{n/k} - 1}, 0, \cdots, 0)$.&
        \end{tabular}
    \end{center}
    While the rational prophet and gambler are able to get utility $\beta^{\ceil{n/k} - 1}$ by selecting an option in the last row, the biased gambler could only get utility $1$, in particular by selecting the first option. 
\end{proof}

\begin{restatable}{proposition}{claimLinearUtilityGap} \label{claim:linearUtilityGap}
    When the \fbias{} $\lambda \cdot (k-1) = 1$, for any $n \in \mathbb{N}^+$, there exists a sequence $\sigma$ with $n$ candidates such that when the distribution $\dist_0(n)$ takes the values of $\sigma$ deterministically, i.e. with probability 1, and when our prior distribution $\dist$ equals $\dist_0(n)$, the prophet utility ratio  $\frac{\E_{\dist}\sq*{U_{p_r}}}{\E_{\dist} \sq*{U^*_{g_b}}}$ and the online utility ratio $\frac{\E_{\dist}\sq*{U^*_{g_r}}}{\E_{\dist} \sq*{U^*_{g_b}}}$ are both equal to $\Omega\paren*{\frac{n}{k}}$.
\end{restatable}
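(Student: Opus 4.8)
The plan is to reuse the layout from the proof of \Cref{claim:expUtilityGap}, but to replace its geometric sequence of magnitudes—which degenerates to all ones when $\beta = \lambda\cdot(k-1) = 1$—by an \emph{arithmetic} one. Concretely, I would take $\dist_0(n)$ to be the deterministic distribution supported on the sequence built from $\ceil{n/k}$ consecutive ``rows'', where row $i$ (for $i=1,\dots,\ceil{n/k}$) is the block of $k$ candidates whose $\ell$-th member has value $i$ on coordinate $\ell$ and $0$ on every other coordinate, with the final row truncated so that there are exactly $n$ candidates in all. Thus the nonzero magnitudes grow as $1,2,3,\dots$ rather than $1,\beta,\beta^2,\dots$, and the per-step increase in value is a constant rather than a multiplicative factor.

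For the rational agents, both the rational prophet and the rational gambler (who, facing a deterministic $\dist$, sees the sequence in advance) select any candidate in the last complete row and obtain utility $\lfloor n/k\rfloor = \Omega(n/k)$. The substance of the argument is the bound $\E_{\dist}\sq*{U^*_{g_b}} \le 1$. For this I would compute the super candidate $\vec{s}$ at the instant the biased gambler inspects the $\ell$-th candidate of row $i$: coordinates $1,\dots,\ell-1$ equal $i$ (those members of row $i$ were just seen), coordinate $\ell$ equals $i$ (the current candidate's value, which exceeds the $i-1$ seen in row $i-1$), and coordinates $\ell+1,\dots,k$ equal $i-1$ (their best value so far, attained in row $i-1$). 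Substituting $\norm{\vec{s}}_1 = \ell i + (k-\ell)(i-1)$ and $\norm{\sigma^{(t)}}_1 = i$ into $U_{g_b}$ and using $\lambda = 1/(k-1)$ yields utility exactly $\frac{k-\ell}{k-1}\le 1$, independent of $i$. Hence no candidate, and therefore no online stopping rule, can earn the biased gambler more than $1$ (and taking the first candidate already attains $1$). Consequently the prophet and online utility ratios are each at least $\lfloor n/k\rfloor = \Omega(n/k)$.

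The delicate points are bookkeeping rather than conceptual, and are where I would spend the most care: (i) handling the truncated last row when $k \nmid n$, so the rational agents' guarantee is still $\Omega(n/k)$ and the $\le 1$ bound persists for candidates in a partial row—it does, since the super-candidate computation above only used that the \emph{earlier} rows are complete; and (ii) evaluating the biased utility at precisely the step the gambler commits, since the super candidate is built only from candidates seen so far—this is exactly what makes the payoff at the $\ell$-th slot of a row equal $\frac{k-\ell}{k-1}$, and it is the same feature exploited in \Cref{example:motive}. No individual step is hard; the crux is simply recognizing that with the \fbias{} equal to one the per-step value gain is cancelled by the per-step regret increase, so the gambler's entire ``slack'' is the constant gap $i-(i-1)=1$, which is what produces a linear rather than exponential ratio.
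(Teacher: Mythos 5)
Your proposal is correct and is essentially the paper's proof: the same arithmetic-progression sequence, the same observation that the per-step gain $i-(i-1)=1$ is exactly cancelled by $\lambda(k-1)$ times the reference-point increase, and the same conclusion that the biased gambler's utility is capped at $1$ while the rational agents reach $\lfloor n/k\rfloor$. The only cosmetic difference is that you compute the payoff $\tfrac{k-\ell}{k-1}$ at every position $\ell$ within a row, whereas the paper first argues (as in \Cref{claim:expUtilityGap}) that the first slot of each row dominates and then evaluates only $\ell=1$.
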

\begin{proof}[Proof Sketch]
    The proposition can be proven by considering the following deterministic distribution:
        \begin{center}
        \begin{tabular}{l l l c l}
        $(1, 0, \cdots, 0)$, &$(0, 1, 0, \cdots, 0)$, &$\cdots$, &$\cdots$, &$(0, \cdots, 0, 1)$,\\
        $(2, 0, \cdots, 0)$, &$(0, 2, 0, \cdots, 0)$, &$\cdots$, &$\cdots$, &$(0, \cdots, 0, 2)$,\\
        $(3, 0, \cdots, 0)$, &$(0, 3, 0, \cdots, 0)$, &$\cdots$, &$\cdots$, &$(0, \cdots, 0, 3)$,\\
        &&$\cdots$&& \\
        $(\ceil{\frac{n}{k}}, 0, \cdots, 0)$, &$(0, \ceil{\frac{n}{k}}, 0, \cdots, 0)$, &$\cdots$, &$(0, \cdots, 0, \ceil{\frac{n}{k}}, 0, \cdots, 0)$.
        \end{tabular}
    \end{center}
    Here, the rational prophet and gambler are able to get utility $\ceil{n/k}$, while the biased gambler could get utility no better than $1$. 
\end{proof}

\begin{restatable}{theorem}{thmadvInftyGap} \label{thm:advInftyGap}
    When the \fbias{} $\lambda \cdot (k-1) \geq 1$, the prophet and online competitive ratio are both equal to $\infty$.
\end{restatable}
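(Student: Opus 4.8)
The plan is to derive \Cref{thm:advInftyGap} immediately from the two constructions already established, \Cref{claim:expUtilityGap} and \Cref{claim:linearUtilityGap}, by letting the number of candidates $n$ grow without bound. First I would record the trivial but necessary observation that the hypothesis $\lambda \cdot (k-1) \geq 1$ forces $k \geq 2$ and $\lambda > 0$, so the degenerate single-dimensional regime (where $\lambda(k-1)=0$) never arises, and the theorem is non-vacuous precisely in the cases handled by the two propositions.

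Next I would split into the two regimes. In the case $\lambda \cdot (k-1) > 1$, \Cref{claim:expUtilityGap} supplies, for each $n \in \mathbb{N}^+$, a deterministic prior $\dist_0(n)$ under which both the prophet utility ratio $\frac{\E_{\dist}\sq*{U_{p_r}}}{\E_{\dist}\sq*{U^*_{g_b}}}$ and the online utility ratio $\frac{\E_{\dist}\sq*{U^*_{g_r}}}{\E_{\dist}\sq*{U^*_{g_b}}}$ are $\Omega\paren*{\paren*{\lambda(k-1)}^{\frac{n}{k}-1}}$. Since the prophet (resp.~online) competitive ratio in the adversarial setting is by definition the supremum of the prophet (resp.~online) utility ratio over \emph{all} priors $\dist$ (with no bound on the number of candidates), it is at least the utility ratio attained by $\dist_0(n)$, for every $n$. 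As $\lambda(k-1) > 1$, the quantity $\paren*{\lambda(k-1)}^{\frac{n}{k}-1}$ diverges as $n \to \infty$, so both competitive ratios are $\infty$. For the boundary case $\lambda \cdot (k-1) = 1$ the argument is identical, now invoking \Cref{claim:linearUtilityGap}: each $\dist_0(n)$ gives both utility ratios $\Omega\paren*{\frac{n}{k}}$, and $\frac{n}{k} \to \infty$, so again the competitive ratios are unbounded.

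The main obstacle in this result is not in the assembly above but has already been discharged inside the two propositions — namely verifying, on the exhibited ``staircase'' instances with values excellent on alternating (more precisely, cyclically rotating) dimensions, that the biased gambler can secure utility no larger than $1$ (essentially because every later choice incurs a loss term that cancels its value gain, so the first candidate is optimal), while the rational prophet and rational gambler both secure the top-row value. Given those two propositions, the only point that needs a line of care is the quantifier in the definition of the competitive ratio, i.e.~that the supremum ranges over families of instances whose size is allowed to tend to infinity; once that is noted, \Cref{thm:advInftyGap} follows by simply combining the two regimes and passing to the limit in $n$, so I do not anticipate any further difficulty at this step.
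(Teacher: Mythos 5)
Your proposal is correct and takes essentially the same approach as the paper: both invoke \Cref{claim:expUtilityGap} and \Cref{claim:linearUtilityGap} to obtain, for each $n$, a prior with utility ratio diverging as $n \to \infty$, and then observe that the competitive ratio, being a supremum over all priors, must be $\infty$. The only superficial difference is that you split explicitly into the cases $\lambda(k-1)>1$ and $\lambda(k-1)=1$, whereas the paper folds them together by noting the exponential bound dominates the linear one.
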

\begin{proof}
    By \Cref{claim:expUtilityGap} and \Cref{claim:linearUtilityGap} we know that when $\lambda \cdot (k-1 ) \geq 1$, for any $n \in \mathbb{N}^+$, there exists a distribution $\dist_0(n)$ such that the prophet and online utility ratio are $\Omega(\frac{n}{k})$ (note that $\Omega\paren*{\paren*{\lambda (k-1)}^{\frac{n}{k} - 1}} = \omega(\frac{n}{k})$). Thus for any $T \in\mathbb{N}$, there exists a $n \in \mathbb{N}^+$ such that the prophet and online utility of $\dist_0(n)$ both exceed $T$. Since the prophet competitive ratio 
    $\max \limits_{\dist}  \frac{\E_{\dist}\sq*{U_{p_r}}}{\E_{\dist} \sq*{U^*_{g_b}}}$ is at least $\max \limits_{n \in \mathbb{N}^+} \frac{\E_{\dist_0(n)}\sq*{U_{p_r}}}{\E_{\dist_0(n)} \sq*{U^*_{g_b}}}$, any $T \in \mathbb{N}$ serves as a lower bound to the prophet utilty gap, and thus the prophet competitive ratio is equal to $\infty$. Similarly, the online competitive ratio is equal to $\infty$. 
\end{proof}

On the other hand, when the \fbias{} $\lambda(k-1) < 1$, we prove that the biased gambler, by using a simple threshold based algorithm, can get at least a $\frac{1 - \lambda(k-1)}{2 + \lambda}$ fraction of the prophet's utility and a $\frac{1 - \lambda(k-1)}{1 + \lambda}$ fraction of the optimal rational gambler's utility regardless of the value of $n$. We also show that both of these ratios are tight. More formally, we prove the following theorems. 

\begin{restatable}{theorem}{thmLBOfflineBound} \label{thm:LBOfflineBound}
    When the \fbias{} $\lambda \cdot (k-1) < 1$, for any prior distribution $\dist$, the inverse of the prophet utility ratio $\frac{\E_{\dist}\sq*{U^*_{g_b}}}{\E_{\dist}\sq*{U_{p_r}}} \geq (1 - \lambda (k-1))\cdot \max \set*{\frac{\gamma}{1 + \lambda + k}, \frac{1}{2 + \lambda}}$, where $\gamma = \frac{\E_{\dist}[\sum_{j=1}^k S_j^*]}{\E_{\dist}[V^*]}$. Consequently, the prophet competitive ratio is at most $\frac{2 + \lambda}{1 - \lambda (k-1)}$. 
\end{restatable}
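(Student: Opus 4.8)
The plan is to exhibit a single threshold algorithm whose expected biased utility is at least $(1-\lambda(k-1))\max\set*{\tfrac{1}{2+\lambda},\tfrac{\gamma}{1+\lambda+k}}\cdot\E_\dist[V^*]$. Since $\E_\dist[U_{p_r}]=\E_\dist[V^*]$ and $\E_\dist[U^*_{g_b}]$ is by definition the value of the \emph{optimal} biased online algorithm, this lower-bounds $\E_\dist[U^*_{g_b}]$, giving the first inequality; the bound on the prophet competitive ratio then follows by reciprocating the $\tfrac{1}{2+\lambda}$ estimate and taking the supremum over $\dist$. Fix $\tau>0$ and let $\mathcal A_\tau$ accept the first candidate $t$ with $\norm{\sigma^{(t)}}_1\ge\tau$, and accept the last candidate if there is none.

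First I would prove two pointwise bounds on the utility collected by $\mathcal A_\tau$. (i) Suppose $\mathcal A_\tau$ accepts $t$ because $\norm{\sigma^{(t)}}_1\ge\tau$; then every earlier candidate has $\ell_1$ norm, hence every coordinate, below $\tau$. Writing the loss as $\lambda\sum_j(\max_{w<t}\sigma^{(w)}_j-\sigma^{(t)}_j)^+$ and letting $J_0$ be the coordinates where it is positive: if $J_0\ne[k]$ each of the at most $k-1$ positive terms is $<\tau$, so the utility is $\ge\norm{\sigma^{(t)}}_1-\lambda(k-1)\tau$; if $J_0=[k]$ the loss equals $\lambda(\sum_j\max_{w<t}\sigma^{(w)}_j-\norm{\sigma^{(t)}}_1)<\lambda(k\tau-\norm{\sigma^{(t)}}_1)$, so the utility is $\ge(1+\lambda)\norm{\sigma^{(t)}}_1-\lambda k\tau$, which using $\norm{\sigma^{(t)}}_1\ge\tau$ is again $\ge\norm{\sigma^{(t)}}_1-\lambda(k-1)\tau$. (ii) If $\mathcal A_\tau$ is forced onto the last candidate then $V^*<\tau$, the reference point is $(S^*_1,\dots,S^*_k)$ with $\sum_jS^*_j<k\tau$, and the utility is $\ge\norm{\sigma^{(n)}}_1-\lambda k\tau\ge-\lambda k\tau$.

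Next I would run the standard Samuel--Cahn computation: letting $p=\Prob_\dist[V^*<\tau]$ and using that the candidates are independent across $t$, the expected value selected by $\mathcal A_\tau$ on runs where it accepts by the threshold rule is at least $\tau(1-p)+p\sum_t\E_\dist[(\norm{\sigma^{(t)}}_1-\tau)^+]$. Combining this with (i) and (ii),
\[
\E_\dist[\text{utility of }\mathcal A_\tau]\ \ge\ (1-\lambda(k-1))\,\tau(1-p)+p\sum_t\E_\dist\bigl[(\norm{\sigma^{(t)}}_1-\tau)^+\bigr]-\lambda k\tau\,p .
\]
I would then bound the middle sum in two realization-wise ways: trivially $\sum_t(\norm{\sigma^{(t)}}_1-\tau)^+\ge(V^*-\tau)^+\ge V^*-\tau$, and, for the $\gamma$ term, $\sum_t(\norm{\sigma^{(t)}}_1-\tau)^+\ge\sum_j(S^*_j-\tau)^+\ge\sum_jS^*_j-k\tau$; in expectation these read $\E_\dist[V^*]-\tau$ and $\E_\dist[\sum_jS^*_j]-k\tau$. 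Feeding the first into the display with $\tau=\E_\dist[V^*]/(2+\lambda)$ and using the identity $(1-\lambda(k-1))+1+\lambda k=2+\lambda$ (which kills the coefficient of $p$) gives utility $\ge\frac{1-\lambda(k-1)}{2+\lambda}\E_\dist[V^*]$; feeding the second with $\tau=\E_\dist[\sum_jS^*_j]/(1+\lambda+k)$ and the identity $(1-\lambda(k-1))+k+\lambda k=1+\lambda+k$ gives utility $\ge\frac{1-\lambda(k-1)}{1+\lambda+k}\E_\dist[\sum_jS^*_j]=\frac{(1-\lambda(k-1))\gamma}{1+\lambda+k}\E_\dist[V^*]$. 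Taking the better threshold, dividing by $\E_\dist[U_{p_r}]=\E_\dist[V^*]$, and finally reciprocating the $\tfrac1{2+\lambda}$ bound and taking $\sup_\dist$ yields both conclusions.

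The hard part will be the pointwise bound~(i): the natural guess that the loss never exceeds $\lambda(k-1)\tau$ is false exactly when the accepted candidate is dominated on \emph{every} coordinate by an earlier candidate, and one must notice that there the crude estimate $\sum_j\max_{w<t}\sigma^{(w)}_j<k\tau$ together with $\norm{\sigma^{(t)}}_1\ge\tau$ still recovers the bound. A secondary subtlety is the realization-wise inequality $\sum_t(\norm{\sigma^{(t)}}_1-\tau)^+\ge\sum_j(S^*_j-\tau)^+$ driving the $\gamma$ estimate: it is not term-by-term, and I would prove it by grouping coordinates by which candidate attains their maximum (a candidate attaining the maxima on a set $J$ has $\ell_1$ norm at least $\sum_{j\in J}S^*_j$) and then applying $(\sum_i a_i-\tau)^+\ge\sum_i(a_i-\tau)^+$ for nonnegative $a_i$. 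The hypothesis $\lambda(k-1)<1$ enters only to keep the resulting constants positive and the statement non-vacuous; everything else — the Samuel--Cahn identity and the arithmetic recombining the constants — is routine.
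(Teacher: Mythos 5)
Your proof is correct and follows the same overall route as the paper: a single‑threshold rule, a pointwise utility bound at the acceptance time, a Samuel–Cahn expectation computation, and two ways of lower‑bounding the surplus sum (via $V^*$ and via $\sum_j S_j^*$) matched with two threshold choices. The minor differences are cosmetic: you fix the threshold $\tau$ directly and choose it to kill the coefficient of $p$, whereas the paper parameterizes by the quantile $\alpha=\Pr[V^*\ge T]$ and balances the two coefficients to be equal — algebraically the same balancing; your pointwise bound splits on whether $J_0=[k]$, whereas the paper derives $\norm{\vec s^{(t)}}_1\le kT+\sum_j(\sigma^{(t)}_j-T)^+\le\norm{\sigma^{(t)}}_1+(k-1)T$ directly; and you supply a clean proof (group coordinates by which candidate attains each $S_j^*$, then use $(\sum_i a_i-\tau)^+\ge\sum_i(a_i-\tau)^+$ for $a_i\ge0$) of the inequality $\sum_t(\norm{\sigma^{(t)}}_1-\tau)^+\ge\sum_j(S_j^*-\tau)^+$, which the paper simply asserts.
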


\begin{restatable}{theorem}{thmLBOnlineBound} \label{thm:LBOnlineBound}
    When the \fbias{} $\lambda \cdot (k-1) < 1$, for any prior distribution $\dist$, the \oratio{} $\frac{\E_{\dist}\sq*{U^*_{g_r}}}{\E_{\dist}\sq*{U^*_{g_b}}} \leq \frac{1 + \lambda}{(1 - \lambda (k-1))}$. Consequently, the online competitive ratio is at most $\frac{1 + \lambda}{1 - \lambda (k-1)}$. 
\end{restatable}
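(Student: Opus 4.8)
I would prove the equivalent statement that the biased gambler has an online strategy with expected utility at least $\frac{1-\lambda(k-1)}{1+\lambda}\,\E_{\dist}\sq*{U^*_{g_r}}$; the bound on the online competitive ratio then follows by maximizing over $\dist$. The cornerstone is a loss-control lemma: call a time $t$ an $\ell_1$-record if $\|\sigma^{(t)}\|_1 = \max_{w\le t}\|\sigma^{(w)}\|_1$. If the biased gambler commits to an $\ell_1$-record $t$, then $\vec{s}^{(t)}_j=\max_{w\le t}\sigma^{(w)}_j\le \max_{w\le t}\|\sigma^{(w)}\|_1=\|\sigma^{(t)}\|_1$ for every coordinate $j$, hence $\|\vec{s}^{(t)}\|_1\le k\|\sigma^{(t)}\|_1$ and therefore
$$U_{g_b}(\sigma,t)=(1+\lambda)\|\sigma^{(t)}\|_1-\lambda\|\vec{s}^{(t)}\|_1\ \ge\ (1-\lambda(k-1))\,\|\sigma^{(t)}\|_1 .$$
Since the first candidate whose $\ell_1$-value crosses any fixed threshold $\tau$ is automatically an $\ell_1$-record, a threshold (more generally, record-respecting) strategy converts obtained value into biased utility while losing only the factor $1-\lambda(k-1)$. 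This is also where the hypothesis is used: the bound is vacuous unless $1-\lambda(k-1)>0$.

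It then remains to pick a record-respecting strategy whose \emph{value} captures a $\frac1{1+\lambda}$-fraction of $\E_{\dist}\sq*{U^*_{g_r}}$. Let $A^*$ be the rational gambler's optimal online policy, recalling that its threshold at the first step equals $\E_{\dist}\sq*{U^*_{g_r}}$. I would have the biased gambler take the better of two strategies: (a) follow $A^*$ but commit only on steps that are $\ell_1$-records, and (b) the ``benign'' strategy of accepting candidate $1$, which incurs zero loss because $\vec{s}^{(1)}=\sigma^{(1)}$, obtaining utility exactly $\E_{\dist}[\|\sigma^{(1)}\|_1]$; one may also interpolate between them by randomizing with weights $\frac1{1+\lambda}$ and $\frac{\lambda}{1+\lambda}$. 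For strategy (a), whenever $A^*$ stops at a \emph{non}-record there is an earlier $\ell_1$-record of strictly larger value that $A^*$ passed over; that foregone value can be charged against a ``greedy record'' threshold set at an appropriate quantile of $V^*$, and combining this with the Samuel--Cahn-type lower bound on a threshold strategy together with $\E_{\dist}\sq*{U^*_{g_r}}\le\tau+\E_{\dist}[(V^*-\tau)^+]$ yields the claimed fraction.

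The main obstacle is exactly this balancing step: improving the easy factor $\tfrac12$ (which would only give the weaker ratio $\tfrac{2}{1-\lambda(k-1)}$, the same one that already follows from the prophet comparison in \Cref{thm:LBOfflineBound} since $\E_{\dist}\sq*{U^*_{g_r}}\le\E_{\dist}\sq*{U_{p_r}}$) to the sharp $\tfrac1{1+\lambda}$, which is tight already at $k=1$ and hence leaves no slack. Two points need care. First, the threshold strategy must be measured against the online optimum $\E_{\dist}\sq*{U^*_{g_r}}$ rather than against $\E_{\dist}[V^*]$, so the argument has to exploit the structure of $A^*$ (its decreasing thresholds, its first threshold being $\E_{\dist}\sq*{U^*_{g_r}}$) to bound how much value $A^*$ can leave on an unclaimed early record. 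Second, under the ``must select'' convention a threshold that is never met forces the biased gambler onto the last candidate with possibly negative utility; I would absorb this event into the benign branch ``accept candidate $1$'' (or, equivalently, lower the threshold and optimize the value-versus-safety tradeoff), and then verify that the resulting constant is exactly $\tfrac{1-\lambda(k-1)}{1+\lambda}$.
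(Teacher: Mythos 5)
Your record lemma is sound, but the overall plan has a genuine gap exactly where you flag it, and the paper closes that gap by a different route that you would be unlikely to reach from the record-based framing.

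Your observation that committing at an $\ell_1$-record $t$ gives $\|\vec{s}^{(t)}\|_1\le k\|\sigma^{(t)}\|_1$, hence $U_{g_b}(\sigma,t)\ge(1-\lambda(k-1))\|\sigma^{(t)}\|_1$, is correct, and it is essentially a slightly weaker form of \Cref{claim:utilityBound} (which obtains the sharper $\|\sigma^{(t)}\|_1-\lambda(k-1)T$ when $t$ is the first time $\|\sigma^{(t)}\|_1\ge T$). However, the load-bearing step in your outline is the claim that some record-respecting strategy captures a $\frac1{1+\lambda}$ fraction of $\E_\dist[U^*_{g_r}]$ in \emph{value}, and no convincing path to that is given. ``Follow $A^*$ but commit only on records'' is not even well-defined when $A^*$ stops only at a non-record; in the tight instance of \Cref{thm:UBBoundedUtility} the rational-optimal policy waits for the last (randomized) candidate, which with probability $1-\epsilon$ is the zero vector and hence a non-record, so your mixed strategy has to specify an alternative in exactly the problematic case. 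Moreover, the record bound charges you the full $\lambda(k-1)\|\sigma^{(t)}\|_1$ of loss rather than $\lambda(k-1)T$, so the two factors you want to multiply ($1-\lambda(k-1)$ times $\frac1{1+\lambda}$) do not compose cleanly against the same strategy, and the rational gambler's dynamic thresholds do not directly bound what a record-constrained policy forgoes.

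The paper avoids this bind by never insisting on records at all. It has the biased gambler mimic the \emph{rational} gambler's optimal policy verbatim, accepting that the chosen step may be a non-record, and bounds the loss-aversion term crudely by observing that the super candidate's $\ell_1$-norm can never exceed $\sum_j S_j^*$. This yields
$\E_\dist[U^*_{g_b}]\ge(1+\lambda)\E_\dist[U^*_{g_r}]-\lambda\,\E_\dist\!\left[\sum_j S_j^*\right]$.
On its own this is useless when $\E[\sum_j S_j^*]$ is large relative to $\E[V^*]$ (i.e.\ $\gamma$ large), but that is precisely the regime where the threshold-based bound $\E_\dist[U^*_{g_b}]\ge(1-\lambda(k-1))\frac{\gamma}{1+\lambda+k}\E_\dist[U_{p_r}]$ from \Cref{lem:LBOfflineBound} is strong. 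Writing $\Delta=\E[U_{p_r}]/\E[U^*_{g_r}]$ and balancing the two lower bounds on $1/\beta=\E[U^*_{g_b}]/\E[U_{p_r}]$ over $\Delta$ gives exactly $\frac{\Delta}{\beta}\ge\frac{1-\lambda(k-1)}{1+\lambda}$. Note that the $\gamma$-dependent term in \Cref{lem:LBOfflineBound} was inserted precisely to enable this balancing (the paper remarks on this before the lemma), so the online theorem is not proved standalone the way you attempt; it leans on a strengthened form of the offline bound. The moral is that the right ``second strategy'' to balance against is not ``accept candidate $1$'' but ``play like the rational gambler,'' because the resulting loss term and the offline threshold bound both depend on the same quantity $\gamma$, making the extremal tradeoff tractable.
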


\begin{restatable}{theorem}{thmUBBoundedUtility}  \label{thm:UBBoundedUtility}
    When the \fbias{} $\lambda \cdot (k-1) < 1$, for any $\delta \in (0, 1)$, there exists a prior distribution $\dist$ such that the prophet utility ratio $\frac{\E_{\dist} \sq*{U^*_{p_r}}}{\E_{\dist}\sq*{U_{g_b}}} \geq \frac{2 + \lambda}{1 - \lambda (k-1)} \cdot (1 - \delta)$ and that the online utility ratio $\frac{\E_{\dist} \sq*{U^*_{g_r}}}{\E_{\dist}\sq*{U^*_{g_b}}} \geq \frac{1 + \lambda}{1 - \lambda (k-1)} \cdot (1 - \delta)$. Consequently, the prophet competitive ratio is at least $\frac{2 + \lambda}{1 - \lambda (k-1)}$ and the online competitive ratio is at least $\frac{1 + \lambda}{1 - \lambda (k-1)}$. 
\end{restatable}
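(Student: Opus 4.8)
The plan is to exhibit, for every $\delta\in(0,1)$, a single prior $\dist=\dist(\delta)$ that is simultaneously near-extremal against the rational prophet and the rational gambler, obtained by composing two gadgets. The first is a \emph{geometric ramp} across the $k$ coordinates, of exactly the shape used in \Cref{example:motive} and in the proof of \Cref{claim:expUtilityGap}, but now with growth factor $\beta=1+\eps$ taken close to $1$: in level $r$ one presents the $k$ axis-aligned candidates each of value $\beta^{r}$ on a distinct coordinate, in coordinate order, for $r=0,1,\dots,R-1$. Because $\lambda(k-1)<1$ this does not blow the ratio up; instead, the same computation sketched after \Cref{example:motive} shows that if the biased gambler stops at any ramp candidate of $\ell_1$-value $v$, its utility is at most $\bigl(1-\tfrac{\lambda(k-1)}{\beta}\bigr)v$, a multiplicative discount tending to $1-\lambda(k-1)$ as $\eps\to 0$, whereas the rational prophet and rational gambler (who have no loss aversion and see the ramp as deterministic) lose nothing on it. The second gadget is a probabilistic ``escape'' in the style of~\cite{kleinberg2021optimal}, spliced near the top of the ramp: a rescaled copy of the classical tight prophet-inequality instance---a large reward that appears only with small probability $q$, together with a fallback whose magnitude is calibrated to the current ramp level so that it, too, is discounted by the reference already accumulated on the other coordinates. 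Tuned this way, the rational agents can afford to wait for the rare reward and thereby extract an extra factor $2+\lambda$ (prophet) and $1+\lambda$ (gambler) over the biased gambler, who, anticipating regret, cannot. For $k=1$ the ramp is empty and the construction degenerates to the tight instance of~\cite{kleinberg2021optimal}.

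With $\dist$ fixed, I would evaluate the three quantities in turn. The rational prophet's expected utility $\E_{\dist}\sq*{U_{p_r}}$ is read off directly, the prophet selecting the coordinate-wise $\argmax$ of $\norm{\cdot}_1$. For $\E_{\dist}\sq*{U^*_{g_r}}$ one analyzes the rational gambler's optimal policy directly via the classical prophet-inequality threshold argument on the bias-free $\ell_1$-values (the bound of \Cref{claim:classical} only certifies the gambler is within a factor $2$ of the prophet, but the exact constant in the escape gadget requires the direct computation). The heart of the proof is the \emph{upper} bound on $\E_{\dist}\sq*{U^*_{g_b}}$: one must show that \emph{every} online policy of the biased gambler---adaptive and randomized included---has small expected utility. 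For this I would first write, for each candidate in each realization, the exact utility the biased gambler receives upon stopping there (this is where the ramp's $(1-\lambda(k-1))$-discount and the escape's loss-aversion discount enter explicitly); since the super-candidate is coordinate-wise monotone and determined by the realized prefix, the gambler's problem reduces by backward induction to a single threshold on an essentially one-dimensional quantity, which one optimizes by hand and shows to be at most $(1+\littleo(1))$ times the target. Choosing $\eps$ small, $q$ small, and $R$ large (all as functions of $\delta$) and collecting the error terms into the factor $(1-\delta)$ then gives both utility-ratio bounds, and the competitive-ratio consequences are immediate since the competitive ratio is a supremum over $\dist$.

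The main obstacle I expect is precisely this upper bound on $\E_{\dist}\sq*{U^*_{g_b}}$, and within it, arranging the composition so that the two discounts genuinely multiply rather than interfere: the biased gambler must not be able to ``peel off'' the ramp penalty by stopping extremely early while retaining a shot at the late escape, nor to wait so deep into the ramp that the escape reward---once it fires---is no longer discounted by the other-coordinate reference. Pinning down the parameters (the ramp slope $1+\eps$, the escape probability $q$, and the escape-reward and fallback magnitudes) into a regime where the biased gambler's best response is provably the ``obvious'' early stop, uniformly in $n$, is the delicate part; the remaining steps are routine limits and case checks. A secondary, easier point is to confirm that the same $\dist$ is extremal for both comparisons, which should hold because the ramp discount is common to both and the spliced gadget is the usual one separating prophet from gambler by the classical factor of two.
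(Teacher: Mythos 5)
Your construction is correct in spirit and follows the same two-gadget architecture as the paper: a deterministic multi-dimensional ramp that rotates the high value among the $k$ coordinates, spliced with a rare-reward escape tuned to the classical tight prophet instance of~\cite{kleinberg2021optimal}. The key difference is in the ramp. You take a pure geometric ramp with base $\beta=1+\eps$, under which the biased gambler's best ramp option is the \emph{last} level, with utility $\beta^{R-1}\bigl(1-\tfrac{\lambda(k-1)}{\beta}\bigr)$, a discount that only tends to $1-\lambda(k-1)$ as $\eps\to 0$; matching the escape to this requires tuning $V\approx(1+\lambda)\beta^{R-1}$, a double limit $\eps\to 0$, $q\to 0$, and a case analysis showing neither stopping early in the ramp nor waiting for the escape beats the last ramp candidate. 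The paper instead sets the ramp values to the \emph{partial sums} $\sum_{j=0}^{i-1}\beta^j$ with $\beta=\lambda(k-1)<1$, which causes an exact algebraic cancellation: the biased gambler receives utility exactly $1$ from the first candidate of \emph{every} row, and a short calculation shows the escape also yields utility $<1$, so $\E_{\dist}\sq*{U^*_{g_b}}\leq 1$ uniformly in $w$ with no limiting argument needed. What your route buys is a closer analogy to the $\lambda(k-1)\geq 1$ blow-up examples (it literally interpolates them across $\beta=1$); what the paper's route buys is that the hardest step you flag --- the uniform upper bound over all adaptive policies of the biased gambler --- collapses to a one-line identity. Your claim that ``for $k=1$ the ramp is empty'' is slightly off (the ramp is still present, it just creates no cross-coordinate regret), but the degeneration to~\cite{kleinberg2021optimal} is otherwise correct.
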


Our proof for \Cref{thm:LBOfflineBound} and \Cref{thm:LBOnlineBound} follows a similar outline to the single dimensional case presented in \citet{kleinberg2021optimal}. However, our analysis deals with the additional complexity from having multiple dimensions, and often requires a tighter analysis for intermediate terms in order to find the exact competitive ratios. For instance, in \Cref{thm:LBOfflineBound}, we require an additional term when bounding prophet utility ratio (which will later be used in \Cref{thm:LBOnlineBound}) compared to a similar claim in \cite{kleinberg2021optimal}. 

We will start by defining threshold based algorithms. 
\begin{definition}[Thresholding algorithms.]
    We define a thresholding algorithm parameterized by $\alpha \in (0, 1)$ and denoted $A^{\alpha}$ as follows. First, $A^{\alpha}$ sets threshold $T$ such that $\alpha = \Pr\sq*{V^* \geq T}$ (recall that $V^*$ is defined as the maximum value of all candidates). Next, during the selection process, the algorithm will select the first candidate whose value $\norm{\sigma^{(t)}}_1$ is at least the threshold $T$. We will denote the expected utility from $A^\alpha$ as $\E_{\dist}[U_{g_b}^\alpha]$.
\end{definition}

We will now reason about the biased gambler's utility from using a thresholding algorithm $A^\alpha$. In the following two claims, we first bound the utility (and regret) of a biased gambler that uses the thresholding algorithm when they encounter a particular sequence of values $\sigma$. This utility bound can then be used to bound the expected utility of the biased gambler given the prior distribution of values.

\begin{restatable}{claim}{claimUtilityBound} \label{claim:utilityBound}
    When the \fbias{} $\lambda \cdot (k-1) < 1$, for any sequence $\sigma$, for any threshold $T$, and for timestep $t$ such that $t$ is the smallest timestep where $\norm{\sigma^{(t)}}_1 \geq T$, it must be the case that $U_{g_b}(\sigma, t) \geq \norm{\sigma^{(t)}}_1 - \lambda (k-1) \cdot T$. 
\end{restatable}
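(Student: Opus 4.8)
The plan is to reduce the statement to a purely deterministic inequality about norms. Recall $U_{g_b}(\sigma,t) = \norm{\sigma^{(t)}}_1 - \lambda\paren*{\norm{\vec{s}^{(t)}}_1 - \norm{\sigma^{(t)}}_1}$, so the quantity $\lambda\paren*{\norm{\vec{s}^{(t)}}_1 - \norm{\sigma^{(t)}}_1}$ is exactly the biased gambler's regret. Thus it suffices to show $\norm{\vec{s}^{(t)}}_1 - \norm{\sigma^{(t)}}_1 \le (k-1)\cdot T$; the hypothesis $\lambda(k-1) < 1$ plays no real role here and is only carried along for consistency with the later statements that invoke this claim.

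The key structural observation is that, since $t$ is the \emph{first} timestep with $\norm{\sigma^{(w)}}_1 \ge T$, every earlier candidate $w < t$ has $\norm{\sigma^{(w)}}_1 < T$, and because all coordinates are non-negative this forces $\sigma^{(w)}_j \le \norm{\sigma^{(w)}}_1 < T$ in every dimension $j$. Hence for each $j$ (with the max over $w<t$ interpreted as vacuous, and $\vec s^{(t)}_j = \sigma^{(t)}_j$, when $t=1$),
$$\vec{s}^{(t)}_j \;=\; \max\paren*{\sigma^{(t)}_j,\; \max_{w<t}\sigma^{(w)}_j} \;\le\; \max\paren*{\sigma^{(t)}_j,\; T},$$
so the per-coordinate regret contribution satisfies $\vec{s}^{(t)}_j - \sigma^{(t)}_j \le \paren*{T - \sigma^{(t)}_j}^+$. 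Summing over $j$ gives $\norm{\vec{s}^{(t)}}_1 - \norm{\sigma^{(t)}}_1 \le \sum_{j=1}^k \paren*{T - \sigma^{(t)}_j}^+$.

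It then remains to prove the elementary bound $\sum_{j=1}^k \paren*{T - x_j}^+ \le (k-1)T$ for the non-negative vector $x = \sigma^{(t)}$, using only that $\sum_j x_j = \norm{\sigma^{(t)}}_1 \ge T$. I would split into two cases: if some coordinate has $x_{j_0} \ge T$ then that term vanishes and the remaining $k-1$ terms are each at most $T$; if every coordinate has $x_j < T$ then $\sum_j \paren*{T-x_j}^+ = kT - \sum_j x_j \le kT - T = (k-1)T$. Either way the bound holds, and combining with the previous paragraph yields $U_{g_b}(\sigma,t) \ge \norm{\sigma^{(t)}}_1 - \lambda(k-1)T$, as desired. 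The only point worth flagging — more a subtlety than an obstacle — is getting the factor $(k-1)$ rather than the trivial $k$: this is precisely where the threshold condition $\norm{\sigma^{(t)}}_1 \ge T$ is used, since it guarantees that at least a $T$-worth of $\sigma^{(t)}$'s mass contributes nothing to the regret. Everything else is a routine case check.
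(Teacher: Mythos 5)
Your proof is correct and follows essentially the same route as the paper's: both rest on the observation that every pre-$t$ candidate has all coordinates below $T$, hence $\vec{s}^{(t)}_j \le \max(\sigma^{(t)}_j, T)$, followed by the same two-case analysis (some coordinate of $\sigma^{(t)}$ clears $T$, or none does) to extract the factor $(k-1)$ rather than $k$. The only cosmetic difference is that you bound $\sum_j (T - \sigma^{(t)}_j)^+$ directly while the paper bounds $\sum_j (\sigma^{(t)}_j - T)^+$; the two are interchangeable via $\max(a,T) = a + (T-a)^+ = T + (a-T)^+$, and your remark that $\lambda(k-1)<1$ is not actually used in this claim is also accurate.
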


\begin{restatable}{claim}{claimThresholdUtility} \label{claim:thresholdUtility}
    When the \fbias{} $\lambda \cdot (k-1) < 1$, for any prior distribution $\dist$ and any $\alpha \in (0, 1)$, $\E_{\dist}\sq*{U_{g_b}^\alpha} \geq  \paren*{(1 + \lambda)\alpha -k \lambda} \cdot T + (1 - \alpha) \cdot \sum_{t \in [n]} \E_{\sigma^{(t)} \sim \dist_t} \sq*{(\norm{\sigma^{(t)}}_1 - T)^+}$. 
\end{restatable}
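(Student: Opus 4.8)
The plan is to run the classical Samuel--Cahn analysis of a single-threshold stopping rule, while tracking two bias-specific terms: the regret at the moment of acceptance (controlled by \Cref{claim:utilityBound}) and the possibly negative utility incurred when \emph{no} candidate ever reaches the threshold. It is the second of these that forces the coefficient of $T$ to be the slightly lossy $(1+\lambda)\alpha-k\lambda$ rather than a cleaner quantity.

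For $t\in[n]$, let $p_t=\Pr_{\dist}[\max_{w<t}\norm{\sigma^{(w)}}_1<T]$ be the probability that $A^\alpha$ is still searching when it reaches candidate $t$. Since $\{V^*<T\}$ is contained in that event, $p_t\geq\Pr_{\dist}[V^*<T]=1-\alpha$. As $\dist=\times_t\dist_t$, conditioning on reaching $t$ leaves $\sigma^{(t)}\sim\dist_t$ a fresh draw, so $A^\alpha$ accepts at $t$ exactly when $\norm{\sigma^{(t)}}_1\geq T$; the events $\{\text{accept at }t\}_{t\in[n]}$ then partition $\{V^*\geq T\}$, which yields the identity $\sum_t p_t\Pr_{\sigma^{(t)}\sim\dist_t}[\norm{\sigma^{(t)}}_1\geq T]=\alpha$. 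On $\{\text{accept at }t\}$, the index $t$ is the first timestep with $\norm{\sigma^{(t)}}_1\geq T$, so \Cref{claim:utilityBound} gives $U_{g_b}(\sigma,t)\geq\norm{\sigma^{(t)}}_1-\lambda(k-1)T=(\norm{\sigma^{(t)}}_1-T)^+ + (1-\lambda(k-1))T$. Multiplying by the (independent) indicator of reaching $t$, summing over $t$, applying the identity to the second term and $p_t\geq 1-\alpha$ to the nonnegative first term gives $\E_{\dist}[U_{g_b}^\alpha\cdot\indic{V^*\geq T}]\geq (1-\lambda(k-1))\alpha\,T+(1-\alpha)\sum_t\E_{\sigma^{(t)}\sim\dist_t}[(\norm{\sigma^{(t)}}_1-T)^+]$. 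On the complementary event $\{V^*<T\}$, every coordinate of the final super-candidate is at most $V^*<T$, so $\norm{\vec{s}^{(n)}}_1<kT$, and whatever candidate $A^\alpha$ ends up holding has utility at least $-\lambda\norm{\vec{s}^{(n)}}_1\geq -k\lambda T$ (this also covers the degenerate reading in which $A^\alpha$ holds nothing, with utility $0$); hence $\E_{\dist}[U_{g_b}^\alpha\cdot\indic{V^*<T}]\geq -k\lambda T(1-\alpha)$. Adding the two contributions and using the elementary simplification $(1-\lambda(k-1))\alpha-k\lambda(1-\alpha)=(1+\lambda)\alpha-k\lambda$ proves the claim.

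The main obstacle is exactly the no-candidate-clears-$T$ regime: there the biased gambler can be forced to compare against the full coordinate-wise super-candidate and suffer a sizeable loss, so one cannot simply discard that event; the crude but correct per-realization bound $-k\lambda T$, weighted by its probability $1-\alpha$, is precisely what converts the clean acceptance-side estimate $(1-\lambda(k-1))\alpha\,T$ into the stated $(1+\lambda)\alpha-k\lambda$. Everything else is routine bookkeeping: the independence argument that factors the per-step expectations through $p_t$, the identity $\sum_t p_t\Pr_{\sigma^{(t)}\sim\dist_t}[\norm{\sigma^{(t)}}_1\geq T]=\alpha$, and the splitting of $\norm{\sigma^{(t)}}_1$ into its threshold and overshoot parts on the acceptance event.
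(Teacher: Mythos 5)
Your proposal is correct and follows essentially the same route as the paper's proof: both bound the utility on the acceptance event via \Cref{claim:utilityBound}, handle the no-acceptance event with the crude $-k\lambda T$ bound on the super-candidate, and use independence of the ``reached step $t$'' event from $\sigma^{(t)}$ together with the inequality $\Pr[\text{reach }t]\geq 1-\alpha$ to distribute the $\alpha$ mass. The only difference is bookkeeping style — you phrase it with indicators and the identity $\sum_t p_t\Pr[\norm{\sigma^{(t)}}_1\geq T]=\alpha$, while the paper works with conditional expectations and $\sum_t\Pr[A^\alpha \text{ selects }t]=\alpha$, which are the same decomposition.
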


\begin{proof}[Proof Sketch]
Let $T$ be the threshold where $\alpha = \Pr\sq*{V^* \geq T}$, then $T$ is the threshold used by the algorithm $A^\alpha$. Observe that when the gambler uses the algorithm $A^\alpha$, with probability $\alpha$ the algorithm will select some item, and with probability $1 - \alpha$ the algorithm will select nothing. 

Conditioned on $A^\alpha$ selecting no item, we know that all items have value $< T$. Thus even at the end of the algorithm, the gambler's value for the super candidate must not exceed $k \cdot T$. This means that the gambler's utility is at least $-\lambda \cdot k \cdot T$.  On the other hand, we can prove by using \Cref{claim:utilityBound} that conditioned on $A^\alpha$ selecting an item at step $t$, the gambler's utility is at least $(1 - \lambda(k-1) )\cdot T +  \frac{\E_{\sigma^{(t)} \sim \dist_t} \sq*{(\norm{\sigma^{(t)}}_1 - T)^+}}{\Pr\sq*{\norm{\sigma^{(t)}}_1 \geq T}}$. The proof of the claim then follows a standard argument for prophet inequality where we take the expectation of agent utility over all events (for all t: selecting an item at step $t$; selecting no item).
\end{proof}

We can now bound the utility ratio between a biased gambler using algorithm $A^\alpha$ and the rational prophet, which can then be used to prove \Cref{thm:LBOfflineBound} and \Cref{thm:LBOnlineBound}. It turns out that \Cref{thm:LBOfflineBound} and \Cref{thm:LBOnlineBound} requires two different ways of bounding the the utility ratio.

\begin{restatable}{lemma}{lemLBOfflineBound} \label{lem:LBOfflineBound}
    When the \fbias{} $\lambda \cdot (k-1) < 1$, for any prior distribution $\dist$, there exists an $\alpha$ such that $\frac{\E_{\dist}\sq*{U_{g_b}^\alpha}}{\E_{\dist}\sq*{U_{p_r}}} \geq (1 - \lambda (k-1))\cdot \max \set*{\frac{\gamma}{1 + \lambda + k}, \frac{1}{2 + \lambda}}$, where $\gamma = \frac{\E_{\dist}[\sum_{j=1}^k S_j^*]}{\E_{\dist}[V^*]}$ (recall that $S_j^*$ denotes the maximum value in the sequence along dimension $j$ and $V^* = \max \limits_t \norm{\sigma^{(t)}}_1$). 
\end{restatable}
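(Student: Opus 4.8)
The plan is to combine \Cref{claim:thresholdUtility} with two different choices of the threshold parameter $\alpha$: one tuned to $\E_\dist[V^*]$ and one tuned to $\E_\dist\big[\sum_{j=1}^k S_j^*\big]$, each chosen so that the coefficient of $T$ in the resulting lower bound on $\E_\dist[U_{g_b}^\alpha]$ vanishes. Throughout I would write $M := \E_\dist[V^*] = \E_\dist[U_{p_r}]$ and $\Sigma := \E_\dist\big[\sum_{j=1}^k S_j^*\big] = \gamma M$, and abbreviate $D(T) := \sum_{t\in[n]}\E_{\sigma^{(t)}\sim\dist_t}\big[(\norm{\sigma^{(t)}}_1 - T)^+\big]$, so that \Cref{claim:thresholdUtility} reads
\[
\E_\dist[U_{g_b}^\alpha] \;\geq\; \big((1+\lambda)\alpha - k\lambda\big)\,T \;+\; (1-\alpha)\,D(T),
\]
where $T$ is the threshold with $\Pr[V^*\ge T] = \alpha$.

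Next I would prove two pointwise lower bounds on $D(T)$. The first is the classical Samuel--Cahn estimate: since $(V^* - T)^+ = \max_t(\norm{\sigma^{(t)}}_1 - T)^+ \le \sum_t (\norm{\sigma^{(t)}}_1 - T)^+$, taking expectations gives $D(T) \ge \E_\dist[(V^* - T)^+] \ge M - T$. The second exploits the multi-dimensional structure: for any $x\in\R^k_+$ and $T\ge 0$ one has $(\norm{x}_1 - T)^+ \ge \sum_{j=1}^k (x_j - T)^+$ (restrict the right-hand sum to coordinates exceeding $T$ and bound $\norm{x}_1 - T$ from below accordingly), so $D(T) \ge \sum_{j=1}^k \sum_{t} \E\big[(\sigma_j^{(t)} - T)^+\big] \ge \sum_{j=1}^k \E\big[(S_j^* - T)^+\big] \ge \Sigma - kT$.

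Then I would substitute each bound into the displayed inequality and pick $\alpha$ to annihilate the $T$-term. With $D(T)\ge M-T$, the coefficient of $T$ is $(2+\lambda)\alpha - k\lambda - 1$, which is $0$ at $\alpha_1 = \frac{1+k\lambda}{2+\lambda}$; here $\lambda(k-1)<1$ is exactly what gives $\alpha_1\in(0,1)$, and one gets $\E_\dist[U_{g_b}^{\alpha_1}] \ge (1-\alpha_1)M = \frac{1-\lambda(k-1)}{2+\lambda}\,M$. With $D(T)\ge \Sigma - kT$, the coefficient of $T$ is $(1+\lambda+k)\alpha - k(1+\lambda)$, which is $0$ at $\alpha_2 = \frac{k(1+\lambda)}{1+\lambda+k}$, again in $(0,1)$ precisely because $\lambda(k-1)<1$, and one gets $\E_\dist[U_{g_b}^{\alpha_2}] \ge (1-\alpha_2)\Sigma = \frac{1-\lambda(k-1)}{1+\lambda+k}\,\Sigma = (1-\lambda(k-1))\cdot\frac{\gamma}{1+\lambda+k}\cdot M$. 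Dividing by $M = \E_\dist[U_{p_r}]$ and taking $\alpha = \alpha_2$ when $\frac{\gamma}{1+\lambda+k}\ge\frac{1}{2+\lambda}$ and $\alpha = \alpha_1$ otherwise yields the claimed $\max$.

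The argument is short, so there is no serious obstacle; the one step requiring a little thought is the second bound on $D(T)$ — identifying $(\norm{x}_1 - T)^+ \ge \sum_j (x_j - T)^+$ as the right per-coordinate comparison (rather than a lossy factor-$1/k$ one) is what makes the constants come out exactly, and it is also why two distinct choices of $\alpha$ are needed instead of a single universal one. A minor technicality is that a threshold with $\Pr[V^*\ge T]=\alpha$ exactly may fail to exist when the law of $V^*$ has atoms; this is handled in the standard way (a randomized threshold, or the appropriate one-sided limit), just as in the one-dimensional treatment of \citet{kleinberg2021optimal}.
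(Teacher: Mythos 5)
Your proof is correct and follows essentially the same route as the paper: you start from \Cref{claim:thresholdUtility}, establish the same two lower bounds on $\sum_t \E[(\norm{\sigma^{(t)}}_1 - T)^+]$ (the Samuel--Cahn bound via $V^*$ and the per-dimension bound via $\sum_j S_j^*$), and choose the same two thresholds $\alpha_1 = \tfrac{1+k\lambda}{2+\lambda}$ and $\alpha_2 = \tfrac{k(1+\lambda)}{1+\lambda+k}$ to make the $T$-dependence vanish. The only cosmetic difference is that you make the pointwise inequality $(\norm{x}_1 - T)^+ \geq \sum_j (x_j - T)^+$ explicit (the paper uses it implicitly in its Equation~\eqref{eq:sumToMaxDim2}) and you phrase the choice of $\alpha$ as annihilating the coefficient of $T$ rather than as equating the two coefficients so that $T$ cancels, but these are the same calculation.
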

\begin{proof}[Proof Sketch]
The key step in this lemma is to bound the term $\displaystyle \sum_{t \in [n]} \E_{\sigma^{(t)} \sim \dist_t} \sq*{(\norm{\sigma^{(t)}}_1 - T)^+}$ from \Cref{claim:thresholdUtility}. 

Let us call $(\norm{\sigma^{(t)}}_1 - T)^+$ the surplus from the $t^{th}$ candidate. Note that $\displaystyle \sum_{t \in [n]} \E_{\norm{\sigma^{(t)}}_1 \sim \dist_t} \sq*{(\norm{\sigma^{(t)}}_1 - T)^+}$ is the sum over the expected surplus from each candidate. Thus, it must be at least the expected surplus from the candidate with the largest overall value. Moreover, it must be at least the sum over the expected surplus from candidates that have the largest value on each individual dimension. Formally, the following two inequalities always hold:  
    \begin{align} \label{eq:sumToMax}
        \sum_{t \in [n]} \E_{\sigma^{(t)} \sim \dist_t} \sq*{(\norm{\sigma^{(t)}}_1 - T)^+} \geq \E_{\dist}\sq*{(V^* - T)^+} \geq \E_{\dist}\sq*{V^* - T},
    \end{align}
    and 
    \begin{align} \label{eq:sumToMaxDim}
        \sum_{t \in [n]} \E_{\sigma^{(t)} \sim \dist_t} \sq*{(\norm{\sigma^{(t)}}_1 - T)^+} \geq \sum_{j=1}^k \E_{\dist}\sq*{(S_j^* - T)^+} \geq  \E_{\dist}\sq*{\sum_{j=1}^k S_j^*} - k\cdot T \geq \gamma \cdot \E_{\dist}\sq*{V^*} - k \cdot T. 
    \end{align}
\Cref{eq:sumToMax} and \Cref{eq:sumToMaxDim} can then be used to bound the utility of the biased gambler, and in particular its relation to the utility of the rational prophet. 
\end{proof}

We are now ready to prove \Cref{thm:LBOfflineBound} and \Cref{thm:LBOnlineBound}. 

\begin{proof}{(of \Cref{thm:LBOfflineBound})}
        The first line of the theorem follows directly by \Cref{lem:LBOfflineBound} and the fact that $\E_{\dist}\sq*{U^*_{g_b}} \geq \E_{\dist}\sq*{U_{g_b}^\alpha}$. (Because $\E_{\dist}\sq*{U^*_{g_b}}$ is defined as the expected utility from a biased gambler that uses the utility optimal online algorithm,  $\E_{\dist}\sq*{U^*_{g_b}}$ is at least the expected utility from a biased gambler using some specific online algorithm). 
    
    Now we will derive the second line of the theorem. From the first line of the theorem we know that for any prior distribution $\dist$, 
    \begin{align*}
        \frac{\E_{\dist}\sq*{U^*_{g_b}}}{\E_{\dist}\sq*{U_{p_r}}} \geq (1 - \lambda (k-1))\cdot \max \set*{\frac{\gamma}{1 + \lambda + k}, \frac{1}{2 + \lambda}} \geq \frac{1 - \lambda (k-1))}{2 + \lambda}. 
    \end{align*}
    Hence, for any prior distribution $\dist$, the prophet utilit ratio $\frac{\E_{\dist}\sq*{U_{p_r}}}{\E_{\dist}\sq*{U^*_{g_b}}}$ is at most $ \frac{2 + \lambda}{1 - \lambda (k-1))}$. We conclude that the prophet competitive ratio $\max \limits_{\dist} \frac{\E_{\dist}\sq*{U_{p_r}}}{\E_{\dist}\sq*{U^*_{g_b}}}$ is at most $ \frac{2 + \lambda}{1 - \lambda (k-1))}$. 
\end{proof}

\begin{proof}{(of \Cref{thm:LBOnlineBound})}
    Let $\beta = \frac{\E_{\dist}\sq*{U_{p_r}}}{\E_{\dist} \sq*{U^*_{g_b}}}$ and let $\Delta = \frac{\E_{\dist}\sq*{U_{p_r}}}{\E_{\dist} \sq*{U^*_{g_r}}}$. Since $\frac{\E_{\dist}\sq*{U^*_{g_r}}}{\E_{\dist}\sq*{U^*_{g_b}}} = \frac{\beta}{\Delta}$, proving our theorem is equivalent to proving that $\frac{\Delta}{\beta} \geq \frac{(1 - \lambda (k-1))}{1 + \lambda}$, and we will do exactly this for the rest of the proof. 

    Firstly, let $\gamma = \frac{\E_{\dist}[\sum_{j=1}^k S_j^*]}{\E_{\dist}[V^*]}$, then by \Cref{thm:LBOfflineBound} we know that $\frac{1}{\beta} = \frac{\E_{\dist}\sq*{U^*_{g_b}}}{\E_{\dist}\sq*{U_{p_r}}} \geq (1 - \lambda (k-1)) \cdot \frac{\gamma}{1 + \lambda + k}$. 
    
    Next, because $\E_{\dist}\sq*{U^*_{g_b}}$ represents the highest expected utility a biased gambler can gain from using an online algorithm, $\E_{\dist}\sq*{U^*_{g_b}}$ must be at least the expected utility a biased gambler can gain from using the value-optimal online algorithm, which is equal to the optimal utility of a rational gambler minus the loss aversion the biased gambler experience from using the value-optimal online algorithm. Recall that $S_j^*$ is defined as the maximum value on the $j^{th}$ dimension among all candidates, thus the value of super candidate the biased gambler experiences loss aversion against must be at most $\sum_{j = 1}^k S_j^*$. Thus 
    \begin{align*}
        \E_{\dist}\sq*{U^*_{g_b}} \geq \E_{\dist} \sq*{U^*_{g_r} - \lambda \cdot \paren*{\sum_{j=1}^k S_j^* - U^*_{g_r}}} = (1 + \lambda) \cdot \E_{\dist} \sq*{U^*_{g_r}} - \lambda \cdot \E_{\dist}\sq*{\sum_{j=1}^k S_j^*}. 
    \end{align*}
    By the definition of $\Delta$, we know that $\E_{\dist} \sq*{U^*_{g_r}} = \frac{1}{\Delta} \cdot \E_{\dist}\sq*{U_{p_r}}$. Similarly, by the definition of $\gamma$,  $\E_{\dist}\sq*{\sum_{j=1}^k S_j^*} = \gamma \cdot \E_{\dist}[V^*] = \gamma \cdot \E_{\dist}\sq*{U_{p_r}}$. Therefore combining these two statements with the previous inequality we know the following: 
    \begin{align*}
        \E_{\dist}\sq*{U^*_{g_b}} \geq \paren*{\frac{1 + \lambda}{\Delta} - \lambda \cdot \gamma} \cdot \E_{\dist}\sq*{U_{p_r}}. 
    \end{align*}
    Re-expressing algebraically, we have $\frac{1}{\beta} = \frac{\E_{\dist}\sq*{U^*_{g_b}}}{\E_{\dist}\sq*{U_{p_r}}} \geq \paren*{\frac{1 + \lambda}{\Delta} - \lambda \cdot \gamma} = \frac{1 + \lambda -\lambda \cdot \gamma \cdot \Delta}{\Delta}$. 

    Now we have two bounds on $\frac{1}{\beta}$: we know that $\frac{1}{\beta} \geq (1 - \lambda (k-1)) \cdot \frac{\gamma}{1 + \lambda + k}$ and that $\frac{1}{\beta} \geq \frac{1 + \lambda -\lambda \cdot \gamma \cdot \Delta}{\Delta}$. By these two bounds, we conclude that 
    \begin{align*}
        \frac{\Delta}{\beta}  = \Delta \cdot \frac{1}{\beta} \geq \max \set*{(1 - \lambda (k-1)) \cdot \frac{\Delta \cdot \gamma}{1 + \lambda + k}, \enspace 1 + \lambda -\lambda \cdot \gamma \cdot \Delta}. 
    \end{align*}
    Notice that inside the max bracket, the first term monotonically increases with $\Delta$, while the second term monotonically decreases with $\Delta$, therefore the minimum of the max of two terms occurs when the two terms are equal, namely, when $(1 - \lambda (k-1)) \cdot \frac{\Delta \cdot \gamma}{1 + \lambda + k} = 1 + \lambda -\lambda \cdot \gamma \cdot \Delta$. This is achieved when $\Delta = \frac{1 + \lambda + k}{\gamma \cdot (1 + \lambda)}$, at which point
    \begin{align*}
        (1 - \lambda (k-1)) \cdot \frac{\Delta \cdot \gamma}{1 + \lambda + k} = 1 + \lambda -\lambda \cdot \gamma \cdot \Delta = \frac{1 - \lambda(k-1)}{1 + \lambda}. 
    \end{align*}
    We thus conclude that it is always the case that $\frac{\E_{\dist}\sq*{U^*_{g_b}}}{\E_{\dist}\sq*{U^*_{g_r}}} = \frac{\Delta}{\beta} \geq \frac{1 - \lambda(k-1)}{1 + \lambda}$ regardless of the value of $\Delta$. 
\end{proof}

We remain to discuss  \Cref{thm:UBBoundedUtility}, which we prove by constructing a worst case prior distribution. Our construction combines a generalization of \Cref{example:motive} with the canonical two item example showing a $1/2$-competitive ratio for classical prophet inequality. In our construction, all but the last candidate has deterministic values constructed similarly to those in \Cref{example:motive} (but the value of the candidates grow much slower due to the fact that the \fbias{} is less than one). The last candidate has a high value with a very small probability, which is similar to constructions for classical prophet inequality. 
\begin{proof}[Proof Sketch]{(of \Cref{thm:UBBoundedUtility})} Let the first $n-1$ candidates have the following deterministic value: 
    \begin{center}
        \begin{tabular}{ l l l l }
            $(1, 0, \cdots, 0)$, &$(0, 1, 0, \cdots, 0)$, &$\cdots$, &$(0, \cdots, 0, 1)$,\\
            $(1 + \beta, 0, \cdots, 0)$, &$(0, 1 + \beta, 0, \cdots, 0)$, &$\cdots$, &$(0, \cdots, 0, 1 + \beta)$,\\
            $(1 + \beta + \beta^2, 0, \cdots, 0)$, &$(0, 1 + \beta + \beta^2, 0, \cdots, 0)$, &$\cdots$, &$(0, \cdots, 0, 1 + \beta + \beta^2)$,\\
            &$\cdots$ & &\\
            $(\sum_{j = 0}^{\ceil{n/k} - 1} \beta^{j}, 0 , \cdots, 0)$, &$(0, \sum_{j = 0}^{\ceil{n/k} - 1} \beta^{j}, 0, \cdots, 0)$, &$\cdots$, &$(0, \cdots, 0, \sum_{j = 0}^{\ceil{n/k} - 1} \beta^{j}).$  
        \end{tabular}
    \end{center}
    Then, the last randomized candidate is equal to $\paren*{(1  - \epsilon)(1 + \lambda) \cdot \sum_{j = 0}^{\ceil{n/k} - 1} \beta^{j}/\epsilon, 0, \cdots, 0}$ with probability $\epsilon$ and is equal to $(0, 0, \cdots, 0)$ with probability $1 - \epsilon$. One can then verify that the biased gambler can only get utility at most $1$, since by the time the biased gambler reach the last randomized candidate, their imaginary super candidate is too strong for the biased gambler to gain positive expected utility from the randomized candidate. However, the rational prophet can simply select the maximum between the best deterministic option and the randomized option, getting utility that approaches $\frac{2+\lambda}{1 - \lambda(k-1)}$ as $\epsilon$ approaches $0$ and $n$ approaches $\infty$. Similarly, the rational gambler can simply always select the randomized option and get utility that approaches and $\frac{1+\lambda}{1-\lambda(k-1)}$. 
\end{proof}

\subsection{No Improvement Under Relaxed Arrival Order} \label{result:IID}
We now show that most of our results for adversarial arrival order can be used to derive analogous results for the case where the prior distribution is i.i.d. 
 
Specifically, in the next claim (\Cref{claim:representation}), we will show that removing repeated candidate values from a sequence does not affect the biased gambler's optimal utility. We will then use the claim to establish in a subsequent lemma (\Cref{lem:detToiid}) that for each deterministic distribution (i.e. each distribution such that values drawn from it are equal to some sequence $\sigma$ with probability $1$), we can find an i.i.d prior distribution $\dist$ with a similar realized sequence (after deleting duplicates), and hence similar prophet and online utility ratio. As we have discussed in \Cref{example} and \Cref{result:adversarial}, the construction of the worst case instances for the adversarial arrival order setting is either a fully deterministic sequence, or with few randomized options. Consequently, \Cref{lem:detToiid} can be leveraged to translate all lower bounds on the utility ratios in adversarial setting into lower bounds in the i.i.d setting. 

\begin{definition}
    A sequence $\sigma$ is called \textit{succinct} if there are no two identical candidates in the sequence and no candidate's value is a zero vector. 
\end{definition}
\begin{definition}
    The \textit{representation} $r(\sigma)$ of a sequence $\sigma$ is the sequence of unique elements in $\sigma$ ordered in the  ascending order of their first occurance in $\sigma$. (e.g. r(1, 2, 1, 2, 3) = 1, 2, 3.)
\end{definition}

\begin{claim} \label{claim:representation}
    $U_{p_r}(\sigma) = U_{p_r}(r(\sigma))$ and $U_{g_b}(\sigma) = U_{g_b}(r(\sigma))$.   
\end{claim}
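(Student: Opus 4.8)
The plan is to treat the two equalities separately: the rational-prophet identity is immediate, and the biased-gambler identity reduces to a short monotonicity observation together with some bookkeeping about first occurrences.

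For $U_{p_r}$, recall $U_{p_r}(\sigma) = \max_{t\in[n]}\norm{\sigma^{(t)}}_1$, which depends only on the \emph{set} of vectors appearing in $\sigma$. Since $r(\sigma)$ lists exactly the distinct vectors of $\sigma$ (each once, and nothing else), that set is unchanged, so $U_{p_r}(\sigma)=U_{p_r}(r(\sigma))$.

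For $U_{g_b}$, I read $U_{g_b}(\sigma)$ as the optimal biased-gambler utility on the deterministic instance $\sigma$, i.e. $\max_{t\in[n]}U_{g_b}(\sigma,t)$ (allowing the ``select nothing'' option of utility $0$ only changes both sides by an outer $\max\{0,\cdot\}$ and so is harmless). Write $U_{g_b}(\sigma,t)=(1+\lambda)\norm{\sigma^{(t)}}_1-\lambda\norm{\vec{s}^{(t)}}_1$. The first key step is: $\norm{\vec{s}^{(t)}}_1$ is nondecreasing in $t$, since each coordinate $\vec{s}^{(t)}_j=\max_{w\le t}\sigma^{(w)}_j$ is. Hence, among all positions $t$ at which a fixed vector $v$ occurs in $\sigma$, $U_{g_b}(\sigma,t)$ is maximized at the \emph{first} occurrence $t_1(v)$, so $U_{g_b}(\sigma)=\max_{v}U_{g_b}(\sigma,t_1(v))$ with the max over distinct vectors $v$ of $\sigma$. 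The second key step identifies these first-occurrence utilities with the per-position utilities of $r(\sigma)$: if $m(v)$ is the position of $v$ in $r(\sigma)$, then because $r$ preserves the order of first occurrences, the length-$m(v)$ prefix of $r(\sigma)$ is exactly the list of distinct vectors appearing in the length-$t_1(v)$ prefix of $\sigma$. Consequently $\max_{w\le m(v)} r(\sigma)^{(w)}_j=\max_{w\le t_1(v)}\sigma^{(w)}_j$ for every $j$, i.e. the super candidate at position $m(v)$ in $r(\sigma)$ equals $\vec{s}^{(t_1(v))}$ in $\sigma$, while $r(\sigma)^{(m(v))}=v=\sigma^{(t_1(v))}$; plugging into the utility formula gives $U_{g_b}(r(\sigma),m(v))=U_{g_b}(\sigma,t_1(v))$. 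Since $v\mapsto m(v)$ is a bijection between distinct vectors of $\sigma$ and positions of $r(\sigma)$, taking the max over $v$ yields $U_{g_b}(\sigma)=\max_v U_{g_b}(\sigma,t_1(v))=\max_v U_{g_b}(r(\sigma),m(v))=U_{g_b}(r(\sigma))$.

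The only part that needs care — the ``main obstacle'', though it is minor — is the bookkeeping in the second step: one must verify that deleting a repeated vector never removes a coordinatewise maximum that a later copy alone contributed. This cannot happen precisely because any later copy has the same vector as the retained first copy, so every coordinatewise max over any prefix is preserved; everything else is routine.
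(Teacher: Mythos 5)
Your proof is correct and rests on the same two observations as the paper's: the super candidate is coordinatewise nondecreasing in $t$ (so among repeated copies of a vector, the first occurrence is best), and deleting later duplicates leaves every prefix coordinatewise maximum unchanged. The paper organizes this slightly differently — it iteratively zeroes out all but the first copy of each value and shows $U_{g_b}$ is preserved at each step, rather than setting up the direct position-to-first-occurrence bijection you use — but the underlying argument is the same.
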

\begin{proof}
    It is clear that $U_{p_r}(\sigma) = U_{p_r}(r(\sigma))$ since $r(\sigma)$ contains all the unique elements in $\sigma$. Let $a$ be an arbitrary candidate in $\sigma$, let $I_a$ be the set of all indices $t$ where $\sigma^{(t)} = a$ and let $t_a$ be the minimum index in $I_a$. Let $\sigma'$ be the same sequence as $\sigma$ except for $\norm{\sigma^{(t)}}_1 = 0$ for all $t \in I_a \setminus \{t_a\}$. We will show that $U_{g_b}(\sigma') = U_{g_b}(\sigma)$. 
    
    Firstly, for any index $t$, the super candidate $s^{(t)}$ is only affected by the set of unique candidates at indices $< t$. Therefore $s^{(t)}$ takes the same value in $\sigma$ and $\sigma'$ for any $t \not \in I_a \setminus \{t_a\}$. As a result, $U_{g_b}(\sigma, t) = U_{g_b}(\sigma', t)$ for any $t \not \in I_a \setminus \{t_a\}$. 
    
    Also, when seeing the same candidate value, having seen more candidates in the past can only decrease the biased gambler's utility. Thus for any $t \in I_a$, $U_{g_b}(\sigma, t_a) \geq U_{g_b}(\sigma, t)$. We conclude that $U_{g_b}(\sigma) = \max_{t \in [n]} U_{g_b}(\sigma, t)= \max_{t \in [n]}(\sigma', t) = U_{g_b}(\sigma')$. 

    By applying the above construction repeatedly, we can arrive at a sequence $\sigma^*$ where the only non zero entries are the first occurrence of each candidate in $\sigma$. $r(\sigma)$ is simply the result of removing candidates that have value $0$ from $\sigma^*$. Clearly, this operation does not affect the biased gambler's maximum utility. Thus $U_{g_b}(r(\sigma)) = U_{g_b}(\sigma^*) = U_{g_b}(\sigma)$.  

\end{proof}

\begin{lemma} \label{lem:detToiid}
    For any $\epsilon \in (0, 1)$, any succinct sequence $\sigma$ such that its length $m = |\sigma|$ is greater than $1$, there exists a distribution $\dist$ where all candidates are i.i.d with support on the set $\{\sigma^{(1)}, \cdots, \sigma^{(m)}\}$ such that the prophet utility ratio $\frac{\E_{\dist}\sq*{U_{p_r}}} {\E_{\dist} \sq*{U^*_{g_b}}} \geq (1 - \epsilon) \cdot \frac{U_{p_r}(\sigma)}{U_{g_b}(\sigma)}$.
\end{lemma}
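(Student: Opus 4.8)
The plan is to build an i.i.d.\ instance whose realized sequence, with probability close to $1$, has representation exactly $\sigma$, and then transfer utilities via \Cref{claim:representation}. Since $\sigma$ is succinct, its $m$ entries are distinct and nonzero, so $\{\sigma^{(1)},\dots,\sigma^{(m)}\}$ is a set of size $m$. I would fix a small parameter $\rho>0$ and a large number of candidates $n$ (both chosen at the end), and let $\dist$ consist of $n$ i.i.d.\ draws from the distribution assigning $\sigma^{(i)}$ probability $p_i$ proportional to $\rho^{\,i}$, normalized so $\sum_{i\in[m]}p_i=1$. The point of the geometric weights is that $\sigma^{(1)}$ is far more likely than $\sigma^{(2)}$, which is far more likely than $\sigma^{(3)}$, and so on, so the values tend to first appear in index order $1,2,\dots,m$.

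First I would control the \emph{good event} $G$: ``all $m$ values appear among the $n$ draws, and their first appearances occur in the order $\sigma^{(1)},\dots,\sigma^{(m)}$''. A value $\sigma^{(i)}$ is absent with probability $(1-p_i)^n\le(1-p_m)^n$, so a union bound makes the probability that some value is missing at most $m(1-p_m)^n$, which I drive below $\epsilon'/2$ by taking $n$ large. For the ordering it suffices that for every $i$ the first appearance of $\sigma^{(i)}$ precedes that of $\sigma^{(i+1)}$; inspecting the first draw lying in $\{\sigma^{(i)},\sigma^{(i+1)}\}$, this fails with probability at most $p_{i+1}/(p_i+p_{i+1})\le p_{i+1}/p_i=\rho$, so another union bound gives ordering-failure probability at most $(m-1)\rho$, which I make $\le\epsilon'/2$ by choosing $\rho$ small. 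Hence $\Pr[\overline G]\le\epsilon'$. On $G$ the realized sequence $\tau$ satisfies $r(\tau)=\sigma$, so \Cref{claim:representation} gives $U_{p_r}(\tau)=U_{p_r}(\sigma)$ and $\max_{t}U_{g_b}(\tau,t)=U_{g_b}(\sigma)$.

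Then I would assemble the two bounds. For the numerator, $U_{p_r}(\tau)=\max_t\|\tau^{(t)}\|_1\ge 0$ always and equals $U_{p_r}(\sigma)$ on $G$, so $\E_{\dist}[U_{p_r}]\ge\Pr[G]\cdot U_{p_r}(\sigma)\ge(1-\epsilon')\,U_{p_r}(\sigma)$. For the denominator, on any realization the online biased gambler cannot beat the best stopping time chosen in hindsight, so its realized utility is at most $\max\{0,\max_t U_{g_b}(\tau,t)\}$; on $G$ this is $U_{g_b}(\sigma)$ (which is $\ge\|\sigma^{(1)}\|_1>0$ by succinctness), and on $\overline G$ it is at most $V:=\max_{i\in[m]}\|\sigma^{(i)}\|_1$ (regret is nonnegative, so $U_{g_b}(\tau,t)\le\|\tau^{(t)}\|_1\le V$). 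Hence $\E_{\dist}[U^*_{g_b}]\le U_{g_b}(\sigma)+\epsilon' V$, so $\frac{\E_{\dist}[U_{p_r}]}{\E_{\dist}[U^*_{g_b}]}\ge\frac{(1-\epsilon')U_{p_r}(\sigma)}{U_{g_b}(\sigma)+\epsilon' V}$. A one-line computation shows the right side is at least $(1-\epsilon)\frac{U_{p_r}(\sigma)}{U_{g_b}(\sigma)}$ as soon as $\epsilon'\le\frac{\epsilon\,U_{g_b}(\sigma)}{U_{g_b}(\sigma)+(1-\epsilon)V}$, a fixed positive number; so I fix such an $\epsilon'$ first, then pick $\rho$ and $n$ to get $\Pr[\overline G]\le\epsilon'$.

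The delicate point is the upper bound on $\E_{\dist}[U^*_{g_b}]$: a priori an online gambler who does not know the realized sequence might exploit the randomness (e.g.\ by stopping on an early high-value copy before the super-candidate grows), so one must argue it cannot exceed $U_{g_b}(\sigma)$ except on the low-probability event $\overline G$. This is handled by the pathwise domination of online utility by hindsight-optimal stopping utility, combined with \Cref{claim:representation} on $G$ and the crude bound $V$ on $\overline G$, the latter harmless because $\Pr[\overline G]\le\epsilon'$ can be made as small as needed. The other thing to watch is the order of quantifiers — choose $\epsilon'$ as a function of $\epsilon$ and $\sigma$, then $\rho$, then $n$.
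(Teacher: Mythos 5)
Your proposal is correct and takes essentially the same route as the paper: both use an i.i.d.\ law with geometrically decaying weights on $\{\sigma^{(1)},\dots,\sigma^{(m)}\}$, a union bound over the two failure modes (a value missing, or two first appearances out of order) to make the bad event rare, \Cref{claim:representation} to identify the utilities on the good event, and the pathwise bound that the online gambler's realized utility never exceeds the hindsight optimum (hence at most $U_{g_b}(\sigma)$ on the good event and at most $\max_i\|\sigma^{(i)}\|_1=U_{p_r}(\sigma)$ otherwise). The only cosmetic differences are that you bound numerator and denominator separately whereas the paper bounds the ratio directly, and the paper pins down explicit parameter choices $x=m^{-\alpha}$ and $n=m^{\alpha(m-1)}\log^\alpha m$ because those quantitative formulas are reused downstream in \Cref{claim:iidExpUtilityGap} and \Cref{claim:iidlogUtilityGap}.
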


\begin{proof}
    Given an $x \in (0, 1)$ that we will set later, consider a distribution $\mathcal{D}$ that takes value $\sigma_i$ ($i^{th}$ candidate in the sequence $\sigma$) with probability $\frac{x^{i-1}(1 - x)}{1 - x^m}$ for all $i \in [m]$. First we will verify that $\mathcal{D}$ is a valid distribution (namely, the probability sums up to be 1): 
    \begin{align*}
        \sum_{i=1}^{m} \frac{x^{i-1}(1 - x)}{1 - x^m} = \frac{1 - x}{1 - x^m} \sum_{i=1}^m x^{i-1} = \frac{1 - x^m}{1 - x} \cdot \frac{1-x}{1 - x^m} = 1. 
    \end{align*} 
    Now, let $\dist = \times_{i \in [n]}\mathcal{D}$, where $n = m^{\alpha (m-1)} \cdot \log^{\alpha}(m)$ for some $\alpha > 1$ that we will set later. Let $\wsigma$ be a sequence drawn from $\dist$, from \Cref{claim:representation} we know that as long as $r(\wsigma) = \sigma$, then $U_{p_r}(\wsigma) = U_{p_r}(\sigma)$ and $U_{g_b}(\wsigma) = U_{g_b}(\sigma)$. 
    Furthermore, by construction the maximum value in a sequence $\wsigma$ that is drawn from $\dist$ is at most the maximum value in $\sigma$, consequently,  $U_{g_b}(\wsigma) \leq U_{p_r}(\wsigma) \leq U_{p_r}(\sigma)$. 
    Therefore 
    \begin{align*}
        \frac{\E_{\dist} \sq*{U^*_{g_b}}}{\E_{\dist}\sq*{U_{p_r}}} &\leq \frac{\Pr_{\wsigma \sim \dist} \sq*{r(\wsigma) = \sigma} \cdot U_{g_b}(\sigma) + \Pr_{\wsigma \sim \dist} \sq*{r(\wsigma) \neq \sigma}\cdot U_{p_r}(\sigma)}{\Pr_{\wsigma \sim \dist} \sq*{r(\wsigma) = \sigma} \cdot U_{p_r}(\sigma)} \\
        & \leq \frac{U_{g_b}(\sigma)}{U_{p_r}(\sigma)} + \frac{\Pr_{\wsigma \sim \dist} \sq*{r(\wsigma) \neq \sigma}}{\Pr_{\wsigma \sim \dist} \sq*{r(\wsigma) = \sigma}}.
    \end{align*} 
    Now we want to bound the probability that $r(\wsigma) \neq \sigma$. There are two ways for this to happen, either $\wsigma$ does not contain all candidates in $\sigma$, or $r(\wsigma)$ is ordered differently than $\sigma$. Formally, by union bound, 
    \begin{align*}
        \Pr_{\wsigma \sim \dist} \sq*{r(\wsigma) \neq \sigma} \leq  \sum_{i \in [m]} \Pr_{\wsigma \sim \dist} \sq*{\sigma_i \not \in r(\wsigma)} + \sum_{i \in [m-1]} \Pr_{\wsigma \sim \dist} \cond{\sigma_{i+1} \text{ occur before } \sigma_{i} \text{ in } r(\wsigma)}{\sigma_i, \sigma_{i+1} \in r(\wsigma)}. 
    \end{align*}
    Firstly, we bound the probability that $\wsigma$ does not contain $\sigma_i$ as follows 
    \begin{align*}
        \Pr_{\wsigma \sim \dist} \sq*{\sigma_i \not \in r(\wsigma)} = \paren*{1 - \frac{x^{i-1} (1-x)}{1 - x^m}}^n \leq \paren*{1 - \frac{x^{m-1} (1-x)}{1 - x^m}}^n \leq \paren*{1 - x^{m-1}}^n. 
    \end{align*}
    Next, let $r_{i,i+1}(\wsigma)$ denote the smallest index $j$ such that $\wsigma_{j} = \sigma_i$ or $\sigma_{i+1}$. 
    \begin{align*}
        \Pr_{\wsigma \sim \dist} \cond{\sigma_{i+1} \text{ occur before } \sigma_{i} \text{ in } r(\wsigma)}{\sigma_i, \sigma_{i+1} \in r(\wsigma)} &= \Pr_{\wsigma \sim \dist} \cond{\wsigma_{r_{i,i+1}(\wsigma)} = \sigma_{i+1}}{\sigma_i, \sigma_{i+1} \in r(\wsigma)} \\
        &= \Pr_{v \sim \mathcal{D}}\cond{v = \sigma_{i+1}}{v \in \{\sigma_{i+1}, \sigma_{i}\}} \\
        &= \frac{\frac{x^{i-1}(1 - x)}{1 - x^m}}{\frac{x^{i-1}(1 - x)}{1 - x^m} + \frac{x^{i}(1 - x)}{1 - x^m}} = \frac{x}{x+1}.
    \end{align*}
    Now we set $x = m^{-\alpha}$, and conclude that 
    \begin{align*}
        \Pr_{\wsigma \sim \dist} \sq*{r(\wsigma) \neq \sigma} &\leq m \cdot \paren*{1 - x^{m-1}}^n + (m-1) \cdot \frac{x}{x+1}\\
        &\leq m \cdot \paren*{1 - m^{-\alpha \cdot (m-1)}}^{m^{\alpha (m-1)} \cdot \log^{\alpha}(m)} +  (m-1) \cdot m^{-\alpha} \\
        & \leq m \cdot m^{-\alpha} + (m - 1) \cdot m ^{-\alpha}\\
        & \leq m^{- (\alpha - 1)} + m^{- (\alpha - 1)}. 
    \end{align*}
    By choosing $\alpha = \log_{m} \paren*{\frac{U_{p_r}(\sigma)}{\epsilon \cdot U_{g_b}(\sigma)}} + 2$, the above entity is $< \epsilon \cdot \frac{U_{g_b}(\sigma)}{U_{p_r}(\sigma)}$, thus 
    \begin{align*}
        \frac{\E_{\dist} \sq*{U^*_{g_b}}}{\E_{\dist}\sq*{U_{p_r}}} &\leq \frac{U_{g_b}(\sigma)}{U_{p_r}(\sigma)} + \frac{\Pr_{\wsigma \sim \dist} \sq*{r(\wsigma) \neq \sigma}}{\Pr_{\wsigma \sim \dist} \sq*{r(\wsigma) = \sigma}} \leq \frac{U_{g_b}(\sigma)}{U_{p_r}(\sigma)} + \frac{\epsilon \cdot \frac{U_{g_b}(\sigma)}{U_{p_r}(\sigma)}}{1 - \epsilon \cdot \frac{U_{g_b}(\sigma)}{U_{p_r}(\sigma)}}.
    \end{align*} 
    Since $\frac{U_{g_b}(\sigma)}{U_{p_r}(\sigma)}$ is at most $1$, we conclude that $\frac{\E_{\dist} \sq*{U^*_{g_b}}}{\E_{\dist}\sq*{U_{p_r}}} \leq \paren*{1 + \frac{\epsilon}{1 - \epsilon}} \cdot \frac{U_{g_b}(\sigma)}{U_{p_r}(\sigma)}$. Consequently, $$\frac{\E_{\dist}\sq*{U_{p_r}}}{\E_{\dist} \sq*{U^*_{g_b}}} \geq (1 - \epsilon) \cdot \frac{U_{p_r}(\sigma)}{U_{g_b}(\sigma)}.$$
\end{proof}

We can now use \Cref{lem:detToiid} to prove that when the candidate values are i.i.d, it is also the case that when the \fbias{} $\lambda (k-1) \geq 1$, prophet and online utility ratios can get progressively worse as the number of candidates $n$ increases, and as a result, both the prophet and online competitive ratio are equal to $\infty$.

\begin{restatable}{proposition}{claimIIDExpUtilityGap} \label{claim:iidExpUtilityGap}
When the \fbias{} $\lambda \cdot (k-1) > 1$, for all $n \in \mathbb{N}^+$, there exists a prior distribution $\dist_0(n)$ where all candidates are i.i.d such that when $\dist = \dist_0(n)$, the \pratio{} $\frac{\E_{\dist}\sq*{U_{p_r}}}{\E_{\dist}\sq*{U^*_{g_b}}}$ and the \oratio{} $\frac{\E_{\dist}\sq*{U^*_{g_r}}} {\E_{\dist}\sq*{U^*_{g_b}}}$ are both equal to  
$\Omega \paren*{(\lambda(k-1))^{f(\lambda, k) \cdot \log^{\frac{1}{2}} n }}$, where $f(\lambda, k)$ is a function that only depends on $\lambda$ and $k$. 
\end{restatable}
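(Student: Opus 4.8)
The plan is to transfer the deterministic lower bound of \Cref{claim:expUtilityGap} into the i.i.d.\ world via \Cref{lem:detToiid}, and then to optimize the length of the deterministic instance against the number of i.i.d.\ candidates we are allowed to use. Write $\beta = \lambda(k-1)$, so $\beta > 1$ by hypothesis (in particular $k \ge 2$). Fix a target $n$; I will choose an instance length $m = m(n)$ below. Take $\sigma$ to be the deterministic sequence of length $m$ from the proof of \Cref{claim:expUtilityGap} (the one with $\ceil{m/k}$ ``rows'' of candidates that place $\beta^j$ on a single coordinate). Since $\beta > 1$, all entries of $\sigma$ are distinct and nonzero, so $\sigma$ is succinct, and by that proof $U_{g_b}(\sigma) = 1$ while $U_{p_r}(\sigma) = \beta^{\ceil{m/k}-1}$, so $U_{p_r}(\sigma)/U_{g_b}(\sigma) = \beta^{\Theta(m/k)}$.

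Next, apply \Cref{lem:detToiid} to $\sigma$ with, say, $\epsilon = \tfrac12$. It yields an i.i.d.\ distribution over $\{\sigma^{(1)},\dots,\sigma^{(m)}\}$ on $n'(m) = m^{\alpha(m-1)}\log^{\alpha} m$ candidates, with $\alpha = \log_m\!\big(2\,U_{p_r}(\sigma)/U_{g_b}(\sigma)\big) + 2 = \Theta\!\big(\tfrac{m\log\beta}{k\log m}\big)$, such that the \pratio{} is at least $(1-\epsilon)\,U_{p_r}(\sigma)/U_{g_b}(\sigma) = \tfrac12\,\beta^{\Theta(m/k)}$. If $n'(m) < n$, I would pad the distribution with extra i.i.d.\ draws from the same single-candidate distribution; adding draws only decreases each $\Pr[\sigma_i \notin r(\wsigma)]$ and leaves the ordering-flip probabilities in the proof of \Cref{lem:detToiid} untouched, so its conclusion is preserved. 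It therefore suffices to take $m = m(n)$ to be the largest integer with $n'(m) \le n$. Taking logarithms, $\log n'(m) = \alpha(m-1)\log m + \alpha\log\log m = \Theta(\alpha m \log m) = \Theta\!\big(m^2\log\beta/k\big)$, so this $m$ satisfies $m = \Theta\!\big(\sqrt{k\log n/\log\beta}\big)$, and hence the \pratio{} is at least
\[
\tfrac12\,\beta^{\Theta(m/k)} = \tfrac12\,\beta^{\Theta\!\left(\sqrt{\log n/(k\log\beta)}\right)} = \Omega\!\left(\beta^{\,f(\lambda,k)\sqrt{\log n}}\right), \qquad f(\lambda,k) = \Theta\!\left(\frac{1}{\sqrt{k\log(\lambda(k-1))}}\right),
\]
which depends only on $\lambda$ and $k$. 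For the \oratio{}, \Cref{claim:classical} gives $\E_{\dist}[U^*_{g_r}] \ge \tfrac12\,\E_{\dist}[U_{p_r}]$, so the \oratio{} is at least half the \pratio{}, hence also $\Omega\!\big(\beta^{f(\lambda,k)\sqrt{\log n}}\big)$ after absorbing the constant into $f$.

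The main obstacle will be the bookkeeping in inverting $n'(m)$: I need a clean two-sided estimate $\log n'(m) = \Theta(m^2 \log\beta / k)$ that is uniform enough to conclude both $n'(m) \le n$ for the chosen $m$ and $m = \Omega(\sqrt{k\log n/\log\beta})$, while $\alpha$ itself varies with $m$, and to check that the nested $\Theta(\cdot)$'s collapse into a single exponent $f(\lambda,k)$ of the stated form (in particular that no hidden dependence on $n$ sneaks in). Minor additional care is needed for $n$ too small for \Cref{lem:detToiid} to apply (where $m \le 1$ and the claimed bound is a constant, which holds since the \pratio{} is always at least $1$) and for the $O(k)$ slack between $m$ and $k\ceil{m/k}$ in the instance length.
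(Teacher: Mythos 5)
Your approach is the same as the paper's: apply \Cref{lem:detToiid} with $\epsilon=1/2$ to the sequence $\sigma(m)$ from \Cref{claim:expUtilityGap}, then invert the relation between $m$ and $n$. The padding device to cover all $n$ (rather than just the arithmetic progression of values $n'(m)$) is a sensible addition and your justification that extra i.i.d.\ draws only help in \Cref{lem:detToiid} is correct.

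However, the step you flag as ``the main obstacle'' is in fact a genuine gap, and it is exactly where the paper does extra work. You assert $\log n'(m) = \Theta(\alpha m\log m) = \Theta(m^2\log\beta/k)$, but the second equality is false in general: unfolding $\alpha = \log_m(2\beta^{m/k-1})+2$, you get $\alpha m \log m = \Theta\bigl(m^2\log\beta/k + m\log m\bigr)$, and the additive $m\log m$ term cannot be absorbed when $\log\beta/k$ is small (i.e.\ $\beta$ close to $1$ or $k$ large). In that regime the $m\log m$ term dominates, your inversion overshoots, and your claimed $m = \Theta\bigl(\sqrt{k\log n/\log\beta}\bigr)$ fails as a lower bound on the true largest admissible $m$. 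Consequently your proposed $f(\lambda,k)=\Theta\bigl(1/\sqrt{k\log\beta}\bigr)$ is too strong as $\beta\to 1^+$. The paper fixes this by writing $\log n \le 2\max\bigl(\tfrac{m^2}{k}\log\beta,\;4m\log m\bigr)$ and treating the two cases separately: Case~1 yields $m \ge \sqrt{k\log n/(2\log\beta)}$, Case~2 uses the crude bound $m\log m\le m^2$ to get $m\ge \sqrt{\log n}/(2\sqrt 2)$, and the final bound is the $\min$ of the two, giving $f(\lambda,k)$ with a $\min\bigl(\tfrac{\sqrt k}{\log^{1/2}\beta},\,\tfrac12\bigr)$-type structure. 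Your argument would go through once you add this case split; the statement only asks for \emph{some} $f$ depending on $\lambda,k$, so the proposition is still recoverable, but the specific inversion you wrote down does not hold uniformly.
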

\begin{proof}[Proof Sketch]
    Let $\sigma(m)$ be the adversarial sequence considered in \Cref{claim:expUtilityGap} of length $m$. Then $\frac{U_{p_r}(\sigma(m))}{U_{g_b}(\sigma(m))} = \Omega\paren*{(\lambda(k-1))^{\frac{m}{k}-1}}$. By construction $\sigma(m)$ contains distinct entries, and thus is succinct. Thus by \Cref{lem:detToiid}, we can construct a randomized and i.i.d distribution $\dist$ with $n = m^{\alpha (m-1)} \cdot \log^{\alpha}(m)$ candidates where $\alpha = \log_{m}\paren*{\frac{(\lambda(k-1))^{\frac{m}{k}-1}}{\epsilon }} + 2$ that has a similar prophet and online utility ratio (within $\epsilon$ multiplicative difference). A careful relation between $n$ and $m$ then proves the claim. 
\end{proof}

\begin{restatable}{proposition}{claimIIDlogUtilityGap} \label{claim:iidlogUtilityGap}
    When the \fbias{} $\lambda \cdot (k-1) = 1$, for all $n \in \mathbb{N}^+$, there exists a prior distribution $\dist_0(n)$ where all candidates are i.i.d such that when $\dist = \dist_0(n)$, the \pratio{} $\frac{\E_{\dist}\sq*{U_{p_r}}}{\E_{\dist}\sq*{U^*_{g_b}}}$ and the \oratio{} $\frac{\E_{\dist}\sq*{U^*_{g_r}}}{\E_{\dist}\sq*{U^*_{g_b}}}$ are both equal to $\Omega\paren*{
        \frac{\log n}{k \cdot \log \log n}
    }$.
\end{restatable}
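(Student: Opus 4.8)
The plan is to combine the deterministic linear-gap instance of \Cref{claim:linearUtilityGap} with the deterministic-to-i.i.d.\ reduction of \Cref{lem:detToiid}, exactly in the spirit of the proof sketch of \Cref{claim:iidExpUtilityGap}, and then carry out the asymptotic bookkeeping that converts the length of the deterministic sequence into the number of i.i.d.\ candidates. For an integer $m \ge 2$, let $\sigma$ be the length-$m$ sequence from \Cref{claim:linearUtilityGap} (with $\beta = 1$, since \fbias{} $\lambda(k-1) = 1$). A direct computation (as in that proof) gives $U_{p_r}(\sigma) = \ceil{m/k}$ and $U_{g_b}(\sigma) = 1$ (the first candidate already yields utility $1$, and no later choice yields more), and $\sigma$ is succinct because all its entries are distinct and nonzero. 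Fix a constant $\epsilon \in (0,1)$ and apply \Cref{lem:detToiid} to $\sigma$: this yields an i.i.d.\ distribution on $N(m) \coloneqq m^{\alpha(m-1)} \log^{\alpha}(m)$ candidates (rounded to an integer), where $\alpha = \log_m\!\paren*{\ceil{m/k}/\epsilon} + 2$, whose prophet utility ratio is at least $(1-\epsilon)\cdot U_{p_r}(\sigma)/U_{g_b}(\sigma) = \Omega(m/k)$. For this same distribution the online utility ratio is $\Omega(m/k)$ as well: by \Cref{claim:classical}, $\E_{\dist}\sq*{U^*_{g_r}} \ge \tfrac{1}{2}\E_{\dist}\sq*{U_{p_r}}$, so the online utility ratio is at least half the prophet utility ratio.

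The second ingredient is inverting $n = N(m)$. Because $\epsilon$ is constant, $\alpha = \log_m(\ceil{m/k}/\epsilon) + 2$ is $\Theta(1)$ (its first summand tends to $1$ as $m \to \infty$), hence $\log N(m) = \alpha(m-1)\log m + \alpha \log\log m = \Theta(m \log m)$. Setting $L = \log N(m)$ we get $\log L = \Theta(\log m)$ and therefore $m = \Theta(L/\log L) = \Theta\!\paren*{\log N(m)/\log\log N(m)}$. Consequently the distribution on $N(m)$ candidates has prophet and online utility ratios $\Omega\!\paren*{\log N(m) / (k\log\log N(m))}$, which is the claimed bound along the subsequence $n = N(m)$.

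To obtain the statement for every $n$, given $n$ let $m$ be the largest integer with $N(m) \le n$ and pad the i.i.d.\ distribution built for $\sigma$ with $n - N(m)$ extra i.i.d.\ draws from the same single-candidate law $\mathcal{D}$. This does not weaken the bound: the support is unchanged, so $U_{p_r}(\wsigma) \le U_{p_r}(\sigma)$ still holds for every realized $\wsigma$, and the estimate $\Pr[r(\wsigma) \ne \sigma] \le m(1 - x^{m-1})^n + (m-1)\tfrac{x}{x+1}$ from \Cref{lem:detToiid} is nonincreasing in $n$, so the upper bound on $\E_{\dist}\sq*{U^*_{g_b}}$ is preserved while the numerators $\E_{\dist}\sq*{U_{p_r}}$ and $\E_{\dist}\sq*{U^*_{g_r}}$ do not decrease. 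Since $N(m) \le n < N(m+1)$ and $\log N(m+1) = \Theta((m+1)\log(m+1)) = \Theta(m\log m) = \Theta(\log N(m))$, we obtain $\log n = \Theta(m\log m)$, hence $m = \Theta(\log n/\log\log n)$, and the two utility ratios are $\Omega(m/k) = \Omega\!\paren*{\log n/(k\log\log n)}$.

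The conceptual content is fully inherited from \Cref{claim:linearUtilityGap} and \Cref{lem:detToiid}; the only real work is the asymptotic step. The main obstacle there is making sure $\alpha$ stays bounded so that $\log N(m)$ is $\Theta(m\log m)$ rather than a larger order, and then solving $m\log m = \Theta(\log n)$ to see $m = \Theta(\log n/\log\log n)$; a secondary point is the padding argument needed to pass from the sparse set $\set*{N(m)}$ to all $n$, which requires only the observation that adding i.i.d.\ candidates cannot decrease either utility ratio.
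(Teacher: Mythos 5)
Your proof follows essentially the same route as the paper's: take the linear-gap deterministic sequence from \Cref{claim:linearUtilityGap}, pass it through \Cref{lem:detToiid}, observe $\alpha = \Theta(1)$, and invert $n = m^{\alpha(m-1)}\log^\alpha m$ to get $m = \Theta(\log n/\log\log n)$, then transfer to the online ratio via \Cref{claim:classical}. The one genuine addition is your padding argument for arbitrary $n$: the paper's proof produces an i.i.d.\ instance only for the sparse set $\{N(m)\}$ even though the proposition is stated for all $n \in \mathbb{N}^+$, and your observation that adding extra i.i.d.\ draws from the same per-candidate law preserves the lower bound on the utility ratio (because the bound on $\Pr[r(\wsigma)\neq\sigma]$ is nonincreasing in $n$) is a correct and welcome patch of that gap.
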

\begin{proof}[Proof Sketch]
   The proof follows similarly to \Cref{claim:iidlogUtilityGap}, except that we consider the adversarial sequence $\sigma(m)$ from \Cref{claim:linearUtilityGap}, which has a prophet and online utility ratio of $\Omega(\frac{m}{k})$. 
\end{proof}

\begin{restatable}{theorem}{thmIIDInftyGap} \label{thm:iidInftyGap}
    In the i.i.d prior distribution setting, when the \fbias{} $\lambda \cdot (k-1) \geq 1$, the prophet and online competitive ratio are both equal to $\infty$.  
\end{restatable}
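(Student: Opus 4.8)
The plan is to follow the proof of \Cref{thm:advInftyGap} almost verbatim, replacing the deterministic lower bounds (\Cref{claim:expUtilityGap} and \Cref{claim:linearUtilityGap}) with their i.i.d.\ analogues \Cref{claim:iidExpUtilityGap} and \Cref{claim:iidlogUtilityGap}. Split into the two sub-cases $\lambda \cdot (k-1) > 1$ and $\lambda \cdot (k-1) = 1$. In the first case, \Cref{claim:iidExpUtilityGap} supplies, for each $n \in \mathbb{N}^+$, an i.i.d.\ prior $\dist_0(n) \in \mathcal{U}$ whose prophet and online utility ratios are both $\Omega\paren*{(\lambda(k-1))^{f(\lambda, k) \cdot \log^{\frac{1}{2}} n}}$; in the second case, \Cref{claim:iidlogUtilityGap} supplies such a $\dist_0(n)$ with both ratios $\Omega\paren*{\frac{\log n}{k \log \log n}}$.

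The first step is to observe that, in both regimes, the guaranteed lower bound on $\frac{\E_{\dist_0(n)}\sq*{U_{p_r}}}{\E_{\dist_0(n)}\sq*{U^*_{g_b}}}$ (and likewise on the online ratio) is a function of $n$ alone, once $\lambda$ and $k$ are fixed, and this function diverges as $n \to \infty$. For $\lambda(k-1) = 1$ this is immediate since $\frac{\log n}{k \log \log n} \to \infty$. For $\lambda(k-1) > 1$ one uses that $f(\lambda,k)$ is a fixed positive constant and $\log^{\frac{1}{2}} n \to \infty$, so $(\lambda(k-1))^{f(\lambda,k)\log^{\frac{1}{2}}n} \to \infty$. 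Hence for every $T \in \mathbb{N}$ there is some $n$ with $\frac{\E_{\dist_0(n)}\sq*{U_{p_r}}}{\E_{\dist_0(n)}\sq*{U^*_{g_b}}} \geq T$ and $\frac{\E_{\dist_0(n)}\sq*{U^*_{g_r}}}{\E_{\dist_0(n)}\sq*{U^*_{g_b}}} \geq T$.

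The second step is the same maximization argument used for \Cref{thm:advInftyGap}: since each $\dist_0(n)$ lies in $\mathcal{U}$, the i.i.d.\ prophet competitive ratio $\max_{\dist \in \mathcal{U}} \frac{\E_{\dist}\sq*{U_{p_r}}}{\E_{\dist}\sq*{U^*_{g_b}}}$ is at least $\sup_{n \in \mathbb{N}^+} \frac{\E_{\dist_0(n)}\sq*{U_{p_r}}}{\E_{\dist_0(n)}\sq*{U^*_{g_b}}}$, which by the first step exceeds every $T \in \mathbb{N}$; therefore it equals $\infty$. Replacing $U_{p_r}$ by $U^*_{g_r}$ throughout yields that the online competitive ratio is likewise $\infty$. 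I do not anticipate a genuine obstacle here, since all the substantive content is already packaged into \Cref{claim:iidExpUtilityGap} and \Cref{claim:iidlogUtilityGap} (which in turn rest on \Cref{lem:detToiid} and \Cref{claim:representation}); the only care needed is to handle both the $>1$ and $=1$ regimes and to confirm that the stated $\Omega(\cdot)$ expressions truly grow without bound in $n$.
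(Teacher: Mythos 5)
Your proposal is correct and follows essentially the same route as the paper's own proof: invoke \Cref{claim:iidExpUtilityGap} and \Cref{claim:iidlogUtilityGap} to obtain i.i.d.\ instances whose utility ratios grow without bound in $n$, then observe that the competitive ratio is a supremum over $\mathcal{U}$ and hence dominates all of these. The only detail worth flagging is that the paper's displayed formula for $f(\lambda,k)$ is somewhat garbled (the ``$-1$'' belongs to the exponent, not to the coefficient of $\log^{1/2}n$), but your reading of $f(\lambda,k)$ as a fixed positive constant is the intended and correct one, so the divergence argument goes through.
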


On the other hand, when the \fbias{} $\lambda (k-1) < 1$, we can utilize \Cref{lem:detToiid} to show a lower bound for the prophet and online competitive ratios that matches those in the adversarial setting within constant factors. 
\begin{restatable}{claim}{claimUBBoundedUtilityIID} \label{claim:UBBoundedUtilityIID}
    When the \fbias{} $\lambda (k-1) <1$, for any $\delta \in (0, 1)$, there exists an i.i.d prior distribution $\dist$ where the \pratio{} $\frac{\E_{\dist}\sq*{U_{p_r}}}{\E_{\dist}\sq*{U^*_{g_b}}} \geq (1 - \delta) \cdot \frac{1}{(1 - \lambda (k-1))}$. 
\end{restatable}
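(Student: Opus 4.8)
The plan is to reuse the deterministic instance underlying \Cref{thm:UBBoundedUtility}, \emph{deleting its randomized last candidate}, and to transport the resulting deterministic sequence into the i.i.d.\ setting via \Cref{lem:detToiid}. Write $\beta := \lambda(k-1)$, so $\beta \in [0,1)$ by hypothesis. If $\beta = 0$ (which in particular covers $k=1$) the statement is immediate: the biased gambler's utility never exceeds the value of the candidate it selects, so $\E_{\dist}[U^*_{g_b}] \le \E_{\dist}[V^*] = \E_{\dist}[U_{p_r}]$ for every $\dist$, giving $\E_{\dist}[U_{p_r}]/\E_{\dist}[U^*_{g_b}] \ge 1 \ge 1-\delta$. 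So from here on I would assume $\beta \in (0,1)$ (in particular $k \ge 2$).

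Fix $\delta \in (0,1)$ and set $v_i := \sum_{j=0}^{i-1}\beta^{j} = \tfrac{1-\beta^{i}}{1-\beta}$ for $i \ge 0$ (so $v_0 = 0$, $v_1 = 1$). Since $v_i \to \tfrac{1}{1-\beta}$, first pick $N$ large enough that $v_N \ge (1-\delta/2)\cdot\tfrac{1}{1-\beta}$. Let $\sigma$ be the deterministic sequence of length $m = kN$ built from $N$ ``rows'', where row $i \in [N]$ lists, in order of coordinate, the $k$ vectors taking value $v_i$ in coordinates $1,2,\dots,k$ respectively and $0$ in every other coordinate. The $v_i$ are distinct and strictly positive, and the $k$ vectors within a row differ in which coordinate is nonzero, so $\sigma$ is succinct and $m>1$; hence \Cref{lem:detToiid} applies to $\sigma$.

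The two quantities feeding into that lemma are $U_{p_r}(\sigma)$ and $U_{g_b}(\sigma)$. Clearly $U_{p_r}(\sigma) = \max_t \norm{\sigma^{(t)}}_1 = v_N$, attained by any candidate of the last row. The crux is $U_{g_b}(\sigma) = 1$, which is exactly the biased-gambler computation already done in the proof of \Cref{thm:UBBoundedUtility}: if the gambler selects the first candidate of row $i$, its super candidate equals $v_i$ in coordinate $1$ and $v_{i-1}$ in the remaining $k-1$ coordinates, so its utility is $v_i - \lambda(k-1)v_{i-1} = v_i - \beta v_{i-1} = (1-\beta)v_{i-1} + \beta^{i-1} = 1$ (using $v_i - v_{i-1} = \beta^{i-1}$); selecting the $\ell$-th candidate of a row instead decreases the utility by $\lambda(\ell-1)\beta^{i-1} \ge 0$, so no candidate yields utility above $1$, and the very first candidate attains it. Thus $\tfrac{U_{p_r}(\sigma)}{U_{g_b}(\sigma)} = v_N \ge (1-\delta/2)\cdot\tfrac{1}{1-\lambda(k-1)}$.

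Finally I would apply \Cref{lem:detToiid} to $\sigma$ with $\epsilon := \delta/2$, obtaining an i.i.d.\ prior $\dist$ supported on $\{\sigma^{(1)},\dots,\sigma^{(m)}\}$ with
\[
\frac{\E_{\dist}[U_{p_r}]}{\E_{\dist}[U^*_{g_b}]} \;\ge\; (1-\delta/2)\cdot\frac{U_{p_r}(\sigma)}{U_{g_b}(\sigma)} \;\ge\; (1-\delta/2)^2\cdot\frac{1}{1-\lambda(k-1)} \;\ge\; (1-\delta)\cdot\frac{1}{1-\lambda(k-1)},
\]
which is the claim. The only genuinely delicate ingredient is the identity $U_{g_b}(\sigma) = 1$ — that the partial-geometric-sum growth rate $\beta = \lambda(k-1)$ is tuned precisely so the biased gambler's extra regret exactly cancels its extra value from waiting — but this is inherited from \Cref{thm:UBBoundedUtility}; the rest is bookkeeping among $\delta$, $\epsilon$, and $N$, plus checking succinctness so that \Cref{lem:detToiid} is applicable.
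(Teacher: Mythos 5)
Your proposal is correct and follows essentially the same route as the paper: take the deterministic part of the instance from \Cref{thm:UBBoundedUtility}, establish that the prophet-to-gambler ratio is $\frac{1-\beta^w}{1-\beta}$, and feed the (succinct) sequence into \Cref{lem:detToiid} to obtain an i.i.d.\ instance with essentially the same ratio. The minor differences are improvements in rigor rather than a different method: you handle $\beta=0$ (hence $k=1$ or $\lambda=0$) separately — a case where the construction would produce a non-succinct sequence and \Cref{lem:detToiid} could not be invoked, and which the paper silently omits — and you spell out the $U_{g_b}(\sigma)=1$ computation including the within-row monotonicity, where the paper only cites \Cref{thm:UBBoundedUtility}; your $(1-\delta/2)^2 \ge 1-\delta$ bookkeeping is cleaner than, and side-steps a typo in, the paper's choice of $w$.
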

\begin{proof}[Proof Sketch] 
    We will consider a sequence of length $m$ that is similar to the mostly deterministic distribution considered in theorem~\ref{thm:UBBoundedUtility}, except that the last candidate, which has randomized value, is excluded. Specifically, let $\sigma$ be the following sequence:
    \begin{center}
        \begin{tabular}{ l l l l }
            $(1, 0, \cdots, 0)$, &$(0, 1, 0, \cdots, 0)$, &$\cdots$, &$(0, \cdots, 0, 1)$,\\
            $(1 + \beta, 0, \cdots, 0)$, &$(0, 1 + \beta, 0, \cdot(1 - \lambda (k-1))(1 + \delta)s, 0)$, &$\cdots$, &$(0, \cdots, 0, 1 + \beta)$,\\
            $(1 + \beta + \beta^2, 0, \cdots, 0)$, &$(0, 1 + \beta + \beta^2, 0, \cdots, 0)$, &$\cdots$, &$(0, \cdots, 0, 1 + \beta + \beta^2)$,\\
            &$\cdots$ & &\\
            $(\sum_{i = 0}^{\ceil{\frac{m}{k}} - 1} \beta^{i}, 0 , \cdots, 0)$, &$(0, \sum_{i = 0}^{\ceil{\frac{m}{k}} - 1} \beta^{i}, 0, \cdots, 0)$, &$\cdots$, &$(0, \cdots, 0, \sum_{i = 0}^{\ceil{\frac{m}{k}} - 1} \beta^{i})$.  
        \end{tabular}
    \end{center}   
    For a deterministic distribution equal to $\sigma$, the prophet utility ratio is $\frac{1}{1 - \lambda(k-1)}$. 
    We will then apply \Cref{lem:detToiid} to obtain an i.i.d distribution with a larger number of candidates that has a similar prophet utility ratio.  
\end{proof}
\begin{restatable}{theorem} {thmUBBoundedUtilityIID} \label{thm:UBBoundedUtilityIID}
    When the \fbias{} $\lambda (k-1) <1$ and $k > 1$, for any $\delta \in (0, 1)$, there exists an i.i.d prior distribution $\dist$ where the \pratio{} $\frac{\E_{\dist}\sq*{U_{p_r}}}{\E_{\dist}\sq*{U^*_{g_b}}} \geq  \frac{(1 - \delta)}{3} \cdot \frac{2 + \lambda}{1 - \lambda (k-1)}$ and the \oratio{} $\frac{\E_{\dist}\sq*{U^*_{g_r}}}{\E_{\dist}\sq*{U^*_{g_b}}} \geq \frac{(1 - \delta)}{4} \cdot \frac{1 + \lambda}{1 - \lambda (k-1)}$. Consequently, the prophet competitive ratio is at least $\frac{1}{3} \cdot \frac{2 + \lambda}{(1 - \lambda (k-1))}$ and the online competitive ratio is at least $\frac{1}{4} \cdot \frac{1 + \lambda}{(1 - \lambda (k-1))}$.
\end{restatable}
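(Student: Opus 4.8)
The plan is to reuse the template of \Cref{claim:UBBoundedUtilityIID}: take the adversarial worst case of \Cref{thm:UBBoundedUtility} and realize it, up to constant-factor loss, as the high-probability outcome of an i.i.d.\ distribution via a \Cref{lem:detToiid}-style argument. Recall that the \Cref{thm:UBBoundedUtility} instance is an $(n-1)$-term deterministic prefix $\sigma$ (entries growing geometrically with ratio $\beta=\lambda(k-1)<1$ and cycling through the $k$ coordinates) followed by a ``classical-prophet'' candidate that equals a large value $L\cdot e_1$ with probability $\eps$ and the zero vector otherwise, with $L,\eps$ calibrated so the biased gambler's optimal utility is at most $1$ while the rational prophet and rational gambler realize $(1-o(1))\tfrac{2+\lambda}{1-\beta}$ and $(1-o(1))\tfrac{1+\lambda}{1-\beta}$. \Cref{claim:UBBoundedUtilityIID} already embeds the deterministic prefix into an i.i.d.\ distribution; the only new ingredient is embedding the randomized candidate.

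For that I would take the single per-candidate distribution $\mathcal D$ placing mass $\propto x^{i-1}$ on $\sigma^{(i)}$ (renormalized by $1-q$, with $x=m^{-\alpha}$ exactly as in \Cref{lem:detToiid}) and mass $q$ on $L\cdot e_1$, and draw $n$ candidates i.i.d.\ from $\mathcal D$. The key point is that the reduction already forces $n=m^{\alpha(m-1)}\log^{\alpha}m$, astronomically large relative to $m$, so one can pick $q$ so that (i) the event that $L\cdot e_1$ ever appears has probability $\Theta(\eps)$ --- matching the role of $\eps$ in \Cref{thm:UBBoundedUtility} --- while (ii) conditioned on appearing it lands only after all of $\sigma^{(1)},\dots,\sigma^{(m)}$ have appeared, with probability $1-o(1)$, since the prefix is completed by a position that is $o(n)$. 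Call the \emph{good event} the conjunction of ``$r(\wsigma)$ restricted to the $\sigma^{(i)}$'s equals $\sigma$'' (probability $1-o(1)$, by the union bound of \Cref{lem:detToiid}) and ``every copy of $L\cdot e_1$ lands after that prefix.'' On the good event, deleting duplicates via \Cref{claim:representation} recovers exactly the \Cref{thm:UBBoundedUtility} instance, so the rational prophet's expected value is $\ge(1-o(1))\bigl(\eps L+(1-\eps)\cdot(\text{prefix max})\bigr)$, which calibration drives to within a constant factor of $\tfrac{2+\lambda}{1-\beta}$, and the rational gambler, by ignoring everything until $L\cdot e_1$ shows, gets within a constant factor of $\tfrac{1+\lambda}{1-\beta}$. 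Dividing by the bound $\E_{\dist}\sq*{U^*_{g_b}}\le 1+o(1)$ then yields the stated $\tfrac{1-\delta}{3}\cdot\tfrac{2+\lambda}{1-\lambda(k-1)}$ and $\tfrac{1-\delta}{4}\cdot\tfrac{1+\lambda}{1-\lambda(k-1)}$ lower bounds, the $\tfrac13,\tfrac14$ (rather than $1-\delta$) absorbing the union-bound slack together with the fact that $L\cdot e_1$ appears with probability only $\Theta(\eps)$ rather than $1$. The hypothesis $k>1$ is used exactly where the prefix is: for $k=1$ there is nothing to cycle through and the construction degenerates, consistent with the exponential random-order improvement in \cite{kleinberg2021optimal}.

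The main obstacle --- and the reason this is not a black-box corollary of \Cref{lem:detToiid} --- is the biased-gambler bound. Unlike in \Cref{lem:detToiid} and \Cref{claim:UBBoundedUtilityIID}, we cannot simply bound $U^*_{g_b}$ by the \emph{offline} optimum $U_{g_b}(\wsigma)$, because now $\wsigma$ has a realization (the one containing $L\cdot e_1$) on which that offline optimum is huge; instead one must re-run the \emph{online} analysis of \Cref{thm:UBBoundedUtility} inside the i.i.d.\ world, showing that stopping on any deterministic-valued candidate is worth at most $1$ by the telescoping identity that relies on $\beta=\lambda(k-1)$ (using that, on the good event, the prefix values arrive in order so each is ``best so far'' when it first appears), and that any policy skipping far enough to hope for $L\cdot e_1$ is balanced by the loss aversion it then incurs --- exactly the calibration that caps the gambler in \Cref{thm:UBBoundedUtility}. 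On top of this, the several failure modes --- a missing $\sigma^{(i)}$, the $\sigma^{(i)}$'s out of order, $L\cdot e_1$ arriving early, and $L\cdot e_1$ never arriving --- each have to be shown to cost only a constant factor, which is the delicate part of the bookkeeping and the reason the i.i.d.\ competitive-ratio lower bounds come out only within a constant factor of their adversarial counterparts in \Cref{thm:UBBoundedUtility}.
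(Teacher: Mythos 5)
Your proposal is a genuinely different --- and much more ambitious --- route than the paper's, and it has a real gap. The paper never embeds the stochastic candidate into the i.i.d.\ world at all. Its proof is three lines: \Cref{claim:UBBoundedUtilityIID} already produces an i.i.d.\ distribution (built only from the deterministic geometric prefix) with prophet utility ratio at least $(1-\delta)/(1-\lambda(k-1))$; \Cref{claim:classical} halves this for the online ratio; and since $\lambda(k-1)<1$ together with $k>1$ force $\lambda<1$, we have $2+\lambda<3$ and $1+\lambda<2$, so dividing the clean bound $(1-\delta)/(1-\lambda(k-1))$ by $3$ (resp.\ $4$) trivially dominates $\tfrac{1-\delta}{3}\cdot\tfrac{2+\lambda}{1-\lambda(k-1)}$ (resp.\ $\tfrac{1-\delta}{4}\cdot\tfrac{1+\lambda}{1-\lambda(k-1)}$). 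In other words, the $\tfrac13$ and $\tfrac14$ in the theorem statement are purely cosmetic rewrites of \Cref{claim:UBBoundedUtilityIID} and \Cref{claim:classical}, not products of any probabilistic bookkeeping.

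Your version, by contrast, tries to i.i.d.-ify the \emph{full} \Cref{thm:UBBoundedUtility} instance including the high-value stochastic candidate $L\cdot e_1$. You correctly identify where this breaks: the bound in \Cref{lem:detToiid} of the form $\E_{\dist}[U^*_{g_b}]\le \Pr[\text{good}]\cdot U_{g_b}(\sigma) + \Pr[\text{bad}]\cdot U_{p_r}(\sigma)$ relies on the offline optimum being bounded by $U_{p_r}(\sigma)$ uniformly; once $L\cdot e_1$ is in the support, $U_{p_r}(\sigma)\approx L$ scales as $\Theta(1/\eps)$, so multiplying by the bad-event probability does not bring the contribution back to $O(1)$. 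Replacing this by a genuine online analysis of $U^*_{g_b}$ under the i.i.d.\ draw --- showing that no sequential stopping rule, aware of the randomness, can exceed $1+o(1)$ --- is not carried out; you flag it as ``the main obstacle'' and leave it there, but it is exactly the load-bearing step. There is also a secondary mismatch: nothing in your sketch pins the compounded $1-o(1)$ and $\Theta(\eps)$ losses to the specific constants $\tfrac13$ and $\tfrac14$ appearing in the statement, whereas the paper's proof explains precisely where they come from. The takeaway is that the theorem's claimed constant-factor looseness was designed so that the stochastic candidate could be dropped entirely; chasing the tight $2+\lambda$ constant in the i.i.d.\ setting is a harder, open-ended problem.
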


\begin{proof}
    By \Cref{claim:UBBoundedUtilityIID}, there exists an i.i.d prior distribution $\dist$ such that $\frac{\E_{\dist}\sq*{U_{p_r}}}{\E_{\dist}\sq*{U^*_{g_b}}} \geq (1 - \delta) \cdot \frac{1}{(1 - \lambda (k-1))}$. Combining this with \Cref{claim:classical}, we have  $\frac{\E_{\dist}\sq*{U^*_{g_r}}}{\E_{\dist}\sq*{U^*_{g_b}}} \geq \frac{(1 - \delta)}{2} \cdot (1 - \lambda (k-1))$. 
    
    Since $\lambda (k-1) < 1$ and $k > 1$, it must be the case that $2 + \lambda < 3$ and that $1 + \lambda < 2$. Hence $\frac{\E_{\dist}\sq*{U_{p_r}}}{\E_{\dist}\sq*{U^*_{g_b}}} \geq \frac{(1 - \delta)}{3} \cdot \frac{2 + \lambda}{1 - \lambda (k-1)}$ and $\frac{\E_{\dist}\sq*{U^*_{g_r}}}{\E_{\dist}\sq*{U^*_{g_b}}} \geq \frac{(1 - \delta)}{4} \cdot \frac{1 + \lambda}{1 - \lambda (k-1)}$. This implies that the upper bound matches the lower bounds proved for adversarial setting within constant factors. 
\end{proof}

\section{Conclusion and Directions for Future Work} \label{futureWork}

We consider an online decision-maker who suffers from comparative loss aversion. Our model is identical to that of~\cite{kleinberg2021optimal}, \emph{except} for how the reference point is formed. This change alone accounts for fundamental differences in conclusions. For example, even the prophet suffers loss in our model, and this enables novel phenomena to occur. Additionally, the loss due to bias in our model can be unboundedly bad, and is not significantly mitigated by assumptions on the prophet inequality arrival order. 

Empirically observed biases have only recently been incorporated into research on optimal stopping. Our work takes a step beyond~\cite{kleinberg2021optimal} and uncovers new phenomena by changing just one aspect of their model. An important direction for future work is to continue developing models that uncover additional phenomena of behavioral game theory (one such possibility is those studied in~\cite{simonson1992choice}, where adding inferior options can influence decision-making).

\bibliographystyle{plainnat}
\bibliography{bibliography}

\begin{thebibliography}{23}
\providecommand{\natexlab}[1]{#1}
\providecommand{\url}[1]{\texttt{#1}}
\expandafter\ifx\csname urlstyle\endcsname\relax
  \providecommand{\doi}[1]{doi: #1}\else
  \providecommand{\doi}{doi: \begingroup \urlstyle{rm}\Url}\fi

\bibitem[Beyhaghi et~al.(2018)Beyhaghi, Golrezaei, Leme, Pal, and
  Sivan]{BeyhaghiGLPS18}
Hedyeh Beyhaghi, Negin Golrezaei, Renato~Paes Leme, Martin Pal, and
  Balasubramanian Sivan.
\newblock Improved approximations for free-order prophets and second-price
  auctions.
\newblock \emph{CoRR}, abs/1807.03435, 2018.
\newblock URL \url{http://arxiv.org/abs/1807.03435}.

\bibitem[Brenner et~al.(1999)Brenner, Rottenstreich, and
  Sood]{brenner1999comparison}
Lyle Brenner, Yuval Rottenstreich, and Sanjay Sood.
\newblock Comparison, grouping, and preference.
\newblock \emph{Psychological Science}, 10\penalty0 (3):\penalty0 225--229,
  1999.

\bibitem[Chernev et~al.(2015)Chernev, B{\"o}ckenholt, and
  Goodman]{chernev2015choice}
Alexander Chernev, Ulf B{\"o}ckenholt, and Joseph Goodman.
\newblock Choice overload: A conceptual review and meta-analysis.
\newblock \emph{Journal of Consumer Psychology}, 25\penalty0 (2):\penalty0
  333--358, 2015.

\bibitem[Correa et~al.(2017)Correa, Foncea, Hoeksma, Oosterwijk, and
  Vredeveld]{CorreaFHOV17}
Jos{\'{e}}~R. Correa, Patricio Foncea, Ruben Hoeksma, Tim Oosterwijk, and Tjark
  Vredeveld.
\newblock Posted price mechanisms for a random stream of customers.
\newblock In \emph{Proceedings of the 2017 {ACM} Conference on Economics and
  Computation, {EC} '17, Cambridge, MA, USA, June 26-30, 2017}, pages 169--186,
  2017.
\newblock \doi{10.1145/3033274.3085137}.
\newblock URL \url{https://doi.org/10.1145/3033274.3085137}.

\bibitem[Duetting et~al.(2017)Duetting, Feldman, Kesselheim, and
  Lucier]{DuettingFKL17}
Paul Duetting, Michal Feldman, Thomas Kesselheim, and Brendan Lucier.
\newblock Prophet inequalities made easy: Stochastic optimization by pricing
  non-stochastic inputs.
\newblock In \emph{58th {IEEE} Annual Symposium on Foundations of Computer
  Science, {FOCS} 2017, Berkeley, CA, USA, October 15-17, 2017}, pages
  540--551, 2017.
\newblock \doi{10.1109/FOCS.2017.56}.
\newblock URL \url{https://doi.org/10.1109/FOCS.2017.56}.

\bibitem[Ehsani et~al.(2018)Ehsani, Hajiaghayi, Kesselheim, and
  Singla]{EhsaniHKS18}
Soheil Ehsani, MohammadTaghi Hajiaghayi, Thomas Kesselheim, and Sahil Singla.
\newblock Prophet secretary for combinatorial auctions and matroids.
\newblock In \emph{Proceedings of the Twenty-Ninth Annual {ACM-SIAM} Symposium
  on Discrete Algorithms, {SODA} 2018, New Orleans, LA, USA, January 7-10,
  2018}, pages 700--714, 2018.
\newblock \doi{10.1137/1.9781611975031.46}.
\newblock URL \url{https://doi.org/10.1137/1.9781611975031.46}.

\bibitem[Esfandiari et~al.(2015)Esfandiari, Hajiaghayi, Liaghat, and
  Monemizadeh]{EsfandiariHLM15}
Hossein Esfandiari, MohammadTaghi Hajiaghayi, Vahid Liaghat, and Morteza
  Monemizadeh.
\newblock Prophet secretary.
\newblock In \emph{Algorithms - {ESA} 2015 - 23rd Annual European Symposium,
  Patras, Greece, September 14-16, 2015, Proceedings}, pages 496--508, 2015.
\newblock \doi{10.1007/978-3-662-48350-3\_42}.
\newblock URL \url{https://doi.org/10.1007/978-3-662-48350-3\_42}.

\bibitem[Haynes(2009)]{haynes2009testing}
Graeme~A Haynes.
\newblock Testing the boundaries of the choice overload phenomenon: The effect
  of number of options and time pressure on decision difficulty and
  satisfaction.
\newblock \emph{Psychology \& Marketing}, 26\penalty0 (3):\penalty0 204--212,
  2009.

\bibitem[Hill and Kertz(1982)]{HillK82}
Theodore~P. Hill and Robert~P. Kertz.
\newblock Comparisons of stop rule and supremum expectations of i.i.d. random
  variables.
\newblock \emph{The Annals of Probability}, 10:\penalty0 336--345, 1982.

\bibitem[Kahneman and Tversky(2013)]{kahneman2013prospect}
Daniel Kahneman and Amos Tversky.
\newblock Prospect theory: An analysis of decision under risk.
\newblock In \emph{Handbook of the fundamentals of financial decision making:
  Part I}, pages 99--127. World Scientific, 2013.

\bibitem[Kleinberg et~al.(2021)Kleinberg, Kleinberg, and
  Oren]{kleinberg2021optimal}
Jon Kleinberg, Robert Kleinberg, and Sigal Oren.
\newblock Optimal stopping with behaviorally biased agents: The role of loss
  aversion and changing reference points.
\newblock \emph{arXiv preprint arXiv:2106.00604}, 2021.

\bibitem[Kleinberg and Weinberg(2012)]{KleinbergW12}
Robert Kleinberg and S.~Matthew Weinberg.
\newblock Matroid prophet inequalities.
\newblock In \emph{Proceedings of the 44th Symposium on Theory of Computing
  Conference, {STOC} 2012, New York, NY, USA, May 19 - 22, 2012}, pages
  123--136, 2012.
\newblock \doi{10.1145/2213977.2213991}.
\newblock URL \url{https://doi.org/10.1145/2213977.2213991}.

\bibitem[Krengel and Sucheston(1978)]{KrengelS78}
Ulrich Krengel and Louis Sucheston.
\newblock On semiamarts, amarts, and processes with finite value.
\newblock \emph{Advances in Probability and Related Topics}, 4:\penalty0
  197--266, 1978.

\bibitem[Park and Jang(2013)]{park2013confused}
Jeong-Yeol Park and SooCheong~Shawn Jang.
\newblock Confused by too many choices? choice overload in tourism.
\newblock \emph{Tourism Management}, 35:\penalty0 1--12, 2013.

\bibitem[Rubinstein(2016)]{Rubinstein16}
Aviad Rubinstein.
\newblock Beyond matroids: secretary problem and prophet inequality with
  general constraints.
\newblock In \emph{Proceedings of the 48th Annual {ACM} {SIGACT} Symposium on
  Theory of Computing, {STOC} 2016, Cambridge, MA, USA, June 18-21, 2016},
  pages 324--332, 2016.
\newblock \doi{10.1145/2897518.2897540}.
\newblock URL \url{https://doi.org/10.1145/2897518.2897540}.

\bibitem[Rubinstein and Singla(2017)]{RubinsteinS17}
Aviad Rubinstein and Sahil Singla.
\newblock Combinatorial prophet inequalities.
\newblock In \emph{Proceedings of the Twenty-Eighth Annual {ACM-SIAM} Symposium
  on Discrete Algorithms, {SODA} 2017, Barcelona, Spain, Hotel Porta Fira,
  January 16-19}, pages 1671--1687, 2017.
\newblock \doi{10.1137/1.9781611974782.110}.
\newblock URL \url{https://doi.org/10.1137/1.9781611974782.110}.

\bibitem[Sagi and Friedland(2007)]{sagi2007cost}
Adi Sagi and Nehemia Friedland.
\newblock The cost of richness: The effect of the size and diversity of
  decision sets on post-decision regret.
\newblock \emph{Journal of personality and social psychology}, 93\penalty0
  (4):\penalty0 515, 2007.

\bibitem[Samuel{-}Cahn(1984)]{Samuel-Cahn84}
Ester Samuel{-}Cahn.
\newblock Comparison of threshold stop rules and maximum for independent
  nonnegative random variables.
\newblock \emph{Annals of Probability}, 12\penalty0 (4):\penalty0 1213--1216,
  1984.

\bibitem[Schunk and Winter(2009)]{schunk2009relationship}
Daniel Schunk and Joachim Winter.
\newblock The relationship between risk attitudes and heuristics in search
  tasks: A laboratory experiment.
\newblock \emph{Journal of Economic Behavior \& Organization}, 71\penalty0
  (2):\penalty0 347--360, 2009.

\bibitem[Schwartz(2004)]{schwartz2004paradox}
Barry Schwartz.
\newblock The paradox of choice: Why more is less.
\newblock Ecco New York, 2004.

\bibitem[Simonson and Tversky(1992)]{simonson1992choice}
Itamar Simonson and Amos Tversky.
\newblock Choice in context: Tradeoff contrast and extremeness aversion.
\newblock \emph{Journal of marketing research}, 29\penalty0 (3):\penalty0
  281--295, 1992.

\bibitem[Tversky and Kahneman(1991)]{tversky1991loss}
Amos Tversky and Daniel Kahneman.
\newblock Loss aversion in riskless choice: A reference-dependent model.
\newblock \emph{The quarterly journal of economics}, 106\penalty0 (4):\penalty0
  1039--1061, 1991.

\bibitem[Yan(2011)]{Yan11}
Qiqi Yan.
\newblock Mechanism design via correlation gap.
\newblock In Dana Randall, editor, \emph{Proceedings of the Twenty-Second
  Annual {ACM-SIAM} Symposium on Discrete Algorithms, {SODA} 2011, San
  Francisco, California, USA, January 23-25, 2011}, pages 710--719. {SIAM},
  2011.
\newblock \doi{10.1137/1.9781611973082.56}.
\newblock URL \url{https://doi.org/10.1137/1.9781611973082.56}.

\end{thebibliography}
\appendix
\section{Omitted Proofs of Section \ref{result:adversarial}} \label{adversarial}

\claimExpUtilityGap*
\begin{proof}
    Let $\beta = \lambda \cdot (k-1)$, consider a sequence $\sigma$ with $n$ candidates where the $t^{th}$ candidate takes value $\sigma^{(t)}$ such that for dimension $j = (t \mod k)$, $\sigma^{(t)}_{j} = \beta^{\ceil{\frac{t}{k}} - 1}$ and for any dimension $j \neq (t \mod k)$, $\sigma^{(t)}_{j} = 0$. Below is a more intuitive representation of the sequence $\sigma$: 
    \begin{center}
        \begin{tabular}{l l l c l}
        $(1, 0, \cdots, 0)$, &$(0, 1, 0, \cdots, 0)$, &$\cdots$, &$\cdots$, &$(0, \cdots, 0, 1)$,\\
        $(\beta, 0, \cdots, 0)$, &$(0, \beta, 0, \cdots, 0)$, &$\cdots$, &$\cdots$, &$(0, \cdots, 0, \beta)$,\\
        $(\beta^2, 0, \cdots, 0)$, &$(0, \beta^2, 0, \cdots, 0)$, &$\cdots$, &$\cdots$, &$(0, \cdots, 0, \beta^2)$,\\
        & &$\cdots$ & &\\
        $(\beta^{\ceil{n/k} - 1}, 0, \cdots, 0)$, &$(0, \beta^{\ceil{n/k} - 1}, 0, \cdots, 0)$, &$\cdots$, &$(0, \cdots, 0, \beta^{\ceil{n/k} - 1}, 0, \cdots, 0)$.&
        \end{tabular}
    \end{center}
    Clearly, if the biased gambler selects the first candidate, they will get utility $1$. Further, we will show that the biased gambler gets utility at most $0$ whenever it accepts a candidate that is not the first candidate.

    Firstly, observe that for any fixed $i$, among all the candidates in $i^{th}$ row, the biased gambler's utility for the first candidate in $i^{th}$ row is the highest. This is because the gambler has the same value for all candidates in $i^{th}$ row, however, the super candidate's value increases as the gambler sees more items, and thus the utility of the gambler decreases as they see more items in the same row. 

    Thus it suffices for us to consider the biased gambler's utility when they accept the first candidate in the $i^{th}$ row of the above matrix for some $i > 0$. In this case, the super candidate $s^{((i-1)\cdot k + 1)} = (\beta^{i-1}, \beta^{i-2}, \cdots, \beta^{i-2})$. We conclude that 
    \begin{align*}
        U_{g_b}(\sigma, (i-1)\cdot k + 1) &= \norm{\sigma^{((i-1)\cdot k + 1)}}_1 - \lambda \cdot \paren*{\norm{s^{((i-1)\cdot k + 1)}}_1 - \norm{\sigma^{((i-1) \cdot k + 1)}}_1}\\
        &= \beta^{i-1} - \lambda \cdot (k-1) \cdot \beta^{i-2} = \beta^{i-1} - \beta \cdot \beta^{i-2} = 0.
    \end{align*}

    Thus, under $\sigma$, the optimal strategy for the biased gambler is to accept the very first candidate, yielding $\E_{\dist} \sq*{U^*_{g_b}} = 1$.
    Meanwhile, the rational prophet's expected utility $\E_{\dist}\sq*{U_{p_r}}$ is equal to the expected largest value, which is $\beta^{\ceil{\frac{n}{k}} - 1} \geq \paren*{\lambda (k-1)}^{\frac{n}{k} - 1}$. Thus $\frac{\E_{\dist}\sq*{U_{p_r}}}{\E_{\dist} \sq*{U^*_{g_b}}} = \Omega\paren*{\paren*{\lambda (k-1)}^{\frac{n}{k} - 1}}$. 
    
    By claim~\ref{claim:classical} this immediately implies that $\frac{\E_{\dist}\sq*{U^*_{g_r}}}{\E_{\dist} \sq*{U^*_{g_b}}} = \Omega\paren*{\paren*{\lambda (k-1)}^{\frac{n}{k} - 1}}$ as well. 
\end{proof}

\claimLinearUtilityGap*
\begin{proof}
    We will consider a sequence $\sigma$ with $n$ candidates where the $t^{th}$ candidate takes value $\sigma^{(t)}$ such that for dimension $j = (t \mod k)$, $\sigma^{(t)}_{j} = \ceil{\frac{t}{k}}$ and for any dimension $j \neq (t \mod k)$, $\sigma^{(t)}_{j} = 0$. Below is a more intuitive representation of the sequence $\sigma$:
    
    We will consider an example similar to the one in claim~\ref{claim:expUtilityGap}, where $\sigma$ is equal to the following sequence:
    \begin{center}
        \begin{tabular}{l l l c l}
        $(1, 0, \cdots, 0)$, &$(0, 1, 0, \cdots, 0)$, &$\cdots$, &$\cdots$, &$(0, \cdots, 0, 1)$,\\
        $(2, 0, \cdots, 0)$, &$(0, 2, 0, \cdots, 0)$, &$\cdots$, &$\cdots$, &$(0, \cdots, 0, 2)$,\\
        $(3, 0, \cdots, 0)$, &$(0, 3, 0, \cdots, 0)$, &$\cdots$, &$\cdots$, &$(0, \cdots, 0, 3)$,\\
        &&$\cdots$&& \\
        $(\ceil{\frac{n}{k}}, 0, \cdots, 0)$, &$(0, \ceil{\frac{n}{k}}, 0, \cdots, 0)$, &$\cdots$, &$(0, \cdots, 0, \ceil{\frac{n}{k}}, 0, \cdots, 0)$.
        \end{tabular}
    \end{center}
    We will show that the biased gambler gets utility at most $1$ no matter what candidate they accept.

    For the same reason as in claim~\ref{claim:expUtilityGap} it suffice for us to consider the biased gambler's utility when they accept the first candidate in the $i^{th}$ row of the above matrix. In this case, the super candidate $s^{((i-1)\cdot k + 1)} = (i, i-1, \cdots, i-1)$. Thus
    \begin{align*}
        U_{g_b}(\sigma, (i-1)\cdot k + 1) &= \norm{\sigma^{((i-1)\cdot k + 1)}}_1 - \lambda \cdot \paren*{\norm{s^{((i-1)\cdot k + 1)}}_1 - \norm{\sigma^{((i-1) \cdot k + 1)}}_1}\\
        &= i - \lambda \cdot (k-1) \cdot (i-1) = i - (i-1) = 1.
    \end{align*}

    Thus, under $\sigma$, one of the optimal strategies for the biased gambler is to accept the very first candidate, yielding $\E_{\dist} \sq*{U^*_{g_b}} = U_{g_b}(\sigma, 1) = 1$. On the other hand, the rational prophet's expected utility $\E_{\dist}\sq*{U_{p_r}}$ is equal to the expected largest value, which is at least $\frac{n}{k}$. Thus $\frac{\E_{\dist}\sq*{U_{p_r}}}{\E_{\dist} \sq*{U^*_{g_b}}} = \Omega(\frac{n}{k})$. By claim~\ref{claim:classical}, this immediately implies that $\frac{\E_{\dist}\sq*{U^*_{g_r}}}{\E_{\dist} \sq*{U^*_{g_b}}} = \Omega(\frac{n}{k})$ as well. 
\end{proof}

\claimUtilityBound*
\begin{proof}
    We know by definition that
    \begin{align*}
        U_{g_b}(\sigma, t) &= \norm{\sigma^{(t)}}_1 - \lambda \cdot (\norm{\vec{s}^{(t)}}_1 - \norm{\sigma^{(t)}}_1) \\
        &= \norm{\sigma^{(t)}}_1 - \lambda \cdot (\norm{\vec{s}^{(t)}}_1 - \norm{\sigma^{(t)}}_1) = (1 + \lambda)\cdot \norm{\sigma^{(t)}}_1 - \lambda \cdot \norm{\vec{s}^{(t)}}_1.
    \end{align*}

    Since $t$ is the smallest timestep where $\norm{\sigma^{(t)}}_1 \geq T$, we know that  $\norm{\sigma^{(t')}}_1 < T$ for any $t' < t$. Thus, the value at any particular index of the super candidate at step $t-1$ is at most $T$. As a result, for any $j$, $\norm{\vec{s}^{(t)}_j}_1 \leq \max(\norm{\sigma^{(t)}_j}_1, \norm{\vec{s}^{(t-1)}_j}_1) \leq \max(\norm{\sigma^{(t)}_j}_1, T)$. This implies that 
    \begin{align*}
        \norm{\vec{s}^{(t)}}_1 = \sum_j \vec{s}^{(t)}_j \leq \sum_j \max(\sigma^{(t)}_j, T) &= \sum_j T + \sum_{j} \paren*{\sigma^{(t)}_j - T}^+.  
    \end{align*}
    Observe that since $\norm{\sigma^{(t)}}_1 - T \geq 0$, in the case where for all $j$, $\norm{\sigma^{(t)}}_1 - T < 0$,  $\sum_{j} \paren*{\sigma^{(t)}_j - T}^+ = 0 \leq \norm{\sigma^{(t)}}_1 \geq T$. On the other hand, when for at least one dimension $j^*$, $\norm{\sigma^{(t)}}_1 - T \geq 0$, we can write $\sum_{j} \paren*{\sigma^{(t)}_j - T}^+ = \sigma^{(t)}_{j^*} - T + \sum_{j \neq j^*} \paren*{\sigma^{(t)}_j - T}^+$, which is at most $\sigma^{(t)}_{j^*} - T + \sum_{j \neq j^*} \sigma^{(t)}_j = \norm{\sigma^{(t)}}_1 - T$. We thus conclude that 
    \begin{align*}
        \norm{\vec{s}^{(t)}}_1 \leq k \cdot T + (\norm{\sigma^{(t)}}_1 - T) = \norm{\sigma^{(t)}}_1 + (k-1) \cdot T.
    \end{align*}
    Combining everything together, we get that 
    \begin{align*}
        U_{g_b}(\sigma, t) = (1 + \lambda)\cdot \norm{\sigma^{(t)}}_1 - \lambda \cdot \norm{\vec{s}^{(t)}}_1 &\geq (1 + \lambda)\cdot \norm{\sigma^{(t)}}_1 - \lambda \cdot ( \norm{\sigma^{(t)}}_1 + (k-1) \cdot T) \\
        &\geq \norm{\sigma^{(t)}}_1 - \lambda (k-1) \cdot T. 
    \end{align*}
\end{proof}

\claimThresholdUtility*

\begin{proof}
    Let $T$ be the threshold where $\alpha = \Pr\sq*{V^* \geq T}$, then $T$ is the threshold used by the algorithm $A^\alpha$. Observe that when the gambler uses the algorithm $A^\alpha$, with probability $\alpha$ the algorithm will select some item, and with probability $1 - \alpha$ the algorithm will select nothing. 

    Conditioned on $A^\alpha$ selecting no item, we know that all items have value $< T$. Thus even at the end of the algorithm, the gambler's value for the super candidate must not exceed $k \cdot T$. This means that the gambler's utility is at least $-\lambda \cdot k \cdot T$.  

    On the other hand, we will prove by using claim~\ref{claim:utilityBound} that conditioned on $A^\alpha$ selecting an item at step $t$, the gambler's utility is at least $\norm{\sigma^{(t)}}_1 - \lambda(k-1) \cdot T$. We will begin by showing a block of formal deduction, which we subsequently unpack in the following paragraphs.
    \begin{align*}
        \E_{\dist} \cond*{U_{g_b}^\alpha}{A^\alpha \text{ selects item }t} &= \E_{\sigma \sim \dist} \cond*{U_{g_b}(\sigma, t)}{t = \argmin_{i \in [n]} \{\norm{\sigma^{(i)}}_1 \geq T\} } \\
        &\geq \E_{\sigma \sim \dist} \cond*{\norm{\sigma^{(t)}}_1 - \lambda(k-1) \cdot T}{t = \argmin_{i \in [n]} \{\norm{\sigma^{(i)}}_1 \geq T\} }\\
        &\geq (1 - \lambda(k-1) )\cdot T +  \E_{\sigma \sim \dist} \cond*{(\norm{\sigma^{(t)}}_1 - T)^+}{\norm{\sigma^{(t)}}_1 \geq T}\\
        &\geq (1 - \lambda(k-1) )\cdot T +  \E_{\sigma^{(t)} \sim \dist_t} \cond*{(\norm{\sigma^{(t)}}_1 - T)^+}{\norm{\sigma^{(t)}}_1 \geq T}\\
        &\geq (1 - \lambda(k-1) )\cdot T +  \frac{\E_{\sigma^{(t)} \sim \dist_t} \sq*{(\norm{\sigma^{(t)}}_1 - T)^+}}{\Pr\sq*{\norm{\sigma^{(t)}}_1 \geq T}}.
    \end{align*}
    In the above argument, the first line is due to $A^\alpha$ selecting the first candidate with value at least $T$. Furthermore, the second line follows from the first line due to claim~\ref{claim:utilityBound}. The second line to the third line is because the value of the candidates are independent, and thus ``$\sigma^{(t)}$ is the first candidate that has value at least $T$'' and ``$\sigma^{(t)}$ has value at least $T$'' have equivalent implications for the value of $\sigma^{(t)}$. The fourth line follows from the third because our formula only involves $\sigma^{(t)}$ and thus the expectation over $\sigma$ drawn from $\dist$ is equivalent to the expectation over $\sigma^{(t)}$ drawn from $\dist_t$.  The forth line to the fifth line is because when $\norm{\sigma^{(t)}}_1 < T$, $(\norm{\sigma^{(t)}}_1 - T)^+$ takes the value $0$, thus $$\E_{\sigma^{(t)} \sim \dist_t} \sq*{(\norm{\sigma^{(t)}}_1 - T)^+} = \E_{\sigma^{(t)} \sim \dist_t} \cond*{(\norm{\sigma^{(t)}}_1 - T)^+}{\norm{\sigma^{(t)}}_1 \geq T} \cdot \Pr\sq*{\norm{\sigma^{(t)}}_1 \geq T}.$$ 

    We will now bound the expected utility of the biased gambler when they use the algorithm $A^\alpha$. The argument is presented in the inequalities immediately below, which will be unpacked in the following paragraphs. 
    \begin{align*}
        \E_{\dist} \sq*{U_{g_b}^\alpha} &= \Pr\sq*{U_{g_b}^\alpha \text{ selects no item}}  \cdot (-k \lambda \cdot T) + \sum_{t \in [n]} \Pr\sq*{U_{g_b}^\alpha \text{ selects item }j} \cdot \E \cond*{U_{g_b}^\alpha}{A^\alpha \text{ selects item }j}  \\
        &\geq (1 - \alpha) \cdot (-k \lambda \cdot T) +   \sum_{t \in [n]} \Pr\sq*{U_{g_b}^\alpha \text{ selects item }t} \cdot \paren*{(1 - \lambda(k-1))\cdot T +   \frac{\E_{\sigma^{(t)} \sim \dist_t} \sq*{(\norm{\sigma^{(t)}}_1 - T)^+}}{\Pr\sq*{\norm{\sigma^{(t)}}_1 \geq T}}}\\
        &\geq (1 - \alpha) \cdot (-k \lambda \cdot T) + \alpha \cdot (1 - \lambda(k-1))\cdot T \\
        &\quad \quad \quad \quad + (1 - \alpha) \cdot \sum_{t \in [n]} \Pr \sq*{\norm{\sigma^{(t)}}_1 \geq T} \cdot \frac{\E_{\sigma^{(t)}_1 \sim \dist_t} \sq*{(\norm{\sigma^{(t)}} - T)^+}}{\Pr\sq*{\norm{\sigma^{(t)}}_1 \geq T}}\\
        &\geq \paren*{(1 + \lambda)\alpha -k \lambda} \cdot T + (1 - \alpha) \cdot \sum_{t \in [n]} \E_{\sigma^{(t)} \sim \dist_t} \sq*{(\norm{\sigma^{(t)}}_1 - T)^+}.
    \end{align*}
    For the above equation, the first and second line is just summing over all possible conditions and using inequalities that we established previous. The second line to the third line is because $\sum_{t \in [n]} \Pr\sq*{U_{g_b}^\alpha \text{ selects item }j} = \alpha$ and the following inequalities,  
    \begin{align*}
        \Pr\sq*{U_{g_b}^\alpha \text{ selects item }t} &= \Pr\sq*{\forall i < t,  \norm{\sigma^{(i)}}_1 < T} \cdot \Pr\sq*{\norm{\sigma^{(t)}}_1 \geq T} \\
        &\geq \Pr\sq*{\forall i \in [n],  \norm{\sigma^{(i)}}_1 < T} \cdot \Pr\sq*{\norm{\sigma^{(t)}}_1 \geq T} = (1 - \alpha) \cdot \Pr\sq*{\norm{\sigma^{(t)}}_1 \geq T}.
    \end{align*}
    The third line to the fourth line is simply due to a cancellation of terms. 
\end{proof}

\lemLBOfflineBound*
\begin{proof}
        Let $\alpha \in (0,1)$ be a fixed value that we will set later. By claim~\ref{claim:thresholdUtility}, we know that 
    \begin{align*}
        \E_{\dist}\sq*{U_{g_b}^\alpha} \geq  \paren*{(1 + \lambda)\alpha -k \lambda} \cdot T + (1 - \alpha) \cdot \sum_{t \in [n]} \E_{\sigma^{(t)} \sim \dist_t} \sq*{(\norm{\sigma^{(t)}}_1 - T)^+}. 
    \end{align*}

    In order to further reason about the above quantity, we need to bound $\displaystyle \sum_{t \in [n]} \E_{\sigma^{(t)} \sim \dist_t} \sq*{(\norm{\sigma^{(t)}}_1 - T)^+}$, which is what we will do next. 
    
    Let us call $(\norm{\sigma^{(t)}}_1 - T)^+$ the surplus from the $t^{th}$ candidate. Note that $\displaystyle \sum_{t \in [n]} \E_{\norm{\sigma^{(t)}}_1 \sim \dist_t} \sq*{(\norm{\sigma^{(t)}}_1 - T)^+}$ is the sum over the expected surplus from each candidate. Thus, it must be at least the expected surplus from the candidate with the largest overall value. Moreover, it must be at least the sum over the expected surplus from candidates that have the largest value on each individual dimension. Formally, the following two inequalities always hold:  
    \begin{align} \label{eq:sumToMax2}
        \sum_{t \in [n]} \E_{\sigma^{(t)} \sim \dist_t} \sq*{(\norm{\sigma^{(t)}}_1 - T)^+} \geq \E_{\dist}\sq*{(V^* - T)^+} \geq \E_{\dist}\sq*{V^* - T},
    \end{align}
    and 
    \begin{align} \label{eq:sumToMaxDim2}
        \sum_{t \in [n]} \E_{\sigma^{(t)} \sim \dist_t} \sq*{(\norm{\sigma^{(t)}}_1 - T)^+} \geq \sum_{j=1}^k \E_{\dist}\sq*{(S_j^* - T)^+} \geq  \E_{\dist}\sq*{\sum_{j=1}^k S_j^*} - k\cdot T \geq \gamma \cdot \E_{\dist}\sq*{V^*} - k \cdot T. 
    \end{align}
    Now, we will use these bounds to derive the final bound on the biased gambler's utility from using algorithm $A^\alpha$. By combining inequality \ref{eq:sumToMax2} and claim~\ref{claim:thresholdUtility}, we know that 
    \begin{align*}
        \E_{\dist} \sq*{U_{g_b}^\alpha} \geq \paren*{(1 + \lambda)\alpha -k \lambda} \cdot T + (1 - \alpha) \cdot \E_{\dist}\sq*{V^* - T}. 
    \end{align*}
    When we set $\alpha = \frac{\lambda k + 1}{2 + \lambda}$, $\paren*{(1 + \lambda)\alpha -k \lambda} = 1 - \alpha = \frac{1 - \lambda(k - 1)}{2 + \lambda}$. This means that for this choice of $\alpha$, the preceding inequality becomes: 
    \begin{align*}
        \E_{\dist} \sq*{U_{g_b}^\alpha} \geq \frac{1 - \lambda(k - 1)}{2 + \lambda} \cdot (T + \E\sq*{V^* - T}) \geq \frac{1 - \lambda(k - 1)}{2 + \lambda} \cdot \E_{\dist}\sq*{V^*} = \frac{1 - \lambda(k - 1)}{2 + \lambda} \cdot \E_{\dist}\sq*{U_{p_r}}. 
    \end{align*}
    
    On the other hand, by combining inequality \ref{eq:sumToMaxDim2} and claim~\ref{claim:thresholdUtility}, we know that 
    \begin{align*}
        \E_{\dist} \sq*{U_{g_b}^\alpha} &\geq \paren*{(1 + \lambda)\alpha -k \lambda} \cdot T + (1 - \alpha) \cdot \paren*{\gamma \cdot \E_{\dist}\sq*{V^*} - k \cdot T}\\
        &\geq  \paren*{(1 + \lambda)\alpha - k \cdot \lambda - (1 - \alpha) \cdot k} \cdot T + (1 - \alpha) \cdot \paren*{\gamma \cdot \E_{\dist}\sq*{V^*}}. 
    \end{align*}
    When we set $\alpha = \frac{k \cdot (1 + \lambda)}{1 + \lambda + k} $, $\paren*{(1 + \lambda)\alpha - k \cdot \lambda - (1 - \alpha) \cdot k} = 0$, therefore for this choice of $\alpha$,
    \begin{align*}
        \E \sq*{U_{g_b}^\alpha} \geq (1 - \alpha) \cdot \gamma \cdot \E_{\dist}\sq*{V^*} = \paren*{1 - \frac{k \cdot (1 + \lambda)}{1 + \lambda + k}} \cdot \gamma \cdot \E_{\dist}\sq*{V^*} = (1 - \lambda(k-1)) \cdot \frac{\gamma}{1 + \lambda + k} \cdot \E_{\dist}\sq*{U_{p_r}}. 
    \end{align*}
    Combining the above two cases we conclude that 
    \begin{align*}
        \E \sq*{U_{g_b}^\alpha} \geq \max \set*{\frac{1}{2 + \lambda}, \frac{\gamma}{1 + \lambda + k}} \cdot (1 - \lambda(k-1)) \cdot \E_{\dist}\sq*{U_{p_r}}, 
    \end{align*}
    which implies our lemma statement. 
\end{proof}
 
\thmUBBoundedUtility*
\begin{proof}
    Let $\beta = \lambda \cdot (k-1)$, and fix some positive integer $w \in \mathbb{N}^+$ and some small constant $\epsilon$ that we will set later. In our example there will be $n = w \cdot k + 1$ candidates, and all candidates will be deterministic except for the last candidate, i.e. $\sigma^{(1)}....\sigma^{(n-1)}$ are drawn from deterministic $\dist_1 ....\dist_{n-1}$ but $\sigma^{n}$ is drawn from stochastic $\dist_n$. We define the sequence of deterministic values as follows: 
    \begin{center}
        \begin{tabular}{ l l l l }
            $(1, 0, \cdots, 0)$, &$(0, 1, 0, \cdots, 0)$, &$\cdots$, &$(0, \cdots, 0, 1)$,\\
            $(1 + \beta, 0, \cdots, 0)$, &$(0, 1 + \beta, 0, \cdots, 0)$, &$\cdots$, &$(0, \cdots, 0, 1 + \beta)$,\\
            $(1 + \beta + \beta^2, 0, \cdots, 0)$, &$(0, 1 + \beta + \beta^2, 0, \cdots, 0)$, &$\cdots$, &$(0, \cdots, 0, 1 + \beta + \beta^2)$,\\
            &$\cdots$ & &\\
            $(\sum_{j = 0}^{w - 1} \beta^{j}, 0 , \cdots, 0)$, &$(0, \sum_{j = 0}^{w - 1} \beta^{j}, 0, \cdots, 0)$, &$\cdots$, &$(0, \cdots, 0, \sum_{j = 0}^{w - 1} \beta^{j}).$  
        \end{tabular}
    \end{center}

    The last randomized candidate is equal to $\paren*{(1  - \epsilon)(1 + \lambda) \cdot \sum_{j = 0}^{w - 1} \beta^{j}/\epsilon, 0, \cdots, 0}$ with probability $\epsilon$ and is equal to $(0, 0, \cdots, 0)$ with probability $1 - \epsilon$. We will prove that given our prior distribution $\dist$, the gambler's expected utility is always at most $1$ no matter which candidate they choose. 

    Notice that as with similar constructions in claim~\ref{claim:expUtilityGap} and claim~\ref{claim:linearUtilityGap}, if the gambler picks a deterministic candidate in some row $i$, then their utility is maximized by picking the first candidate in row $i$, which results in utility
    \begin{align*}
        \E_{\sigma \sim \dist}\sq*{U_{g_b}(\sigma, i \cdot k + 1)} &= \norm{\sigma^{(i\cdot k + 1)}}_1 - \lambda \cdot \paren*{\norm{\vec{s}^{(i \cdot k + 1)}}_1 - \norm{\sigma^{(i \cdot k + 1)}}_1} = \sum_{j=0}^{i-1} \beta^j - \lambda \cdot (k-1) \cdot \sum_{j=0}^{i-2} \beta^j \\
        &= \sum_{j=0}^{i-1} \paren*{\lambda (k-1)}^j - \lambda \cdot (k-1) \cdot \sum_{j=0}^{i-2} \paren*{\lambda (k-1)}^j = 1.
    \end{align*}
    Finally, if the gambler picks the last candidate(i.e. the randomized candidate), then their expected utility is 
    \begin{align*}
        \E_{\sigma \sim F} \sq*{U_{g_b}(\sigma, wk+ 1)} &= \epsilon \cdot \paren*{(1 - \epsilon)(1 + \lambda) \cdot \sum_{i = 0}^{w - 1} \beta^{i}/\epsilon - \lambda \cdot (k-1) \cdot \sum_{i=0}^{w-1} \beta^i} + (1 - \epsilon) \cdot \paren*{-\lambda \cdot k \cdot \sum_{i=0}^{w-1} \beta^i }\\
        &\leq (1 - \epsilon) \cdot \paren*{(1 + \lambda) \cdot \sum_{i = 0}^{w - 1} \beta^{i} - \lambda \cdot k \cdot \sum_{i=0}^{w-1} \beta^i } \\
        &= (1 - \epsilon) \cdot (1 - \lambda (k-1)) \cdot \frac{1 - \paren*{\lambda(k-1)}^w}{1 - \lambda (k-1)}  = (1 - \epsilon) \cdot \paren*{1 - \paren*{\lambda(k-1)}^w} < 1. 
    \end{align*}
    Therefore it is not possible for the biased gambler to get utility more than $1$, consequently, $\E_{\dist} \sq*{U^*_{g_b}} \leq 1$. On the other hand, the rational prophet can get expected utility
    \begin{align*}
        \E_{\dist}\sq*{U_{p_r}} = \E\sq*{V^*} &= \epsilon \cdot \norm{\sigma^{(wk+ 1)}}_1 + (1 - \epsilon) \cdot \norm{\sigma^{(wk)}}_1 \\
        &= \epsilon \cdot \paren*{(1  - \epsilon)(1 + \lambda) \cdot \sum_{i = 0}^{w - 1} \beta^{i}/\epsilon} + (1 - \epsilon) \cdot \sum_{i = 0}^{w - 1} \beta^{i} \\
        &= (1 - \epsilon) (2 + \lambda) \cdot \frac{1 - \paren*{\lambda(k-1)}^w}{1 - \lambda (k-1)}. 
    \end{align*}
    Similarly, the rational gambler can simply always choose the last candidate, thus
    \begin{align*}
        \E_{\dist}\sq*{U^*_{g_r}} \geq  \epsilon \cdot \paren*{(1  - \epsilon)(1 + \lambda) \cdot \sum_{i = 0}^{w - 1} \beta^{i}/\epsilon} = (1 - \epsilon) (1 + \lambda) \cdot \frac{1 - \paren*{\lambda(k-1)}^w}{1 - \lambda (k-1)}. 
    \end{align*} 
    Let $\epsilon$ be such that $(1 - \epsilon)^2 = 1 - \delta$ and let $w = \log_{\lambda (k-1)} \epsilon$. Then $(\lambda(k-1))^w = \epsilon$ and thus 
    \begin{align*}
        \frac{\E_{\dist}\sq*{U_{p_r}}}{\E_{\dist} \sq*{U^*_{g_b}}} \geq \frac{(2 + \lambda)\cdot(1 - \epsilon)^2}{1 - \lambda(k-1)} = \frac{2 + \lambda}{1 - \lambda(k-1)} \cdot (1 - \delta). 
    \end{align*}
    Similarly, 
    \begin{align*}
        \frac{\E_{\dist}\sq*{U^*_{g_r}}}{\E_{\dist} \sq*{U^*_{g_b}}} \geq \frac{(1 + \lambda)\cdot(1 - \epsilon)^2}{1 - \lambda(k-1)} = \frac{1 + \lambda}{1 - \lambda(k-1)} \cdot (1 - \delta). 
    \end{align*}
\end{proof}

\section{Omitted Proofs of Section \ref{result:IID}} \label{relaxed}
\claimIIDExpUtilityGap*

\begin{proof}
    The corollary follows in a relatively straightforward manner from claim~\ref{claim:expUtilityGap} and lemma~\ref{lem:detToiid}. Let $\sigma(m)$ be the sequence considered in claim~\ref{claim:expUtilityGap} of length $m$. Then $\frac{U_{p_r}(\sigma(m))}{U_{g_b}(\sigma(m))} = \Omega\paren*{(\lambda(k-1))^{\frac{m}{k}-1}}$, and in particular $\frac{U_{p_r}(\sigma(m))}{U_{g_b}(\sigma(m))} \geq (\lambda(k-1))^{\frac{m}{k}-1}$. Observe that $\sigma(m)$ is succinct. Thus by lemma~\ref{lem:detToiid}, for any $\epsilon \in (0, 1)$, there exists a distribution $\dist$ with $n = m^{\alpha (m-1)} \cdot \log^{\alpha}(m)$ where $\alpha = \log_{m}\paren*{\frac{(\lambda(k-1))^{\frac{m}{k}-1}}{\epsilon }} + 2$ such that $\frac{\E_{\dist} \sq*{U_{p_r}}}{\E_{\dist}\sq*{U^*_{g_b}}} \geq (1 - \epsilon) \cdot (\lambda(k-1))^{\frac{m}{k} - 1}$. Now in order to get a bound in terms of $n$ we only need to establish the relationship between $n$ and $m$. 
    
    Since $n = m^{\alpha (m-1)} \cdot \log^{\alpha}(m) \leq m^{m \cdot \alpha} = e^{\log{m} \cdot m \cdot \alpha}$, this means that  
    \begin{align*}
        \log n &\leq m \cdot \log m \cdot \paren*{\log_{m}\paren*{\frac{ (\lambda(k-1))^{\frac{m}{k}-1}}{\epsilon }} + 2} \\
        &\leq m \cdot \log\paren*{\frac{ (\lambda(k-1))^{\frac{m}{k}-1}}{\epsilon }} + 2 \cdot m \cdot \log m \\
        &\leq m \cdot \paren*{\frac{m}{k}-1} \cdot \log (\lambda(k-1)) + m \cdot \log\paren*{\frac{1}{ \epsilon}} + 2 m \cdot \log m. 
    \end{align*}
    Let $\epsilon = 1/2$, since $m \geq 2$, we know that $\log m \geq 1$, this means that 
    \begin{align*}
        \log n &\leq \frac{m^2}{k} \cdot \log (\lambda(k-1)) + m \cdot \log 2 + 2 m \log m \\
        &\leq \frac{m^2}{k} \cdot \log (\lambda(k-1)) + 4 m \log m\\
        &\leq 2 \cdot \max \paren*{\frac{m^2}{k} \cdot \log (\lambda(k-1)),  4 m \log m}.
    \end{align*}
    Now there are two cases: 
    \begin{enumerate}
        \item 
        When $\frac{m^2}{k} \cdot \log (\lambda(k-1)) \geq 4 m \log m$, we know that $\log n \leq 2 \cdot \frac{m^2}{k} \cdot \log (\lambda(k-1))$. This implies that $m^2 \geq \frac{k \cdot \log n}{2 \cdot \log (\lambda(k-1))}$ and thus $m \geq \frac{\sqrt{k} \cdot \log^{\frac{1}{2}} n }{\sqrt{2} \cdot \log^{1/2} (\lambda(k-1))}$. 
        \item
        When $\frac{m^2}{k} \cdot \log (\lambda(k-1)) < 4 m \log m$, we know that $\log n \leq 8 m \log m \leq 8 m^2$, which means that $m \geq \frac{\log^{\frac{1}{2}} n}{2 \sqrt{2}}$. 
    \end{enumerate}
    We conclude that no matter which case we are in, $m \geq \frac{1}{\sqrt{2}} \cdot \log^{\frac{1}{2}} n \cdot \min \paren*{\frac{\sqrt{k}}{\log^{1/2} (\lambda(k-1))}, \frac{1}{2}}$, which means that 
    \begin{align*}
        \frac{\E_{\dist} \sq*{U_{p_r}}}{\E_{\dist}\sq*{U^*_{g_b}}} \geq (1 - \epsilon) \cdot \paren*{\lambda(k-1))^{\frac{m}{k} - 1}} = \Omega \paren*{ 
            (\lambda(k-1))^{\frac{1}{\sqrt{2k}} \cdot \log^{\frac{1}{2}} n \cdot \min \paren*{\frac{1}{\log^{1/2} (\lambda(k-1))}, \frac{1}{2\sqrt{k}}} - 1}
        }.
    \end{align*}
    Let $f(\lambda,k) = \frac{1}{\sqrt{2k}} \cdot \min \paren*{\frac{1}{\log^{1/2} (\lambda(k-1))}, \frac{1}{2\sqrt{k}}} - 1$, and we get the desired claim that $\frac{\E_{\dist} \sq*{U_{p_r}}}{\E_{\dist}\sq*{U^*_{g_b}}} = \Omega \paren*{(\lambda(k-1))^{f(\lambda, k) \cdot \log^{\frac{1}{2}} n }}$. By claim~\ref{claim:classical},  $\frac{\E_{\dist} \sq*{U^*_{g_r}}}{\E_{\dist}\sq*{U^*_{g_b}}} = \Omega \paren*{(\lambda(k-1))^{f(\lambda, k) \cdot \log^{\frac{1}{2}} n }}$ as well.
\end{proof}

\claimIIDlogUtilityGap*
\begin{proof}
    The corollary follows in a relatively straightforward manner from claim~\ref{claim:linearUtilityGap} and lemma~\ref{lem:detToiid}. Let $\sigma(m)$ be the sequence considered in claim~\ref{claim:linearUtilityGap} of length $m$. Then $\frac{U_{p_r}(\sigma(m))}{U_{g_b}(\sigma(m))} = \Omega\paren*{\frac{m}{k}}$, and in particular $\frac{U_{p_r}(\sigma(m))}{U_{g_b}(\sigma(m))} \geq \frac{m}{k}$. Observe that $\sigma(m)$ is succinct. Thus by lemma~\ref{lem:detToiid}, for any $\epsilon \in (0, 1)$, there exists a distribution $\dist$ with $n = m^{\alpha (m-1)} \cdot \log^{\alpha}(m)$ where $\alpha = \log_{m}\paren*{\frac{m}{k \cdot \epsilon }} + 2$ such that $\frac{\E_{\dist} \sq*{U^*_{p_r}}}{\E_{\dist}\sq*{U_{g_b}}} \geq  (1 - \epsilon) \cdot \frac{m}{k}$.
    
    By using a similar analysis as in corollary~\ref{claim:iidExpUtilityGap}, we can compute the relationship between $n$ and $m$ as follows: 
    \begin{align*}
        \log n \leq \log \paren*{m^{\alpha (m-1)} \cdot \log^{\alpha}(m)} \leq \log \paren*{m^{\alpha m }} = \alpha \cdot m \log m.  
    \end{align*}
    Now let us set $\epsilon = 1/2$ and $\beta = \alpha - 3$, then
    \begin{align*}
       \beta = \paren*{\log_{m}\paren*{\frac{m}{k \cdot  \epsilon}} + 2} - (2 + \log_{m} m) =  \log_{m}\paren*{\frac{1}{kx\cdot \epsilon}} = \log_{m}\paren*{\frac{2}{k}} = O(1).
    \end{align*}
    
    We conclude that $\alpha = 3 + \beta = 3 + O(1) = \Theta(1)$. We will now show that $m = \Omega(\frac{\log n}{\log \log n})$. 
    \begin{enumerate}
        \item 
        Assume $m \geq \log n$, then it's clear that $m = \Omega(\frac{\log n}{\log \log n})$. 
        \item 
        Assume $m \leq \log n$, then $\log m \leq \log \log n$, since $\log n \leq \alpha \cdot m \log m$, we conclude that $$m \geq \frac{\log n}{\alpha \log m} \geq  \frac{\log n}{\alpha \log \log n} = \Omega\paren*{\frac{\log n}{\log \log n}}. $$
    \end{enumerate}
    Consequently, 
    $\frac{\E_{\dist} \sq*{U_{p_r}}}{\E_{\dist}\sq*{U^*_{g_b}}}  \geq (1 - \epsilon) \cdot \frac{m}{k} = \Omega\paren*{\frac{\log n}{k \cdot \log \log n}}
    $ and by claim~\ref{claim:classical} $ \frac{\E_{\dist} \sq*{U^*_{g_r}}}{\E_{\dist}\sq*{U^*_{g_b}}} = \Omega\paren*{\frac{\log n}{k \cdot \log \log n}}$ as well. 
\end{proof}

\thmIIDInftyGap*
\begin{proof}
    By corollaries~\ref{claim:iidExpUtilityGap} and ~\ref{claim:iidlogUtilityGap} we know that when $\lambda \cdot (k-1) \geq 1$, for any $n \in \mathbb{N}^+$, there exists an i.i.d distribution $\dist_0(n)$ such that the prophet and online utility ratio of $\dist_0(n)$ is an unboundedly increasing function in terms of the number of candidates $n$. Thus for any $T \in\mathbb{N}$, there exists a $n \in \mathbb{N}^+$ such that the prophet and online utility of $\dist_0(n)$ both exceed $T$. 
    
    Recall that the formal definition for competitive ratio is slightly different in the i.i.d prior distribution setting in contrast to the adversarial order setting. Namely, the prophet and online competitive ratios are defined as $\max \limits_{\dist \in \mathcal{U}}  \frac{\E_{\dist}\sq*{U_{p_r}}}{\E_{\dist} \sq*{U^*_{g_b}}}$ and $\max \limits_{\dist \in \mathcal{U}}  \frac{\E_{\dist}\sq*{U^*_{g_r}}}{\E_{\dist} \sq*{U^*_{g_b}}}$, where $\mathcal{U}$ is the set of all i.i.d distribution. 
    Observe that the prophet competitive ratio 
    $\max \limits_{\dist \in \mathcal{U}}  \frac{\E_{\dist}\sq*{U_{p_r}}}{\E_{\dist} \sq*{U^*_{g_b}}}$ is at least $\max \limits_{n \in \mathbb{N}^+} \frac{\E_{\dist_0(n)}\sq*{U_{p_r}}}{\E_{\dist_0(n)} \sq*{U^*_{g_b}}}$, thus any $T \in \mathbb{N}$ serves as a lower bound to the prophet competitive ratio, and thus the prophet competitive ratio is equal to $\infty$. Similarly, the online competitive ratio is equal to $\infty$. 
\end{proof} 
\claimUBBoundedUtilityIID*

\begin{proof}
    We will consider a sequence that is similar to the mostly deterministic distribution considered in theorem~\ref{thm:UBBoundedUtility}, except that the last candidate, which has randomized value, is excluded. Specifically, let $\beta = \lambda \cdot k$, let $w$ be a parameter that we will set later and let $\sigma$ be the following sequence:
    \begin{center}
        \begin{tabular}{ l l l l }
            $(1, 0, \cdots, 0)$, &$(0, 1, 0, \cdots, 0)$, &$\cdots$, &$(0, \cdots, 0, 1)$,\\
            $(1 + \beta, 0, \cdots, 0)$, &$(0, 1 + \beta, 0, \cdot(1 - \lambda (k-1))(1 + \delta)s, 0)$, &$\cdots$, &$(0, \cdots, 0, 1 + \beta)$,\\
            $(1 + \beta + \beta^2, 0, \cdots, 0)$, &$(0, 1 + \beta + \beta^2, 0, \cdots, 0)$, &$\cdots$, &$(0, \cdots, 0, 1 + \beta + \beta^2)$,\\
            &$\cdots$ & &\\
            $(\sum_{i = 0}^{w - 1} \beta^{i}, 0 , \cdots, 0)$, &$(0, \sum_{i = 0}^{w - 1} \beta^{i}, 0, \cdots, 0)$, &$\cdots$, &$(0, \cdots, 0, \sum_{i = 0}^{w - 1} \beta^{i})$. \ 
        \end{tabular}
    \end{center}
    As reasoned in theorem~\ref{thm:UBBoundedUtility}, the biased gambler has utility $\leq 1$ when selecting any candidate in this sequence. However, the rational prophet will pick a candidate in the last row, and gains utility $\sum_{i = 0}^{w-1} \beta^{i} = \frac{1 - \beta^w}{1 - \beta}$. Thus, 
    \begin{align*}
        \frac{U_{p_r}(\sigma)}{U_{g_b}(\sigma)} \geq \frac{1 - \beta^w}{1 - \beta} = \frac{1 - (\lambda(k-1))^w}{1 - \lambda(k-1)}. 
    \end{align*}
   Let $\epsilon$ be such that $(1 - \epsilon)^2 = 1 - \delta$, and let $w = \epsilon \cdot \log_{\lambda(k+1)}$, then $(\lambda(k-1))^w = \epsilon$. Thus
   \begin{align*}
    \frac{U_{p_r}(\sigma)}{U_{g_b}(\sigma)} \geq \frac{1 - (\lambda(k-1))^w}{1 - \lambda(k-1)} \geq (1 - \epsilon) \cdot \frac{1}{1 - \lambda(k-1)}. 
\end{align*} 
    Since all candidates in $\sigma$ are distinct, $\sigma$ is succinct, hence by lemma~\ref{lem:detToiid}, there exists a prior distribution $\dist$ where all candidates are i.i.d such that 
    \begin{align*}
        \frac{\E_{\dist}[U_{p_r}]}{\E_{\dist}[U^*_{g_b}]} \geq (1 - \epsilon) \cdot \frac{U_{p_r}(\sigma)}{U_{g_b}(\sigma)}  = (1 - \epsilon)^2 \cdot  \frac{1}{1 - \lambda(k-1)}  = (1 - \lambda(k-1)) \cdot (1 - \delta). 
    \end{align*}
\end{proof}

\section{Monotonicity Results}
\label{app:monotonicity}

\begin{claim}
If $\|s'^{(t)}\|_1 < \|s^{(t)}\|_1$ for sequences $\sigma'$ and $\sigma$, where $\sigma^{(t)} = \sigma'^{(t)}$, then $U_{g_b}(\sigma, t) \leq U_{g_b}(\sigma', t)$. 
\end{claim}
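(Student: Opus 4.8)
The plan is to simply unpack the definition of the biased gambler's utility and observe that it is an affine (in fact, monotone decreasing) function of $\|s^{(t)}\|_1$ once the selected candidate $\sigma^{(t)}$ is fixed. Concretely, recall from the model section that
\begin{align*}
    U_{g_b}(\sigma, t) = \|\sigma^{(t)}\|_1 - \lambda\paren*{\|s^{(t)}\|_1 - \|\sigma^{(t)}\|_1} = (1+\lambda)\cdot\|\sigma^{(t)}\|_1 - \lambda\cdot\|s^{(t)}\|_1,
\end{align*}
and similarly $U_{g_b}(\sigma', t) = (1+\lambda)\cdot\|\sigma'^{(t)}\|_1 - \lambda\cdot\|s'^{(t)}\|_1$.

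Since by hypothesis $\sigma^{(t)} = \sigma'^{(t)}$, the term $(1+\lambda)\cdot\|\sigma^{(t)}\|_1$ is identical in the two expressions, so the difference is entirely governed by the reference points. First I would write $U_{g_b}(\sigma', t) - U_{g_b}(\sigma, t) = \lambda\paren*{\|s^{(t)}\|_1 - \|s'^{(t)}\|_1}$. Then, using $\lambda \geq 0$ (part of the model's standing assumption) together with the hypothesis $\|s'^{(t)}\|_1 < \|s^{(t)}\|_1$, the right-hand side is nonnegative, which gives $U_{g_b}(\sigma, t) \leq U_{g_b}(\sigma', t)$ as desired.

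This is essentially a one-line computation, so there is no real obstacle; the only care needed is to note that nothing about the two sequences other than the current selected candidate and the $\ell_1$-norms of their reference points enters the utility, so the stated comparison holds regardless of how $\sigma$ and $\sigma'$ differ elsewhere. (This is also exactly the fact invoked informally in the proof of \Cref{claim:representation} as ``having seen more candidates in the past can only decrease the biased gambler's utility,'' so one could alternatively just cite that reasoning.)
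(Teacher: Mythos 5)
Your proof is correct and takes essentially the same approach as the paper: both simply unpack the definition of $U_{g_b}$, cancel the common term involving $\|\sigma^{(t)}\|_1 = \|\sigma'^{(t)}\|_1$, and use $\lambda \geq 0$ together with the hypothesis on the reference-point norms. Your version is slightly more explicit about isolating the difference $\lambda(\|s^{(t)}\|_1 - \|s'^{(t)}\|_1)$, but the computation is the same one-liner.
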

\begin{proof}
From the definition of the biased gambler's utility we have: 
$$\|\sigma^{(t)}\|_{1} - \lambda \bigg(\|\vec{s}^{(t)}\|_1 - \|\sigma^{(t)}\|_1\bigg) \leq \|\sigma'^{(t)}\|_{1} - \lambda \bigg(\|\vec{s}'^{(t)}\|_1 - \|\sigma'^{(t)}\|_1\bigg)  \implies U_{g_b}(\sigma, t) \leq U_{g_b}(\sigma', t) $$
\end{proof}

\begin{claim}
For any super candidate $s^{(t)}$ and candidate $\sigma^{(t)}$, if $\lambda' > \lambda \geq 0$, then, when they select candidate $\sigma^{(t)}$, the biased gambler receives greater utility with loss aversion $\lambda$ than with loss aversion $\lambda'$, namely $U_{g_{b_{\lambda}}}(\sigma, t) \geq U_{g_{b_{\lambda'}}}(\sigma, t)$.
\end{claim}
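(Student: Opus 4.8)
This is an immediate consequence of the definition of the biased gambler's utility together with the fact that the super candidate coordinatewise dominates the chosen candidate. The plan is: (i) observe that the ``loss'' quantity $\|\vec{s}^{(t)}\|_1 - \|\sigma^{(t)}\|_1$ appearing in $U_{g_b}$ is always nonnegative; (ii) write the difference of the two utilities and use monotonicity in $\lambda$.

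\textbf{Step 1: the loss term is nonnegative.} By definition of the super candidate, for every coordinate $j$ we have $\vec{s}^{(t)}_j = \max_{w \in [t]} \sigma^{(w)}_j \geq \sigma^{(t)}_j$, since $t \in [t]$. Summing over $j$ gives $\|\vec{s}^{(t)}\|_1 \geq \|\sigma^{(t)}\|_1$, so $\|\vec{s}^{(t)}\|_1 - \|\sigma^{(t)}\|_1 \geq 0$. (This is the only place where anything needs to be checked, and even here the point is just that the current candidate is itself among those maxed over; so there is really no substantive obstacle in the argument.)

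\textbf{Step 2: difference the utilities.} Recall $U_{g_{b_\lambda}}(\sigma,t) = \|\sigma^{(t)}\|_1 - \lambda\bigl(\|\vec{s}^{(t)}\|_1 - \|\sigma^{(t)}\|_1\bigr)$, and similarly with $\lambda'$ in place of $\lambda$; note that $\vec{s}^{(t)}$ and $\sigma^{(t)}$ do not depend on the loss-aversion parameter. Hence
\begin{align*}
U_{g_{b_\lambda}}(\sigma,t) - U_{g_{b_{\lambda'}}}(\sigma,t) = (\lambda' - \lambda)\bigl(\|\vec{s}^{(t)}\|_1 - \|\sigma^{(t)}\|_1\bigr) \geq 0,
\end{align*}
where the inequality uses $\lambda' > \lambda$ from the hypothesis and Step 1. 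This gives $U_{g_{b_\lambda}}(\sigma,t) \geq U_{g_{b_{\lambda'}}}(\sigma,t)$, as claimed.
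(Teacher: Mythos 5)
Your proposal is correct and takes essentially the same approach as the paper: write both utilities from the definition and compare, using $\lambda' > \lambda$ together with $\|\vec{s}^{(t)}\|_1 \geq \|\sigma^{(t)}\|_1$. You are a bit more careful than the paper in spelling out why the loss term is nonnegative (the paper leaves this implicit), but the argument is the same.
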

\begin{proof}
From the definition of the biased gambler's utility and the fact that $\lambda' > \lambda$, we have: 
$$\|\sigma^{(t)}\|_{1} - \lambda \bigg(\|\vec{s}^{(t)}\|_1 - \|\sigma^{(t)}\|_1\bigg) \geq \|\sigma^{(t)}\|_{1} - \lambda' \bigg(\|\vec{s}^{(t)}\|_1 - \|\sigma^{(t)}\|_1\bigg)  \implies U_{g_{b_{\lambda}}}(\sigma, t) \geq U_{g_{b_{\lambda'}}}(\sigma, t)$$
\end{proof}

Next, we conduct analysis regarding \citet{kleinberg2021optimal}'s construct of patience in our multidimensional setting. For sake of semantic clarity, we re-iterate their definition here: 
\begin{definition}[Patient]
A stopping rule $\pi$ is more patient than a stopping rule $\pi '$ if for every realization of the candidates the stopping rule $\pi$ either selects the same candidate as $\pi'$ or selects a candidate that is later in the sequence.
\end{definition}
Furthermore, for any stopping rule $\pi$, let us define the notation $\pi(\sigma)$ to denote the value of the candidate selected by the $\pi$ when the realization of the candidates is the sequence $\sigma$. Specifically, $\pi_\lambda(\sigma)$ denotes the value of the candidate selected by the optimal $\lambda$-biased stopping rule.  

\begin{claim}
\label{patience_expectation}
Given a prior distribution $\dist$, if an optimal multidimensional $\lambda$-biased stopping rule $\pi_{\lambda}$ is more patient than another multidimensional stopping rule $\pi$, then the expected value of the candidate selected by $\pi_{\lambda}$ is higher than the expected value of the candidate selected by $\pi$ (i.e. $\E_{\sigma \sim \dist}\sq*{\pi_{\lambda}(\sigma)} \geq \E_{\sigma \sim \dist}\sq*{\pi(\sigma)}$).
\end{claim}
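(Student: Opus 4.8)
The plan is to combine the optimality of $\pi_\lambda$ as a $\lambda$-biased stopping rule with the fact that the super candidate only grows over time. Write $\tau_\lambda(\sigma)$ and $\tau(\sigma)$ for the indices of the candidates selected by $\pi_\lambda$ and $\pi$ on realization $\sigma$, so that $\pi_\lambda(\sigma) = \norm{\sigma^{(\tau_\lambda(\sigma))}}_1$ and $\pi(\sigma) = \norm{\sigma^{(\tau(\sigma))}}_1$. First I would rewrite the biased utility in the linearized form already used in the proof of \Cref{claim:utilityBound}, namely $U_{g_b}(\sigma,t) = (1+\lambda)\norm{\sigma^{(t)}}_1 - \lambda\norm{\vec{s}^{(t)}}_1$, so that
\[
  \E_{\sigma\sim\dist}\sq*{U_{g_b}(\sigma,\tau_\lambda(\sigma))} = (1+\lambda)\,\E_{\sigma\sim\dist}\sq*{\pi_\lambda(\sigma)} - \lambda\,\E_{\sigma\sim\dist}\sq*{\norm{\vec{s}^{(\tau_\lambda(\sigma))}}_1},
\]
and likewise with $\pi$ in place of $\pi_\lambda$.

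Next I would invoke the two hypotheses. Because $\pi_\lambda$ maximizes the expected biased utility over all stopping rules, and $\pi$ is one such rule, $\E_{\sigma\sim\dist}\sq*{U_{g_b}(\sigma,\tau_\lambda(\sigma))} \ge \E_{\sigma\sim\dist}\sq*{U_{g_b}(\sigma,\tau(\sigma))}$. Because $\pi_\lambda$ is more patient than $\pi$, we have $\tau_\lambda(\sigma) \ge \tau(\sigma)$ for every realization $\sigma$; and since each coordinate $\vec{s}^{(t)}_j = \max_{w \le t}\sigma^{(w)}_j$ is nondecreasing in $t$, so is $\norm{\vec{s}^{(t)}}_1$. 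Hence $\norm{\vec{s}^{(\tau_\lambda(\sigma))}}_1 \ge \norm{\vec{s}^{(\tau(\sigma))}}_1$ pointwise, and therefore $\E_{\sigma\sim\dist}\sq*{\norm{\vec{s}^{(\tau_\lambda(\sigma))}}_1} \ge \E_{\sigma\sim\dist}\sq*{\norm{\vec{s}^{(\tau(\sigma))}}_1}$.

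Substituting the two identities into the optimality inequality and rearranging gives
\[
  (1+\lambda)\,\E_{\sigma\sim\dist}\sq*{\pi_\lambda(\sigma)} \ge (1+\lambda)\,\E_{\sigma\sim\dist}\sq*{\pi(\sigma)} + \lambda\paren*{\E_{\sigma\sim\dist}\sq*{\norm{\vec{s}^{(\tau_\lambda(\sigma))}}_1} - \E_{\sigma\sim\dist}\sq*{\norm{\vec{s}^{(\tau(\sigma))}}_1}} \ge (1+\lambda)\,\E_{\sigma\sim\dist}\sq*{\pi(\sigma)},
\]
where the last inequality uses the monotonicity conclusion from the previous paragraph; dividing by $1+\lambda>0$ yields $\E_{\sigma\sim\dist}\sq*{\pi_\lambda(\sigma)} \ge \E_{\sigma\sim\dist}\sq*{\pi(\sigma)}$, which is the claim.

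Since the computation is elementary, there is no real main obstacle; the only point requiring care is the boundary case in which a rule declines every candidate. I would handle this by treating ``decline all'' as selecting a phantom candidate of value $0$ whose associated super candidate is $\vec{s}^{(n)}$, so that declining yields biased utility $-\lambda\norm{\vec{s}^{(n)}}_1 \le 0$; this keeps both the linearized identity and the monotonicity of the super-candidate norm at the selected index valid, so the argument above goes through unchanged. (If the model is instead set up so that an optimal biased rule never declines while a candidate of nonnegative biased utility remains, this case is vacuous.) The one other thing to double-check is simply that ``$\pi_\lambda$ is an optimal $\lambda$-biased stopping rule'' legitimately licenses the comparison against this particular $\pi$, which it does because optimality is with respect to the full class of stopping rules and $\pi$ belongs to it.
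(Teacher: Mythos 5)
Your proof is correct and uses the same core ingredients as the paper's: the linearization $U_{g_b}(\sigma,t)=(1+\lambda)\norm{\sigma^{(t)}}_1-\lambda\norm{\vec{s}^{(t)}}_1$, the optimality of $\pi_\lambda$ giving the expected-utility comparison, and the monotonicity of the super-candidate norm at the selected index coming from patience. The paper phrases this as a proof by contradiction (assume $\E[\pi_\lambda(\sigma)]<\E[\pi(\sigma)]$ and derive a violation of the utility inequality), whereas you rearrange directly and divide by $1+\lambda$; this is a cosmetic difference, not a different route, though your version is arguably the cleaner presentation and you also flag the decline-all boundary case, which the paper leaves implicit.
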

\begin{proof}
First, note that we are reasoning about the objective value of the selected candidates not about the utilities that they yield to the agents. Let $t(\sigma)$ and $w(\sigma)$ denote the index of the candidate selected by $\pi$ and $\pi_\lambda$ when the realization of candidates is the sequence $\sigma$.   

By the definition of patience, for any sequence $\sigma$ in the support of $\dist$, $t(\sigma) \leq w(\sigma)$. Since $\pi_\lambda$ is optimal for a $\lambda$-biased agent, for a $\lambda$-biased agent playing optimally on $\sigma$:
$$\E_{\sigma \sim \dist}\sq*{U_{g_b}(\sigma, w(\sigma))} \geq \E_{\sigma \sim \dist}\sq*{U_{g_b}(\sigma, t(\sigma))} $$ 
Note that this is a statement about utilities not about the objective values of the items selected. Now we can substitute in the definition of $U_{g_b}$ and relate the values of candidates $\sigma^{(t(\sigma))}$ and $\sigma^{(w(\sigma))}$. 
\begin{align*}
   \E_{\sigma \sim \dist}\sq*{\norm{\sigma^{(w(\sigma))}}_1  - \lambda(\norm{s^{(w(\sigma))}}_1 - \norm{\sigma^{(w(\sigma))}}_1)} 
   \geq 
   \E_{\sigma \sim \dist} \sq*{\norm{\sigma^{(t(\sigma))}}_1  - \lambda(\norm{s^{(t(\sigma))}}_1 - \norm{\sigma^{(t(\sigma))}}_1)}.
\end{align*}
Re-expressing this algebraically, we have: 
\begin{align*}
    (1 + \lambda) \E_{\sigma \sim \dist}\sq*{\norm{\sigma^{(w(\sigma))}}_1}  - \lambda \E_{\sigma \sim \dist}\sq*{\norm{s^{(w(\sigma))}}_1} 
    \geq 
    (1 + \lambda)  \E_{\sigma \sim \dist}\sq*{\norm{\sigma^{(t(\sigma))}}_1}  - \lambda \E_{\sigma \sim \dist}\sq*{\norm{s^{(t(\sigma))}}_1}.  
\end{align*}

Now assume for contradiction that $\E_{\sigma \sim \dist}\sq*{\norm{\sigma^{(w(\sigma))}}_1} < \E_{\sigma \sim \dist}\sq*{\norm{\sigma^{(t(\sigma))}}_1}$. Observe that since for any sequence $\sigma$ in the support of $\dist$, $w(\sigma) \geq t(\sigma)$, it follows that $\|s^{(t(\sigma))}\|_1 \leq \|s^{(w(\sigma))}\|_1$ and thus $\E_{\sigma \sim \dist}\sq*{\norm{s^{(t(\sigma))}}_1} \leq \E_{\sigma \sim \dist}\sq*{\norm{s^{(w(\sigma))}}_1}$.  Consequently, our assumption violates the above inequality which we know to be true, thus it must be the case that $\E_{\sigma \sim \dist}\sq*{\norm{\sigma^{(w(\sigma))}}_1} > \E_{\sigma \sim \dist}\sq*{\norm{\sigma^{(t(\sigma))}}_1}$.
\end{proof}

\begin{claim}
\label{monotonic_patience}
The optimal $\lambda$-biased stopping rule, $\pi_{\lambda}$, is more patient than the optimal $\lambda'$-biased stopping rule, $\pi_{\lambda'}$, for $\lambda \leq \lambda'$.
\end{claim}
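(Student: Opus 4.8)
The plan is to describe both optimal rules through the natural dynamic program and show that raising the loss-aversion parameter only \emph{enlarges the set of states at which the gambler stops}, so that $\pi_\lambda$ can never stop before $\pi_{\lambda'}$. For a $\mu$-biased agent let $V_\mu(t,s)$ denote the optimal expected continuation utility when the agent is about to inspect candidate $t$ and the coordinatewise maximum of $\sigma^{(1)},\dots,\sigma^{(t-1)}$ equals $s$; throughout we use the convention (the one under which the model's utility bounds, e.g.\ in \Cref{claim:thresholdUtility}, are stated) that an agent who reaches the end without stopping must take candidate $n$, so every $V_\mu$ is finite and the optimal rule always selects a candidate. When candidate $t$ is revealed to equal $x$, the new super candidate is $s^+=\max(s,x)$ (coordinatewise), and accepting yields utility $\|x\|_1-\mu D$ where $D=D(s,x):=\|s^+\|_1-\|x\|_1=\sum_j(s_j-x_j)^+\ge 0$; the optimal rule accepts $x$ exactly when $\|x\|_1-\mu D\ge V_\mu(t+1,s^+)$, and we call the set of such $x$ the acceptance region $A_\mu(t,s)$. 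Because the state $(t,s^{(t-1)})$ is a deterministic function of the realized prefix, $\pi_\lambda$ and $\pi_{\lambda'}$ traverse exactly the same sequence of states until one of them stops; hence it suffices to prove $A_\lambda(t,s)\subseteq A_{\lambda'}(t,s)$ whenever $0\le\lambda\le\lambda'$, since this forces $\pi_{\lambda'}$ to stop no later than $\pi_\lambda$ on every realization, which is exactly the patience comparison.

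The heart of the argument is the following lemma: if $x\in A_\lambda(t,s)$, then \emph{every} continuation rule $\pi$ from state $(t+1,s^+)$ has expected loss-at-stopping at least $D$, where the loss at stopping time $w\ge t+1$ is $L_w:=\|s^{(w)}\|_1-\|\sigma^{(w)}\|_1$. Indeed, suppose some $\pi_0$ had $\E[L_{w_{\pi_0}}]<D$. The super candidate is coordinatewise nondecreasing along the sequence, so $\|s^{(w)}\|_1\ge\|s^+\|_1$ for all $w\ge t+1$ and therefore $\|\sigma^{(w)}\|_1\ge\|s^+\|_1-L_w$; running $\pi_0$, the $\lambda$-biased agent would then obtain continuation utility at least
\[
\E\big[\|\sigma^{(w_{\pi_0})}\|_1-\lambda L_{w_{\pi_0}}\big]\ \ge\ \|s^+\|_1-(1+\lambda)\,\E[L_{w_{\pi_0}}]\ >\ \|s^+\|_1-(1+\lambda)D\ =\ \|x\|_1-\lambda D ,
\]
using $\|s^+\|_1=\|x\|_1+D$; this contradicts $x\in A_\lambda(t,s)$, i.e.\ $\|x\|_1-\lambda D\ge V_\lambda(t+1,s^+)$.

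To finish, fix $x\in A_\lambda(t,s)$ and $\lambda'\ge\lambda$, and let $\pi^*$ be an optimal continuation rule for the $\lambda'$-biased agent from $(t+1,s^+)$, so that $V_{\lambda'}(t+1,s^+)=\E\big[\|\sigma^{(w_{\pi^*})}\|_1-\lambda' L_{w_{\pi^*}}\big]$. Since the $\lambda$-biased agent may also run $\pi^*$, $V_\lambda(t+1,s^+)\ge\E\big[\|\sigma^{(w_{\pi^*})}\|_1-\lambda L_{w_{\pi^*}}\big]$; subtracting the two relations and using $\lambda'\ge\lambda$ together with the lemma ($\E[L_{w_{\pi^*}}]\ge D$),
\[
V_{\lambda'}(t+1,s^+)\ \le\ V_\lambda(t+1,s^+)-(\lambda'-\lambda)\,\E[L_{w_{\pi^*}}]\ \le\ V_\lambda(t+1,s^+)-(\lambda'-\lambda)D .
\]
Hence $\|x\|_1-\lambda'D-V_{\lambda'}(t+1,s^+)\ \ge\ \|x\|_1-\lambda D-V_\lambda(t+1,s^+)\ \ge\ 0$, i.e.\ $x\in A_{\lambda'}(t,s)$, giving the monotonicity of acceptance regions and thus the claim. (If ties are instead broken toward continuing, the same computation with strict inequalities applies.)

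I expect the main obstacle to be the lemma --- and within it, the one place the multidimensionality actually enters: the super candidate being coordinatewise monotone is what guarantees $\|s^{(w)}\|_1\ge\|s^+\|_1$ at every later stopping time and hence turns a hypothetically low-loss continuation into an option so attractive that the $\lambda$-biased agent would not have stopped. The remainder is the familiar ``evaluate each agent's value by feeding it the other agent's rule'' manipulation used for the one-dimensional version in \citet{kleinberg2021optimal}; the only points requiring care are fixing the end-of-sequence convention so that $V_\mu$ is well defined, and noting up front that the two rules share the same state sequence, so that pointwise monotonicity of the acceptance regions genuinely yields the realization-by-realization patience comparison demanded by the definition.
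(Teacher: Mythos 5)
Your proof is correct and follows essentially the same approach as the paper's: assume for contradiction that $\pi_\lambda$ would stop at a state where $\pi_{\lambda'}$ would continue, use $\pi_\lambda$'s optimality together with coordinatewise monotonicity of the super candidate to show that the expected loss at any later stopping time is at least the loss $D$ incurred by stopping now, and then add $-(\lambda'-\lambda)$ times that inequality to the $\lambda$-optimality condition to conclude the $\lambda'$-biased agent also weakly prefers to stop. Your dynamic-programming and acceptance-region framing is a more explicit presentation of the paper's direct argument, and your proof of the key lemma (that every continuation from the post-acceptance state has expected loss at least $D$) is a clean contradiction with $\pi_\lambda$'s optimality; this is the same ``intermediate claim'' the paper uses, and your derivation of it is tighter than the paper's two-case argument, whose first case as written does not obviously yield the claimed inequality.
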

\begin{proof}
\newcommand{\event}{A}
Let $t'$ denote the random variable representing the index of the candidate $\pi_\lambda'$ selects. For contradiction, assume that there exists realization $\sigma^{(1)}, \cdots, \sigma^{(t)} = \sigma_1,....,\sigma_t$ for which $\pi_{\lambda}$ selects $\sigma^{(t)}$ and $t'$ (the choice of $\pi_{\lambda'}$) is the index of some candidate after $t$. Let event $\event$ be the event where $\sigma^{(1)}, \cdots, \sigma^{(t)} = \sigma_1,....,\sigma_t$ and let $s_t$ be the super candidate at time step $t$ conditioned on event $\event$. We claim that after seeing $\sigma_1,....,\sigma_t$, the $\lambda'$-biased agent will get higher expected utility by using a strategy that selects $\sigma_t$ with probability $1$ than by using any other strategy. Thus our assumption must be false. We now prove our claim. Since $\pi_{\lambda}$ is optimal for a $\lambda$-biased agent, it must be the case that $\E_{\sigma \sim \dist} \cond*{U_{g_{b_{\lambda}}}(\sigma, t)}{\event} \geq \E_{\sigma \sim \dist}\cond*{U_{g_{b_{\lambda}}}(\sigma, t')}{\event}$, namely for a $\lambda$-biased agent, we have:
\begin{align*}
    \norm{\sigma_t}_1 - \lambda (\norm{s_t}_1 - \norm{\sigma_t}_1) \geq \E_{\sigma \sim \dist}\cond*{\|\sigma^{(t')}\|_1  - \lambda(\|s^{(t')}\|_1 - \|\sigma^{(t')} \|_1)}{\event}. \qquad (1)
\end{align*}

As an intermediate claim, We show that $
    \norm{s_t}_1 - \norm{\sigma_t}_1 \leq \E_{\sigma \sim \dist}\cond*{\norm{s^{(t')}}_1 - \norm{\sigma^{(t')}}_1}{\event}.$
First, let's consider the case where $\norm{\sigma_t}_1 \leq \E_{\sigma \sim \dist}\cond*{\norm{\sigma^{(t')}}_1}{\event}$. Due to the fact that the value of the super candidate is monotonically increasing, we have $\norm{s_t}_1 \leq \E_{\sigma \sim \dist}\cond*{\norm{s^{(t')}}_1}{\event}.$ Combining the two inequalities gives us the desired inequality. 

Next, we consider the case where $\norm{\sigma_t}_1 > \E_{\sigma \sim \dist}\cond*{\norm{\sigma^{(t')}}_1}{\event}$. Observe that $$\E_{\sigma \sim \dist}\cond*{\norm{s^{(t')}}_1}{\event} - \E_{\sigma \sim \dist}\cond*{\norm{\sigma^{(t')}]}_1}{\event} \geq  \norm{s_t}_1 - \E_{\sigma \sim \dist}\cond*{\norm{\sigma^{(t')}]}_1}{\event},$$ since $s_t \leq \E_{\sigma \sim \dist}\cond*{\|s^{(t')}\|_1}{\event}$. Combining this with the assumption that $\norm{\sigma_t}_1 > \E_{\sigma \sim \dist}\cond*{\norm{\sigma^{(t')}}_1}{\event}$, we get $\E_{\sigma \sim \dist}\cond*{\norm{s^{(t')}}_1}{\event} - \E_{\sigma \sim \dist}\cond*{\norm{\sigma^{(t')}]}_1}{\event} \geq \norm{s_t}_1 - \norm{\sigma_t}_1$, as desired. 

Having shown the intermediate claim, we can re-express it algebraically as follows:  
$$-(\lambda' - \lambda) \cdot (\norm{s_t}_1 - \norm{\sigma_t}_1) \geq -(\lambda' - \lambda) \cdot \E_{\sigma \sim \dist}\cond*{(\|s^{(t')}\|_1 - \|\sigma^{(t')}\|_1)}{\event}. \qquad (2)$$
Combining statements (1) and (2), we get the following:
\begin{align*}
    \norm{\sigma_t}_1 - \lambda' (\norm{s_t}_1 - \norm{\sigma_t}_1) \geq \E_{\sigma \sim \dist}\cond*{\|\sigma^{(t')}\|_1  - \lambda'(\|s^{(t')}\|_1 - \|\sigma^{(t')} \|_1)}{\event}. \qquad (3)
\end{align*}
By the definition of the biased agents utility, statement (3) is equivalent to the statement that $\E_{\sigma \sim \dist} \cond*{U_{g_{b_{\lambda'}}}(\sigma, t)}{\event} \geq \E_{\sigma \sim \dist}\cond*{U_{g_{b_{\lambda'}}}(\sigma, t')}{\event}$, namely that conditioned on event $\event$, a $\lambda'$-biased agent is at least as well off selecting $t$ as selecting $t'$ thus contradicting our initial assumption and showing no such $\sigma_1, \cdots, \sigma_t$ can exist. 
\end{proof}

\begin{corollary}
If $\lambda \leq \lambda'$, then, when both are playing optimally, the expected value of the candidate selected by a $\lambda$-biased agent is higher than the expected value of a candidate selected by a $\lambda'$-biased agent(i.e. $\E_{\sigma \sim \dist} \sq*{\pi_{\lambda}(\sigma)} \geq \E_{\sigma \sim \dist} \sq*{\pi_{\lambda'}(\sigma)}$). 
\end{corollary}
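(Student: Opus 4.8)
The plan is to obtain this corollary immediately by chaining the two monotonicity results that precede it, namely \Cref{monotonic_patience} and \Cref{patience_expectation}. The only content is to check that the hypotheses line up correctly.

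First I would invoke \Cref{monotonic_patience}: since $\lambda \leq \lambda'$, the optimal $\lambda$-biased stopping rule $\pi_\lambda$ is more patient than the optimal $\lambda'$-biased stopping rule $\pi_{\lambda'}$. This means that for every realization of the candidates, $\pi_\lambda$ either selects the same candidate as $\pi_{\lambda'}$ or selects one later in the sequence.

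Next I would apply \Cref{patience_expectation} with the role of ``$\pi$'' played by $\pi_{\lambda'}$. That claim states: if an optimal multidimensional $\lambda$-biased stopping rule is more patient than another multidimensional stopping rule $\pi$, then the expected value of the candidate selected by the former is at least that of the candidate selected by $\pi$. Since $\pi_\lambda$ is by definition an optimal $\lambda$-biased stopping rule, and by the previous step it is more patient than $\pi_{\lambda'}$, and $\pi_{\lambda'}$ is a (multidimensional) stopping rule, all hypotheses of \Cref{patience_expectation} are met. We conclude $\E_{\sigma \sim \dist}\sq*{\pi_\lambda(\sigma)} \geq \E_{\sigma \sim \dist}\sq*{\pi_{\lambda'}(\sigma)}$, which is exactly the corollary.

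There is essentially no obstacle here: the corollary is a direct composition, and the only thing to be careful about is that \Cref{patience_expectation} is stated for comparing an \emph{optimal} $\lambda$-biased rule against an arbitrary rule (not for comparing two arbitrary rules), which is precisely the shape we need since $\pi_\lambda$ is optimal. I would keep the write-up to two or three sentences, citing the two claims in order.
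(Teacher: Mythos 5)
Your proposal is correct and follows exactly the paper's own route: the paper states the corollary as a direct consequence of Claims~\ref{patience_expectation} and~\ref{monotonic_patience} taken together, and you have simply unpacked that chain with the proper checking of hypotheses (using $\pi_\lambda$ as the optimal biased rule and $\pi_{\lambda'}$ as the comparison rule). Nothing to change.
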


\begin{proof}
It follows as a direct consequence of claims  \ref{patience_expectation} and \ref{monotonic_patience} when they are taken together. 
\end{proof}
We now move on to consider monotonicity in the number of candidates.

\begin{claim}
\label{subset_patience}
Let $n, m$ be two integers such that $m > n$. Let $\dist_1, \cdots, \dist_{m}$ be $m$ value distributions for each of n+1 distinct candidates. Let $\dist = \times_{i \in [n]} \dist_i$ and $\dist' = \times_{i \in [m]} \dist_i$. We will use $\pi_\lambda$ and $\pi_\lambda'$ to denote the optimal $\lambda$-biased stopping rule for $\sigma \sim \dist$ and $\sigma' \sim \dist'$ respectively. We claim that $\pi_\lambda'$ is more patient than $\pi_\lambda$. 
\end{claim}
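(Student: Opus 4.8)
The plan is to argue by contradiction, mirroring \Cref{monotonic_patience}, and exploiting one elementary fact: appending candidates to the \emph{end} of the sequence cannot hurt the $\lambda$-biased gambler, because every continuation strategy available in the shorter problem is still available — with exactly the same value — in the longer problem. Suppose $\pi_\lambda'$ is not more patient than $\pi_\lambda$. Then on some realization of the candidate values $\pi_\lambda'$ selects a strictly earlier candidate than $\pi_\lambda$; let $t$ be the index $\pi_\lambda'$ selects there, so $\pi_\lambda$ reaches step $t$, declines it, and later halts at some index $>t$ (or declines all remaining candidates). On such a realization we must have $t\le n$: if $t>n$ then $\pi_\lambda$ necessarily halts by step $n<t$, so $\pi_\lambda'$ would be the more patient rule there. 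Fix the prefix $\sigma^{(1)}=\sigma_1,\dots,\sigma^{(t)}=\sigma_t$, let $A$ be the event that the first $t$ candidates equal $\sigma_1,\dots,\sigma_t$, and let $s^{(t)}$ be the super candidate at step $t$ (a deterministic function of $A$). Since whether a stopping rule halts at step $i$ depends only on $\sigma^{(1)},\dots,\sigma^{(i)}$, conditioned on $A$ both rules act deterministically through step $t$: $\pi_\lambda'$ halts at $t$, while $\pi_\lambda$ continues.

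The key step is to transfer $\pi_\lambda$'s continuation from the $\dist$ problem to the $\dist'$ problem. Conditioned on $A$, the behavior of $\pi_\lambda$ on steps $t+1,\dots,n$ is a (randomized) stopping rule $\rho$ mapping the realization of $\sigma^{(t+1)},\dots,\sigma^{(n)}$ to an index in $\{t+1,\dots,n\}$ or to ``decline all''. Because $\dist$ and $\dist'$ agree on their first $n$ coordinates, the $\lambda$-biased gambler facing $\dist'$ may run $\rho$ verbatim on steps $t+1,\dots,n$ and, on the ``decline all'' branch, decline steps $n+1,\dots,m$ as well. The crucial observation is that $U_{g_b}(\sigma,j)$ depends only on $\sigma^{(1)},\dots,\sigma^{(j)}$ — the penalty $\lambda(\norm{s^{(j)}}_1-\norm{\sigma^{(j)}}_1)$ involves only candidates seen by step $j$ — so halting at any $j\le n$ yields the same utility in both problems, and the ``decline all'' branch yields the same no-candidate outcome. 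Hence this strategy attains, conditioned on $A$, exactly the conditional expected utility $C$ that $\pi_\lambda$'s continuation attains in the $\dist$ problem, so the optimal continuation value $C'$ available to the $\lambda$-biased gambler at step $t$ given $A$ in the $\dist'$ problem satisfies $C'\ge C$. Moreover $C$ is itself the optimal continuation value for the $\dist$ problem given $A$: if some continuation beat it on the event $A$, then modifying $\pi_\lambda$ only on $A$ (and only after step $t$) would strictly increase $\E_{\dist}\sq*{U^*_{g_b}}$, contradicting optimality of $\pi_\lambda$.

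Finally I would combine the pieces as in \Cref{monotonic_patience}. Under the tie-breaking convention that renders $\pi_\lambda$ unique (break ties toward halting, as implicit in \Cref{monotonic_patience}), $\pi_\lambda$ continuing at step $t$ given $A$ means $U_{g_b}(\sigma,t)<C$ strictly (with $U_{g_b}(\sigma,t)$ constant on $A$), while $\pi_\lambda'$ halting at $t$ given $A$ means $U_{g_b}(\sigma,t)\ge C'$. Together with $C'\ge C$ this gives $U_{g_b}(\sigma,t)\ge C'\ge C>U_{g_b}(\sigma,t)$, a contradiction; hence $\pi_\lambda'$ is more patient than $\pi_\lambda$. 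I expect the only delicate point to be the transfer argument of the second paragraph — i.e.\ pinning down that the biased gambler's stopping utility, super-candidate penalty included, is a function of the observed prefix alone, so that never inspecting $\sigma^{(n+1)},\dots,\sigma^{(m)}$ costs nothing — whereas everything else is a routine (and slightly simpler, since $\lambda$ is fixed) reprise of \Cref{monotonic_patience}.
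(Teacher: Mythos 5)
Your proof is correct and rests on the same core idea as the paper's: any strategy for the $\dist$ problem can be mimicked in the $\dist'$ problem by running it on the first $n$ candidates and never stopping later, and because $U_{g_b}(\sigma,j)$ is a function of $\sigma^{(1)},\dots,\sigma^{(j)}$ alone this mimicry preserves utility, so $\dist'$ can only have a weakly more attractive continuation at any prefix. The difference is in how rigorously the optimality of the two rules is invoked. The paper's proof fixes a single realization $\sigma_1,\dots,\sigma_m$ and asserts $U_{g_{b_{\lambda}}}(\sigma,\pi_\lambda(\sigma)) > U_{g_{b_{\lambda}}}(\sigma,\pi_\lambda'(\sigma'))$ directly ``since $\pi_\lambda$ is optimal.'' Taken literally this is a per-realization comparison, which optimality (an in-expectation notion) does not license on its own: the optimal rule can be out-performed ex post on individual sample paths. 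Your version repairs exactly this by conditioning on the prefix event $A$ up to the contested step $t$, comparing the \emph{conditional expected} continuation values $C$ and $C'$, and then using the halt/continue optimality conditions at step $t$ (with a tiebreak toward halting to get strictness) to produce the contradiction $U_{g_b}(\sigma,t)\ge C'\ge C > U_{g_b}(\sigma,t)$. That conditioning step is the right way to turn ``$\pi_\lambda$ continues past $t$ despite $\pi_\lambda'$ halting'' into a quantitative contradiction, and it is the same move the paper itself uses in its proof of \Cref{monotonic_patience}, so your proof is both correct and a cleaner, more uniform treatment of the two patience claims. One small caveat worth flagging: on the ``decline-all'' branch the mimicked strategy in $\dist'$ will, after declining $t+1,\dots,n$, still observe candidates $n+1,\dots,m$, so if the model charges loss aversion against the final super candidate even when nothing is selected, the declined outcome is weakly \emph{worse} in $\dist'$ than in $\dist$, not equal as you assert. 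This is harmless — one can always replace ``decline all'' by ``accept $\sigma^{(n)}$,'' which dominates it, so the optimal continuation in $\dist$ never declines all — but it is worth noting, since the paper's own proof silently glosses over the same point.
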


\begin{proof}
    Assume for the sake of contradiction that there exist $\sigma_1, \cdots, \sigma_{m}$ where $\sigma = \sigma_1, \cdots, \sigma_{n}$ and $\sigma' = \sigma_1, \cdots, \sigma_{m}$ such that $\pi_{\lambda}(\sigma)$ is later in the sequence than $\pi_{\lambda}'(\sigma')$. Since, $\pi_{\lambda}$ is the optimal $\lambda$-biased stopping rule, this assumption implies that $$U_{g_{b_{\lambda}}}(\sigma, \pi_{\lambda}(\sigma)) > U_{g_{b_{\lambda}}}(\sigma, \pi_{\lambda}'(\sigma')) = U_{g_{b_{\lambda}}}(\sigma', \pi_{\lambda}'(\sigma')).$$ This is a contradiction since a stopping rule for $\sigma' \sim \dist'$ can get strictly better utility compared to $\pi_{\lambda}'$ by taking the same action as $\pi_{\lambda}$ and never selecting candidates with index greater than $n$.
\end{proof}

\begin{claim}
Let $n, m$ be two integers such that $m > n$. Let $\dist_1, \cdots, \dist_{m}$ be $m$ value distributions for each of $m$ distinct candidates. Let $\dist = \times_{i \in [n]} \dist_i$ and $\dist' = \times_{i \in [m]} \dist_i$. Consider $\sigma \sim \dist$ and $\sigma' \sim \dist'$ where $\sigma$ is a prefix of $\sigma'$. For any $\lambda \geq 0$ and optimal $\lambda$-biased stopping rule $\pi_{\lambda}$ and $\pi_{\lambda}'$ for $\sigma \sim \dist$ and $\sigma' \sim \dist'$ respectively, $U_{g_{b_{\lambda}}}(\sigma, \pi_{\lambda}(\sigma)) \leq U_{g_{b_{\lambda}}}(\sigma', \pi_{\lambda}'(\sigma'))$ and $\pi_{\lambda}(\sigma)\leq \pi_{\lambda}'(\sigma')$. Semantically, both the utility and expected value of the selected candidate improve monotonically as more candidates are added to the end of the sequence. 
\end{claim}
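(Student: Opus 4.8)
The plan is to derive both assertions from \Cref{subset_patience} together with one elementary observation about how the super candidate behaves under prefixes. First I would record that because $\sigma$ is a prefix of $\sigma'$, for every index $t\le n$ the coordinatewise running maxima coincide, $\vec{s}^{(t)}(\sigma)=\vec{s}^{(t)}(\sigma')$, since both are computed from exactly the same first $t$ candidates. Consequently $U_{g_{b_\lambda}}(\sigma,t)=U_{g_{b_\lambda}}(\sigma',t)$ for all $t\le n$: appending candidates to the tail changes neither the value nor the reference point, and hence neither the biased gambler's utility, at any of the original candidates.

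The index inequality $\pi_\lambda(\sigma)\le\pi_\lambda'(\sigma')$ is then immediate. \Cref{subset_patience} asserts that $\pi_\lambda'$ is more patient than $\pi_\lambda$, and by the definition of patience this says precisely that on every coupled realization the candidate chosen by $\pi_\lambda'$ on $\sigma'$ occurs no earlier in the sequence than the one chosen by $\pi_\lambda$ on $\sigma$ (adopting the convention that ``accept nothing'' is treated as an index exceeding $n$).

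For the utility inequality I would exhibit, on the longer instance, the stopping rule $\hat\pi$ that simulates $\pi_\lambda$ on the first $n$ candidates and never accepts a candidate of index greater than $n$. By the prefix observation above, on every realization $\hat\pi$ selects exactly the candidate that $\pi_\lambda$ selects on the corresponding prefix and obtains the same utility; hence $\hat\pi$ achieves expected utility $\E_{\dist}[U^*_{g_b}]$ against $\dist'$. Since $\pi_\lambda'$ is an optimal $\lambda$-biased stopping rule for $\dist'$, it does at least as well, giving $\E_{\dist'}[U^*_{g_b}]\ge\E_{\dist}[U^*_{g_b}]$, the claimed monotonicity of the biased gambler's utility. (When $\dist'$ is deterministic this is exactly the pointwise statement $U_{g_{b_\lambda}}(\sigma,\pi_\lambda(\sigma))\le U_{g_{b_\lambda}}(\sigma',\pi_\lambda'(\sigma'))$; the accompanying ``expected value of the selected candidate'' statement then follows by feeding the patience relation of \Cref{subset_patience} into the argument used in \Cref{patience_expectation}.)

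I do not anticipate a real obstacle: the only place demanding care is the bookkeeping around the coupling of $\sigma$ with $\sigma'$ and the interpretation of ``selecting nothing'' at the end of the horizon. Once $\hat\pi$ is defined so that it literally inherits the choices of $\pi_\lambda$ on the shared prefix, both halves of the statement fall out of \Cref{subset_patience} and the prefix-invariance of the super candidate with essentially no further computation.
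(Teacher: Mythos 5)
Your proposal is correct and matches the paper's own argument in all essentials: the utility inequality is obtained by exhibiting a stopping rule on the longer instance that simply imitates $\pi_\lambda$ on the shared prefix, and the index/expected-value monotonicity is read off from \Cref{subset_patience} combined with \Cref{patience_expectation}. You are somewhat more explicit than the paper in spelling out the prefix-invariance of the super candidate and in naming the simulating rule $\hat\pi$, but this is the same proof.
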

\begin{proof}
    Since $\sigma$ is a prefix of $\sigma'$, an agent can simply select candidate $\pi_{\lambda}(\sigma)$ and receive at least as much utility on $\sigma'$ as on $\sigma$, thus $U_{g_{b_{\lambda'}}}(\sigma, \pi_{\lambda}(\sigma)) \leq U_{g_{b_{\lambda'}}}(\sigma', \pi_{\lambda}'(\sigma'))$. By claim \ref{subset_patience}, the agent is more patient on $\sigma'$ than on $\sigma$; combining this with claim \ref{patience_expectation}, it must be the case that the expected value of the candidate selected on $\sigma'$ is greater than or equal to the expected value of that selected on $\sigma$. 
\end{proof}

Following the outline of \citet{kleinberg2021optimal}, we now consider adding a candidate at the beginning of the sequence. Such addition can decrease the expected utility of a $\lambda$-biased agent with comparative loss aversion, whereas it cannot decrease that of a rational agent.

\begin{claim}
Let $\dist_1, \cdots, \dist_{n+1}$ be $n+1$ value distributions for each of $n+1$ distinct candidates. Let $\dist = \times_{i =2}^{n+1} \dist_i$ and $\dist' = \times_{i \in [n+1]} \dist_i$. For any $\lambda \geq 0$ and optimal $\lambda$-biased stopping rule $\pi_{\lambda}$ and $\pi_{\lambda}'$ for $\dist$ and $\dist'$ respectively, consider sequences $\sigma \sim \dist$ and $\sigma' \sim \dist'$, where $\sigma' = \sigma^{(0)'} + \sigma$. (We can view $\sigma'$ as the result of appending a candidate, $\sigma^{(0)'}$, to the beginning of $\sigma$.) We have $U_{g_b}(\sigma', \pi_{\lambda}'(\sigma')) \geq \frac{1}{1 + \lambda} U_{g_b}(\sigma, \pi_{\lambda}(\sigma))$.
\end{claim}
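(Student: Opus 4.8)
The plan is to compare, on the longer sequence $\sigma'$, the utility obtained by a specific (suboptimal) stopping rule against the optimal rule, and then to relate that specific rule's performance back to $U_{g_b}(\sigma, \pi_\lambda(\sigma))$. Concretely, the first step is to consider the stopping rule $\pi'$ on $\sigma'$ that simply ignores the first candidate $\sigma^{(0)'}$ and mimics $\pi_\lambda$ on the remaining suffix, selecting candidate index $\pi_\lambda(\sigma)$. Since $\pi_\lambda'$ is optimal for the $\lambda$-biased agent on $\sigma'$, we have $U_{g_b}(\sigma', \pi_\lambda'(\sigma')) \geq U_{g_b}(\sigma', \pi'(\sigma'))$, so it suffices to lower-bound $U_{g_b}(\sigma', \pi'(\sigma'))$ by $\frac{1}{1+\lambda} U_{g_b}(\sigma, \pi_\lambda(\sigma))$.

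The second step is the core estimate: fix $t = \pi_\lambda(\sigma)$ (the index, counting within $\sigma$), and let $\vec{s}^{(t)}$ be the super candidate on $\sigma$ at step $t$, while $\vec{s}'$ is the super candidate on $\sigma'$ at the corresponding step (which also includes $\sigma^{(0)'}$). Then $\vec{s}'_j = \max(\vec{s}^{(t)}_j, \sigma^{(0)'}_j)$ for each dimension $j$, so $\norm{\vec{s}'}_1 - \norm{\vec{s}^{(t)}}_1 \leq \norm{\sigma^{(0)'}}_1$. Writing both utilities in the form $(1+\lambda)\norm{\sigma^{(t)}}_1 - \lambda\norm{\vec{s}}_1$ (as in the proof of \Cref{claim:utilityBound}), we get $U_{g_b}(\sigma', \pi'(\sigma')) \geq U_{g_b}(\sigma, \pi_\lambda(\sigma)) - \lambda\norm{\sigma^{(0)'}}_1$. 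This is not yet enough; the remaining idea is that the agent playing on $\sigma'$ has another option — selecting $\sigma^{(0)'}$ itself at the very first step, incurring zero regret (the super candidate equals $\sigma^{(0)'}$), yielding utility exactly $\norm{\sigma^{(0)'}}_1$. Hence $U_{g_b}(\sigma', \pi_\lambda'(\sigma')) \geq \norm{\sigma^{(0)'}}_1$ as well.

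The third step combines the two lower bounds on $U_{g_b}(\sigma', \pi_\lambda'(\sigma'))$: it is at least $\max\{ U_{g_b}(\sigma,\pi_\lambda(\sigma)) - \lambda\norm{\sigma^{(0)'}}_1,\ \norm{\sigma^{(0)'}}_1 \}$. Taking a convex combination — weight $\frac{1}{1+\lambda}$ on the first and $\frac{\lambda}{1+\lambda}$ on the second — the $\norm{\sigma^{(0)'}}_1$ terms cancel and we obtain $U_{g_b}(\sigma', \pi_\lambda'(\sigma')) \geq \frac{1}{1+\lambda} U_{g_b}(\sigma, \pi_\lambda(\sigma))$, as desired. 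One subtlety to handle carefully: the suffix super-candidate argument presumes that on $\sigma'$ the agent uses $\pi_\lambda$'s decisions verbatim on the suffix, which is legitimate because $\pi_\lambda$'s behavior depends only on the realized suffix values, and those are identical; the only change is the extra coordinate-wise maxima from $\sigma^{(0)'}$, already accounted for. The main obstacle is making the "two competing bounds, then convex-combine" step airtight — in particular confirming that $U_{g_b}(\sigma,\pi_\lambda(\sigma))$ itself can be negative, in which case the bound is trivial, and otherwise the convex combination is the right move; I expect this to go through cleanly once the super-candidate monotonicity (already established in \Cref{app:monotonicity}) is invoked.
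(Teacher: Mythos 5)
Your proof is correct and follows essentially the same route as the paper's: lower-bound the optimal utility on $\sigma'$ by the two candidate strategies (accept $\sigma^{(0)'}$ immediately, or ignore it and mimic $\pi_\lambda$), bound the increase in the super candidate by $\norm{\sigma^{(0)'}}_1$, and combine the resulting max of two terms. The only cosmetic difference is the final step, where you take a convex combination with weights $\frac{1}{1+\lambda}$ and $\frac{\lambda}{1+\lambda}$ so the $\norm{\sigma^{(0)'}}_1$ terms cancel, while the paper performs an equivalent two-case split on whether $\norm{\sigma^{(0)'}}_1 \gtrless \frac{1}{1+\lambda}U_{g_b}(\sigma, \pi_\lambda(\sigma))$; both are valid and yield the same inequality. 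One small remark: your worry about $U_{g_b}(\sigma,\pi_\lambda(\sigma))$ being negative is moot, since selecting the very first candidate of $\sigma$ always yields nonnegative utility (the reference point equals the candidate), so the optimal biased utility is always $\geq 0$.
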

\begin{proof}
We lower bound $U_{g_b}(\sigma', \pi_{\lambda}(\sigma'))$ by considering the utilities for two possible stopping rules:
\begin{enumerate}
    \item Accept the first candidate in $\sigma'$, namely $\sigma^{(0)'}$
    \item Ignore the first candidate, $\sigma^{(0)'}$ and instead accept the same candidate selected on $\sigma$, namely candidate $\pi_{\lambda}(\sigma)$. 
\end{enumerate}
In the first case, since the first candidate seen is selected, there is no regret and the utility of the gambler is $\|\sigma^{(0)'} \|_1$. In the second case, since the same candidate is chosen as under $\sigma$ and the super candidate can increase by at most the value of the candidate appended to the beginning, the utility is lower bounded by 
$U_{g_b}(\sigma, \pi_{\lambda}(\sigma)) - \lambda \|\sigma^{(0)'}\|_1$. Since $\pi_{\lambda}$ is the utility maximizing strategy we thus have:
$$U_{g_b}(\sigma', \pi_{\lambda}(\sigma')) \geq max \{U_{g_b}(\sigma, \pi_{\lambda}(\sigma)) - \lambda \|\sigma^{(0)'}\|_1,  \|\sigma^{(0)'} \| \} \qquad (1)$$

Consider the case where $U_{g_b}(\sigma, \pi_{\lambda}(\sigma)) - \lambda \|\sigma^{(0)'} \|_1 \leq  \|\sigma^{(0)'}\|_1 $, which algebraically re-expressed is $\|\sigma^{(0)'}\|_1 \geq \frac{U_{g_b}(\sigma, \pi_{\lambda}(\sigma))}{1 + \lambda}$. Combining this with inequality $(1)$ yields $U_{g_b}(\sigma', \pi_{\lambda}(\sigma')) \geq \frac{U_{g_b}(\sigma, \pi_{\lambda}(\sigma))}{1 + \lambda}$. Now consider the case where $U_{g_b}(\sigma, \pi_{\lambda}(\sigma)) - \lambda \|\sigma^{(0)'}\|_1 >  \|\sigma^{(0)'}\|_1 $, algebraically re-expressed this is the case where  $\|\sigma^{(0)'}\|_1 < \frac{U_{g_b}(\sigma, \pi_{\lambda}(\sigma))}{1 + \lambda}$. Here we get:
$$U_{g_b}(\sigma', \pi_{\lambda}(\sigma')) \geq U_{g_b}(\sigma, \pi_{\lambda}(\sigma)) - \lambda \|\sigma^{(0)'}\|_1 \geq U_{g_b}(\sigma, \pi_{\lambda}(\sigma)) - \lambda \frac{U_{g_b}(\sigma, \pi_{\lambda}(\sigma))}{1 + \lambda} = \frac{U_{g_b}(\sigma, \pi_{\lambda}(\sigma))}{1 + \lambda} $$
Thus, we have proven the first the claim, namely that $U_{g_b}(\sigma', \pi_{\lambda}'(\sigma')) \geq \frac{U_{g_b}(\sigma, \pi_{\lambda}(\sigma))}{1 + \lambda}$. 
\end{proof}

\end{document}